\keywords{constructive mathematics; pointfree topology;
 continuity principle; spatiality; choice sequences; Baire space; real numbers}
\tikzset{close/.style={near start,outer sep=-10pt}}
\newcommand{\Bin}{\left\{ 0,1 \right\}}
\newcommand{\Pow}[1]{\mathcal{P}(#1)}
\DeclareMathOperator{\amp}{\,\&\,}
\DeclareMathOperator{\imp}{\,\rightarrow\,}
\DeclareMathOperator{\defeqiv}{\stackrel{\textup{def}}{\iff}}
\DeclareMathOperator{\defeql}{\stackrel{\textup{def}}{\  =\  }}
\DeclareMathOperator{\elm}{\epsilon}
\DeclareMathOperator{\backelm}{\text{\reflectbox{$\epsilon$}}}
\newcommand{\dotminus}{\mathbin{\ooalign{\hss\raise.5ex\hbox{$\cdot$}\hss\crcr$-$}}}
\DeclareMathOperator{\cov}{\mathbin{\lhd}}
\DeclareMathOperator{\bcov}{\mathrel{\blacktriangleleft}}
\DeclareMathOperator{\pos}{\ltimes}
\DeclareMathOperator{\meets}{\between}
\newcommand{\sat}{\mathcal{A}}
\newcommand{\ext}{\mathsf{ext}\,}
\newcommand{\rest}{\mathsf{rest}\,}
\newcommand{\Ext}{\mathsf{Ext}}
\newcommand{\downset}{\mathord{\downarrow}}
\newcommand{\Image}[1]{{#1}_{*}}
\newcommand{\im}{\mathrm{Im}}
\newcommand{\Nat}{\mathbb{N}}
\newcommand{\FSeq}{\Nat^{*}}
\newcommand{\BSeq}{\Bin^{*}}
\newcommand{\FBaire}{\mathcal{B}}
\newcommand{\FReal}{\mathcal{R}}
\newcommand{\FUInt}{\mathcal{I}[0,1]}
\newcommand{\UInt}{\left[ 0,1 \right]}
\newcommand{\Discrete}[1]{\mathcal{D}#1}
\newcommand{\FNat}{\mathcal{N}}
\newcommand{\nil}{\mathbf{\mathit{nil}}}
\newcommand{\cons}[2]{{#1} * {#2}}
\newcommand{\lh}[1]{\lvert #1 \rvert}
\newcommand{\head}[1]{\mathsf{head}(#1)}
\newcommand{\tail}[1]{\mathsf{last}(#1)}
\newcommand{\Cont}{\mathrm{PoPC}}
\newcommand{\ContUInt}{\mathrm{PoPC}_{\FUInt}}
\newcommand{\ContBN}{\mathrm{PoPC}_{\FBaire,\FNat}}
\newcommand{\ContBND}{\mathrm{D}\text{-}\ContBN}
\newcommand{\ContSigmaOne}{\Sigma^{0}_{1}\text{-}\ContBN}
\newcommand{\ContPiOne}{\Pi^{0}_{1}\text{-}\ContBN}
\newcommand{\LPO}{\mathrm{LPO}}
\newcommand{\LLPO}{\mathrm{LLPO}}
\newcommand{\Pt}[1]{\mathit{IP}t(#1)}
\newcommand{\Ip}{\mathcal{I}p}
\newcommand{\Fs}{\mathrm{Fs}}
\newcommand{\BISigmaOne}{\mathrm{\Sigma^{0}_{1}\text{-}\mBI}}
\newcommand{\BIPiOne}{\mathrm{\Pi^{0}_{1}}\text{-}\mBI}
\newcommand{\BI}{\mathrm{BI}}
\newcommand{\mBI}{\BI_{\mathbf{M}}}
\newcommand{\dBI}{\BI_{\mathbf{D}}}
\DeclareMathOperator{\id}{\mathrm{id}}
\DeclareMathOperator{\dom}{\mathrm{dom}}
\newcommand{\CS}{{\textrm{\textup{CS}}}}
\newcommand{\CSpa}{\textbf{\textrm{\textup{CSpa}}}}
\newcommand{\PTop}{\textbf{\textrm{\textup{PTop}}}}
\theoremstyle{plain}
\newtheorem{theorem}{Theorem}[section]
\newtheorem{proposition}[theorem]{Proposition}
\newtheorem{lemma}[theorem]{Lemma}
\newtheorem{corollary}[theorem]{Corollary}
\theoremstyle{definition}
\newtheorem{definition}[theorem]{Definition}
\newtheorem{remark}[theorem]{Remark}
\newtheorem{notation}[theorem]{Notation}
\newtheorem{example}[theorem]{Example}
\numberwithin{equation}{section}
\begin{document}
\title[The principle of pointfree continuity]{The principle of pointfree continuity}

\author[T. Kawai]{Kawai Tatsuji}       
\address{Japan Advanced Institute of Science and Technology
1-1 Asahidai, Nomi, Ishikawa 923-1292, Japan}     
\email{tatsuji.kawai@jaist.ac.jp}  

\author[G. Sambin]{Giovanni Sambin}     
\address{Dipartimento di Matematica, Universit\`{a} di Padova, via Trieste 63, 35121 (PD) Italy}      
\email{sambin@math.unipd.it}  


\begin{abstract}
  In the setting of constructive pointfree topology, we introduce a
  notion of continuous operation between pointfree topologies and the
  corresponding principle of pointfree continuity. An operation
  between points of pointfree topologies is continuous if it is
  induced by a relation between the bases of the topologies;
  this gives a rigorous condition for Brouwer's continuity principle to hold.
  The principle of pointfree continuity for pointfree topologies
  $\mathcal{S}$ and $\mathcal{T}$ says that any relation which
  induces a continuous operation between points is a morphism from
  $\mathcal{S}$ to $\mathcal{T}$.
  The principle holds under the assumption of bi-spatiality of
  $\mathcal{S}$.  When $\mathcal{S}$ is the formal Baire space or the
  formal unit interval and $\mathcal{T}$ is the formal topology of natural
  numbers, the principle is equivalent to spatiality of the
  formal Baire space and formal unit interval, respectively.
  Some of the well-known connections between spatiality, bar
  induction, and compactness of the unit interval are recast
  in terms of our principle of continuity.

  We adopt the Minimalist Foundation as our constructive foundation,
  and positive topology as the notion of pointfree topology.
  This allows us to distinguish ideal objects from constructive
  ones, and in particular, to interpret choice sequences as points of
  the formal Baire space.
\end{abstract}
\maketitle

\section{Introduction}\label{sec:Introduction}
In a number of writings, Brouwer analysed functions from
choice sequences to the natural numbers, and claimed that every total
function on choice sequences is continuous in a very strong sense (see
e.g.\ Brouwer \cite{BrouwerDomainsofFunctions}). 
As a corollary, he obtained the continuity theorem of real numbers: every total function on the unit interval is uniformly
continuous.
In more recent accounts of Brouwer's intuitionism (e.g.\ Kleene and
Vesley~\cite{KleeneVesley}), it is common to decompose Brouwer's
analysis into two principles: continuity principle and bar
induction.\footnote{There are many variations of these principles,
depending on the strength of continuity and complexity of bars. The
versions of continuity principle and bar induction that we present
here are deliberately chosen to suite our aims.} Continuity principle
says that every total function from choice sequences to the natural
numbers is pointwise continuous, while bar induction says that every
monotone bar is an inductive bar (cf.\ Troelstra and van Dalen
\cite[Chapter 4]{ConstMathI}).

The aim of this paper is to show that continuity principle can be maintained if
we take seriously the constructive and pointfree approach to topology
\cite{Sambin:intuitionistic_formal_space,Sambin_BP_book}, while bar
induction is the principle which is exactly needed to bridge the gap
between pointwise continuity and pointfree continuity.

As to the first claim, in the constructive pointfree topology, a
point appears as a set of its formal neighbourhoods.
Therefore, it is reasonable
to assume that every constructively defined operation on the
collection of points must be induced by a relation between the formal
neighbourhoods of the relevant pointfree topologies.
Actually, one of
the main points of this paper is that every pointwise continuous operation
between points of pointfree topologies is induced
by a relation between formal neighbourhoods, and conversely if a
relation between formal neighbourhoods gives rise to a total
operation between points, then such operation is
automatically pointwise continuous.  Hence, we obtain continuity for free if we
start from a relation between formal neighbourhoods; see Theorem~\ref{thm:PWCont}.

Given two pointfree topologies $\mathcal{S}$ and $\mathcal{T}$, 
it is then natural to ask whether a relation
from formal neighbourhoods of $\mathcal{S}$
to those of $\mathcal{T}$ which gives rise to a total operation
between points is actually a pointfree 
map from $\mathcal{S}$ to $\mathcal{T}$.
This is not necessarily the case unless the topology on points of
$\mathcal{S}$ coincides with $\mathcal{S}$. Hence, we formulate this
transition to a pointfree map as a
principle, called \emph{the principle of pointfree continuity}
for $\mathcal{S}$ and $\mathcal{T}$
($\Cont_{\mathcal{S},\mathcal{T}}$); see Definition \ref{def:Cont} for the precise formulation.
We discuss some sufficient conditions on $\mathcal{S}$ and
$\mathcal{T}$ under which $\Cont_{\mathcal{S},\mathcal{T}}$ holds in
Section~\ref{sec:Cont}.

The above claim and the principle of pointfree continuity
make sense for arbitrary pointfree topologies --- not just for the Baire
space and the natural numbers for which continuity principle and bar
induction are usually formulated.
However, the principle $\ContBN$
for the formal (i.e.\ pointfree) Baire space $\FBaire$ and the formal
topology of natural numbers $\FNat$ is equivalent to 
 monotone bar induction,
and hence to spatiality of the formal Baire space.
Moreover, some well-known notions such as neighbourhood function and
decidable bar induction \cite[Chapter 4]{ConstMathI} naturally
arise when we consider a variant of $\ContBN$ formulated with
respect to a restricted class of relations
from the formal neighbourhoods of $\FBaire$ to $\FNat$; see
Section~\ref{sec:ContBaire}.
Furthermore, the principle $\Cont_{\FUInt, \FNat}$ for the formal unit
interval $\FUInt$ and the formal topology of natural numbers $\FNat$  is shown to
be equivalent to Heine--Borel covering theorem, and thus to spatiality
of the formal topology of real numbers; see Section~\ref{sec:ContReal}.

\subsection*{Remark on foundations}
In this paper, we adopt the Minimalist
Foundation~\cite{TowardMinimalist,MaiettiMinimalist} as our constructive foundation and positive
topology \cite{Sambin_BP_book} as the notion of pointfree topology
(see Section~\ref{sec:PTop}). 
For the details of the Minimalist Foundation, we refer the reader to
\cite{TowardMinimalist, MaiettiMinimalist} 
or the first two chapters
of the forthcoming book by the second author~\cite{Sambin_BP_book}.
It should be noted, however, that the results in
Section~\ref{sec:ContBaire} and Section~\ref{sec:ContReal} do not make
essential use of the structure of positive topology, and thus they can be
understood in the setting of formal topology~\cite{Sambin:intuitionistic_formal_space}.
 
In what follows, we elaborate on why working in the Minimalist
Foundation could be relevant to this work.
The Minimalist Foundation distinguishes collections from sets and
logic from type theory. In particular, points of a positive
topology usually form a collection rather than a set, and hence we can
regard a point of a positive topology as an ideal object.
For example, points of the formal Baire space, which are equivalent to
functions between the natural numbers, form a collection rather than a
set. This is due to the separation of logic from type theory, which
allows us to keep the logical notion of function distinct from the type
theoretic notion of constructive operation. Because of this, we can view
a point of the formal Baire space as a choice sequence which is not
necessarily lawlike (see also the discussion following Proposition
\ref{col:CSasIPt}). 

It is clear from Brouwer's writings that not only the notion of choice
sequence but also that of lawlike operation (which he called
algorithm in \cite{BrouwerDomainsofFunctions}) plays a crucial role in
his analysis of continuity on the Baire space. Hence, any satisfactory
account of Brouwer's intuitionism requires both the notion of choice sequence
and that of lawlike sequence which are kept separate.
The Minimalist Foundation can serve as a practical foundation of
intuitionism where we can talk about choice sequences as a
``figure of speech''. For example, it might be possible to  postulate
bar induction on choice sequences
while maintaining the view that every type
theoretic operation is lawlike.\footnote{Another candidate for a
practical foundation of Brouwer's intuitionism is the theory of
choice sequences $\CS$ by Kreisel and
Troelstra \cite{KreiselTroelstra}. It should be noted, however, that
the treatment of choice sequences in the Minimalist Foundation is
quite different from the analytic approach of $\CS$ where conceptual
analysis of an individual choice sequence plays a central role. But
the treatment in the Minimalist Foundation  may be more coherent and easier to understand, since it
arises naturally from the fundamental distinction between collections
and sets, and logic and type theory.}

\begin{notation}
We adopt notations which illustrate distinctions between sets,
collections, and propositions in the Minimalist Foundation.
If $a$ is an element of a set $S$, we write $a \in S$, and if $a$ is an
element of a collection $\mathcal{C}$, we write $a \colon \mathcal{C}$.
A subset $U(x)$ of a set $S$ (written $U \subseteq S$) is a
propositional function on $S$ with at most one variable $x \in S$. If
$a \in S$ is an element of a subset $U \subseteq S$, that is $U(a)$ is
true, then we write $a \elm U$.
Two subsets are said to be equal if they have the same elements.

The collection of subsets of a set $S$ (i.e.\ the power of a set) is
denoted by $\Pow{S}$. Note that $\Pow{S}$ is not a set except when $S$
is empty.  We say ``\emph{impredicatively}'' to 
mean that we temporarily assume that
the power of a set is a set.

A relation from $X$ to $S$ is a propositional function with two
arguments, one in $X$ and one in $S$.  Equivalently, a relation from
$X$ to $S$ is a subset of the
cartesian product $X \times S$. 
Every relation $s \subseteq X \times S$ determines the image operation 
$\Image{s} \colon \Pow{X} \to \Pow{S}$ defined by
\[
  \Image{s}(D) \defeql \left\{ a \in S \mid \left( \exists x \elm D
    \right) x \mathrel{s} a \right\}
\]
for each subset $D \subseteq X$. We usually write 
$\Image{s}D$ for $\Image{s}(D)$ and $\Image{s} x$ for $\Image{s}
\left\{ x \right\}$.
Following the usual mathematical convention, we drop the
subscript ``${}_{*}$'' and simply write $sD$  and $sx$ whenever doing this
does not cause confusion. The inverse image operation $\Image{s^{-}}
\colon \Pow{S} \to \Pow{X}$ (often written simply $s^{-}$) is just the image operation of the
opposite relation $s^{-} \subseteq S \times X$ of $s$.

Lastly, since the logic of the Minimalist Foundation is intuitionistic, we distinguish 
between
inhabited subsets
and non-empty subsets. To this end, it is
convenient to use the following notation:
\[
  U \meets V \defeqiv \left( \exists a \in S \right) a \elm U \cap V.
\]
\end{notation}

\section{Continuity principle for positive topologies}\label{sec:ContPtop}
The Minimalist Foundation, in particular its distinction
between sets and collections, leads us to introduce 
two different notions of topology which replace the classical notion
of topological space.
When points form a set, we reach the notion of concrete space.
When points form a collection, it is more appropriate
to understand them as ideal points of a pointfree structure, which we
call positive topology.

We briefly review these notions in the next two subsections,  
and refer the reader to \cite{Sambin:some_points_in_FTop} or the forthcoming book
\cite{Sambin_BP_book} for further details.

\subsection{Concrete spaces}\label{sec:CSpa}
The first is the pointwise notion of topological space.
\begin{definition}\label{def:ConcreteSpace}
  A \emph{concrete space} is a triple $\mathcal{X} = (X, \Vdash, S)$ where
  $X$ and $S$ are sets and $\Vdash$ is a relation from $X$ to
  $S$ satisfying
  \begin{enumerate}[(B1)]
    \item $\ext a \cap \ext b = \ext(a \downarrow b)$
    \item $X = \ext S$
  \end{enumerate}
  for all $a,b \in S$, where
  \begin{align*}
  \ext a &\defeql \Vdash^{-}a  \\
  \ext U &\defeql \Vdash^{-}U = \bigcup_{a \elm U} \ext a\\
    a \downarrow b &\defeql 
    \left\{ c \in S \mid c \cov_{\mathcal{X}} \{a\} \amp c
    \cov_{\mathcal{X}} \{b\}\right\}\\
    a \cov_{\mathcal{X}} \{b\}&\defeqiv \ext a \subseteq \ext b
  \end{align*}
    for all $a,b \in S$ and $U \subseteq S$.
    The notation $\downarrow$ is extended to subsets by 
    \[
      U \downarrow V \defeql \bigcup_{a \elm U, b \elm
      V} a \downarrow b
    \]
    for all $U,V \subseteq S$.%
    \footnote{The notations $\cov_{\mathcal{X}}$ and $\downarrow$ are
    instances of the same notations that appear in Definition~\ref{def:PositiveTop}. This will become clear when we define  
    the notion of representable topology (cf.\
    \eqref{eq:Representable}).}
\end{definition}
Conditions (B1) and (B2) say that the subsets of the form $\ext a$
constitute a base for a topology on $X$. Thus, a concrete space is a
set equipped with an explicit set-indexed base.

\begin{definition}
  Let $\mathcal{X} = (X, \Vdash, S)$ and $\mathcal{Y} = (Y, \Vdash',
  T)$ be concrete spaces. 
  A \emph{relation pair} $(r,s)$ from $\mathcal{X}$ to $\mathcal{Y}$ is a
  pair of relations $r \subseteq X \times Y$ and $s \subseteq S \times T$ such that
  \[
    \Vdash' \mathop{\circ} r \mathrel{=}  s \mathop{\circ} \Vdash,
  \]
  where $\circ$ is composition of relations.
  A relation pair $(r,s) \colon \mathcal{X} \to \mathcal{Y}$
  is said to be \emph{convergent} if
  \begin{enumerate}[(C1)]
    \item $\ext\left(s^{-} a \downarrow
      s^{-} b \right) = \ext \, s^{-}(a \downarrow' b)$
    \item $\ext\, S  = \ext\, s^{-}T$
  \end{enumerate}
  for all $a,b \in T$. 
  Two convergent relation pairs $(r,s), (r',s')
  \colon \mathcal{X} \to \mathcal{Y}$ are defined to be \emph{equal} if
  \begin{equation}\label{eq:EqualityCSpa}
    \Vdash' \mathop{\circ} r \mathop{=}  \Vdash' \mathop{\circ} r'.
  \end{equation}
\end{definition}
The collections of concrete spaces and convergent relation pairs form a
category $\CSpa$. The identity on a concrete space is 
the pair of identity relations. Composition of two
convergent relation pairs is the coordinate-wise
composition of relations. It is easy to see that 
composition respects equality on relation pairs. 
By exploiting the way in which morphisms are
defined in $\CSpa$, one can prove that $\CSpa$ is impredicatively
equivalent to the category of \emph{weakly sober} topological spaces and continuous
functions.\footnote{The following remark assumes that the reader 
  is familiar with locale theory \cite{johnstone-82} and has read
  Section 2.2 and 2.3 of this paper. A concrete space $\mathcal{X}$  is
  \emph{weakly sober} if for every ideal point $\alpha$ of
  $\mathcal{S}_{\mathcal{X}}$, there exists a unique $x \in X$
  such that $\alpha = \Diamond x$. 
  Classically, weak sobriety is equivalent to 
  sobriety, but intuitionistically it is strictly weaker. This is because the
  ideal points correspond to a certain subclass of completely prime
  filters of opens of $\mathcal{X}$;
  see Aczel and Fox \cite{AczelFoxSeparation} (n.b.\ ideal points are called
  strong ideal points in \cite{AczelFoxSeparation}).

  One can show that the category of
  weakly sober concrete spaces and \emph{continuous functions} is
  impredicatively equivalent to $\CSpa$. Indeed, the former category
  can be easily embedded into $\CSpa$. On the other hand, every concrete
  space $\mathcal{X}$ is impredicatively isomorphic to
  $\Ip(\mathcal{S}_{\mathcal{X}})$, which is weakly sober.
  Then, by defining the notion of weak sobriety for topological spaces
  in terms of completely prime filters that
  correspond to ideal points, the claimed equivalence is obtained. A
  detailed proof appears in~\cite{Sambin_BP_book}.}

\subsection{Positive topologies}\label{sec:PTop}
The pointfree notion of topology arises from abstraction of the
structure induced on the base (or its index set thereof) of a concrete
space; see \eqref{eq:Representable} below.
\begin{definition}\label{def:PositiveTop}
  A \emph{positive topology} is a triple $(S, \cov,
  \pos)$ where $S$ is a set, called the \emph{base}, and $\cov$ 
  and $\pos$ are relations from $S$ to $\Pow{S}$ such that
  \begin{gather*} \frac{a \elm U}
    {a \cov U}\,
    (\text{reflexivity})
    \quad
    \frac{a \cov U \quad U \cov V}
    {a \cov V}\,
    (\text{transitivity})
    \quad
    \frac{a \cov U \quad a \cov V}
    {a \cov U \downarrow V}\,
    (\downset\text{-right})\\[.5em]
    \frac{a \pos U}
    {a \elm U}\,
    (\text{coreflexivity})
    \quad
    \frac{a \pos U \quad
    \left( \forall b \in S \right) \left(  b \pos U \imp b \elm V\right)} {a \pos V}\,
    (\text{cotransitivity})\\
    \frac{a \cov U \quad  a \pos V}
    {U \pos V}\,
    (\text{compatibility})
  \end{gather*}
for all $a \in S$ and $U,V \subseteq S$, where
  \begin{align*}
    U \downarrow V &\defeql \bigcup_{a \elm U, b \elm V} a \downarrow b \\
    a \downarrow b &\defeql 
    \left\{ c \in S \mid  c \cov \left\{a \right\} \amp c \cov \left\{ b\right\} \right\} \\
    U \cov V &\defeqiv \left( \forall a \elm U \right) a \cov V \\
    U \pos V &\defeqiv \left( \exists a \elm U \right) a \pos V. \footnotemark   
  \end{align*}%
  \footnotetext{Note that $U \cov V$ is universal in character and 
    $U \pos V$ existential, which might be confusing.
    But these notations are quite useful when we define
    the notion of formal maps and 
    their images (cf.\ Definition~\ref{def:FormalMap} and 
    \eqref{eq:Image}).
    }
  A relation $\cov$ which satisfies (reflexivity), (transitivity), and
  ($\downset$-right) is called a \emph{cover} on the set $S$, and 
  a relation $\pos$ which satisfies (coreflexivity) and (cotransitivity)
  is called a \emph{positivity} on $S$. Thus, a positive topology
  is a set equipped with a compatible pair of cover and positivity.
\end{definition}

\begin{remark}
From Section~\ref{sec:ContBaire} on, we
will deal only with positive topologies in which the cover is generated by induction from some axioms.
In this case,  a positivity compatible with the cover can be generated by coinduction from the same axioms,
and it becomes the greatest positivity compatible with the
cover.
However, this does not mean that the notion of positivity is redundant because
\begin{enumerate}
  \item a positivity need not always be the greatest compatible one with
    the given cover;
  \item when no information on a cover is available about its generation, 
  one can only define the greatest positivity compatible with the cover
  through an impredicative definition (cf.\ \eqref{imprdefJ}).
\end{enumerate}
\end{remark}

\begin{notation}
  We often use letters $\mathcal{S}$ and $\mathcal{T}$ 
  to denote positive topologies of the forms
   $(S,\cov_{\mathcal{S}},\pos_{\mathcal{S}})$
  and $(T,\cov_{\mathcal{T}},\pos_{\mathcal{T}})$ respectively. 
  The subscripts attached to $\cov$ and $\pos$ are
  omitted when they are clear from the context.
\end{notation}
In the same way the definition of positive topology is obtained from a
concrete space, the definition of morphism between positive topologies
is obtained by abstraction of the properties of the right side of
a convergent relation pair. 
\begin{definition}\label{def:FormalMap}
  Let $\mathcal{S}$ and $\mathcal{T}$ be positive topologies.  A
  \emph{formal map} from $\mathcal{S}$
  to $\mathcal{T}$ is a
  relation $s \subseteq S \times T$ such that
  \begin{enumerate}[(FM1)]
  \item\label{def:FormalMap1} $S \cov_{\mathcal{S}} s^{-}T$

  \item\label{def:FormalMap2} $s^{-}b \downarrow s^{-}c \cov_{\mathcal{S}} s^{-}(b
    \downarrow c)$

  \item\label{def:FormalMap3} $b \cov_{\mathcal{T}} V \imp s^{-}b
    \cov_{\mathcal{S}} s^{-}V$

  \item\label{def:FormalMap4} $s^{-}b \pos_{\mathcal{S}} s^{*}V \imp b
    \pos_{\mathcal{T}} V$
\end{enumerate}
for all $b, c \in T$ and $V \subseteq T$,
where 
  $s^{*}V \defeql \left\{ a \in S \mid s\left\{ a \right\} \subseteq V \right\}$.\footnote{
  The operation $s^{*}$ is universal in character.
  Thus, contrary to the image operation $\Image{s}$, it cannot be defined
  as a union of its values on singletons.
  }
Two formal maps $s,s' \colon \mathcal{S} \to \mathcal{T}$ are defined to be \emph{equal} if
\begin{equation} \label{eq:EqualityPTop}
  a \cov_{\mathcal{S}} s^{-} b \leftrightarrow a
  \cov_{\mathcal{S}} s'^{-} b
\end{equation}
for all $a \in S$ and $b \in T$.
\end{definition}
The collections of positive topologies and formal maps with the
relational composition form a category $\PTop$. The identity on a
positive topology is the identity relation on its base.

Each concrete space $\mathcal{X} = (X, \Vdash, S)$ determines
a positive topology $\mathcal{S}_{\mathcal{X}} = (S,
\cov_{\mathcal{X}}, \pos_{\mathcal{X}})$ 
as follows:
\begin{align}
  \begin{aligned}
  \label{eq:Representable}
  a \cov_{\mathcal{X}} U &\defeqiv
  \ext a \subseteq \ext U \\
  a \pos_{\mathcal{X}} U &\defeqiv
  \ext a \meets \rest  U
  \end{aligned}
\end{align}
where $\rest  U  \defeql \{x \in X \mid \Diamond x \subseteq U\}$ and
$\Diamond x \subseteq S$ is the set of open neighbourhoods of
a point $x \in X$:
\begin{equation}\label{def:Diamond}
  \Diamond x \defeql \left\{ a \in S \mid x \Vdash a \right\}.
\end{equation}
While the meaning of $a \cov_{\mathcal{X}} U$ is clear,
the meaning of $a \pos_{\mathcal{X}} U$ needs some explanation:
it is easy to check that
$\rest U$ is a closed subset of $\mathcal{X}$ and the
closure of a subset $D \subseteq X$ is of the form $\rest \bigcup_{x
\in D} \Diamond x$.
Thus the closed subsets consist of subsets of the form $\rest U$ for
some $U \subseteq S$.
Hence $a \pos_{\mathcal{X}} U$
means that the open subset $\ext a$ intersects with the closed
subset represented by $U \subseteq S$.

\begin{definition}
  A positive topology is said to be \emph{representable} if it is of the
  form $\mathcal{S}_{\mathcal{X}}$ for some concrete space
  $\mathcal{X}$. 
\end{definition}
Here $\mathcal{X}$ is not a part of the structure of a representable topology.
In fact, some representable topologies admit 
purely pointfree characterisations, e.g.\ Scott topologies on
algebraic posets \cite{SambinFormalTopologyandDomains}, or more
generally on continuous posets \cite{Infosys}.

If $(r,s) \colon \mathcal{X} \to \mathcal{Y}$ is a
convergent relation pair, then $s$ 
is a formal map from $\mathcal{S}_{\mathcal{X}}$ to
$\mathcal{S}_{\mathcal{Y}}$. 

\begin{theorem} \label{thm:FundThm}
  The assignment $\mathcal{X} \mapsto \mathcal{S}_{\mathcal{X}}$ and
  $(r,s) \mapsto s$ determines a functor $\Fs \colon \CSpa \to \PTop$.
  Moreover, $\Fs$ is full and faithful.
\end{theorem}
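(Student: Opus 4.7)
The plan is to verify the theorem in three stages: $\Fs$ is well-defined on objects, on morphisms, and finally full and faithful. For the object part, the positive topology axioms for $\mathcal{S}_{\mathcal{X}} = (S, \cov_{\mathcal{X}}, \pos_{\mathcal{X}})$ follow directly from \eqref{eq:Representable}. The cover axioms are routine properties of $\ext$, using (B1) to obtain $\ext(a \downarrow b) = \ext a \cap \ext b$, so that $\downset$-right is immediate. Unfolding $\pos_{\mathcal{X}}$ gives $a \pos_{\mathcal{X}} U \iff (\exists x \elm \ext a)(\Diamond x \subseteq U)$; coreflexivity then follows because $a \elm \Diamond x$, cotransitivity because each $b \elm \Diamond x$ already satisfies $b \pos_{\mathcal{X}} U$ and hence lies in $V$ by the premise, and compatibility by choosing $b \elm U$ with $x \Vdash b$ from $x \elm \ext a \subseteq \ext U$ and observing $b \pos_{\mathcal{X}} V$.

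For the morphism part, given a convergent relation pair $(r,s) \colon \mathcal{X} \to \mathcal{Y}$, I would derive (FM1)--(FM4). Conditions (FM1) and (FM2) follow immediately from (C2) and from one direction of (C1), respectively. For (FM3), the compatibility $\Vdash' \mathop{\circ} r = s \mathop{\circ} \Vdash$ rewrites $\ext(s^{-} b)$ as $\{x \in X \mid (\exists y)(x \mathrel{r} y \amp y \Vdash' b)\}$; given $x$ in this set, its witness $y$ together with $b \cov_{\mathcal{Y}} V$ produces some $c \elm V$ with $y \Vdash' c$, and a second use of compatibility puts $x$ in $\ext(s^{-} V)$. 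The crucial step is (FM4): from $s^{-} b \pos_{\mathcal{X}} s^{*} V$ one extracts $x \in X$ and $a \elm s^{-} b$ with $x \Vdash a$ and $\Image{s} \Diamond x \subseteq V$; compatibility supplies $y \in Y$ with $x \mathrel{r} y$ and $y \Vdash' b$, and another application of compatibility shows $\Diamond y \subseteq \Image{s} \Diamond x \subseteq V$, so that $y$ witnesses $b \pos_{\mathcal{Y}} V$. Functoriality is automatic because composition in both categories is relational composition; the two notions of equality of morphisms, \eqref{eq:EqualityCSpa} and \eqref{eq:EqualityPTop}, both reduce to $\ext(s^{-} b) = \ext(s'^{-} b)$ for every $b \in T$, which simultaneously yields preservation of equality and faithfulness.

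The main substantive step is fullness. Given a formal map $s \colon \mathcal{S}_{\mathcal{X}} \to \mathcal{S}_{\mathcal{Y}}$, I would define $r \subseteq X \times Y$ by
\[
  x \mathrel{r} y \defeqiv \Diamond y \subseteq \Image{s} \Diamond x
\]
and verify that $(r, s)$ is a convergent relation pair with $\Fs(r,s) = s$. The inclusion $\Vdash' \mathop{\circ} r \subseteq s \mathop{\circ} \Vdash$ is immediate from the definition of $r$; the reverse inclusion is the part I expect to be the main obstacle. To prove it, suppose $x \Vdash a$ and $a \mathrel{s} b$ and set $V \defeql \Image{s} \Diamond x$; then the pair $(x, a)$ itself witnesses $s^{-} b \pos_{\mathcal{X}} s^{*} V$, so (FM4) supplies $y \in Y$ with $y \Vdash' b$ and $\Diamond y \subseteq V$, which is exactly $x \mathrel{r} y$. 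The convergence conditions (C1) and (C2) for $(r,s)$ are then obtained from (FM1)--(FM3) together with (B1) by the same calculations as in the converse direction.
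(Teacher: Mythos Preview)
Your proposal is correct and follows essentially the same approach as the paper: the definition $x \mathrel{r} y \defeqiv \Diamond y \subseteq \Image{s}\Diamond x$ for fullness is exactly the paper's $r_s$, and your reduction of both equalities \eqref{eq:EqualityCSpa} and \eqref{eq:EqualityPTop} to $\ext(s^{-}b) = \ext(s'^{-}b)$ is precisely what the paper's terse faithfulness argument intends. You have simply supplied the details (the verification of (FM1)--(FM4), and the use of (FM4) with $V = \Image{s}\Diamond x$ to obtain the nontrivial inclusion $s \mathop{\circ} \Vdash \subseteq \Vdash' \mathop{\circ} r$) that the paper defers to \cite{Sambin_BP_book}.
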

\begin{proof}
  First, it is routine to check that $\Fs$ is a functor.
  Next, $\Fs$ is full because for every formal map $s \colon
  \Fs(\mathcal X) \to \Fs(\mathcal Y)$ one can define a relation 
  \[
    x \mathrel{r_s} y   \defeqiv  \Diamond y \subseteq s\Diamond x
  \]
  from $X$ to $Y$ such that
  $(r_s, s)$ is a convergent relation pair.
  Finally, $\Fs$ is faithful because condition \eqref{eq:EqualityPTop}
  applied to $\Fs(r,s)$ and $\Fs(r',s')$  can be shown to be
  equivalent to \eqref{eq:EqualityCSpa}.
  For details, the reader is referred to the forthcoming book \cite{Sambin_BP_book}. 
\end{proof}
Theorem~\ref{thm:FundThm} says that the
notion of positive topology is a full and faithful abstraction of the
structure induced on the base of a concrete space.  In the classical
pointfree topology \cite{johnstone-82}, this corresponds to the
embedding of the category of sober topological spaces into that of
locales.

\begin{remark}
  The notion of positive topology is richer than that of formal
  topology \cite{Sambin:intuitionistic_formal_space}, being enriched
  by positivity $\pos$. This extra structure allows us to
  prove Theorem \ref{thm:FundThm}.
  Moreover, the category of formal topologies can be embedded 
  into that of positive topologies (Ciraulo and
  Sambin~\cite{CirauloGiovanniEmbedding}).
  However, not all the practical benefits of such extension of the purely
  pointfree setting have been explored. 
\end{remark}

\subsection{Continuity theorem}\label{sec:ContThm}
The notion of ideal point of a positive topology allows us to talk about
ideal elements of the corresponding space.
An ideal point is defined
abstracting the properties of the neighbourhoods $\Diamond x$ of an
element $x$ of a concrete space.
\begin{definition}
Let $\mathcal{S}$ be a positive topology. An \emph{ideal point}
is a subset $\alpha \subseteq S$ such that 
\begin{enumerate}
  \item $\alpha$ is \emph{inhabited}, i.e.
    $\alpha \meets \alpha$;

  \item $\alpha$ is \emph{filtering}, i.e.
    $a,b \elm \alpha \imp \alpha \meets (a \downarrow b)$\:  for all $a,b \in S$;

  \item $\alpha$  \emph{splits  $\cov$}, i.e.
    $a \cov U \amp  a \elm \alpha \imp \alpha \meets U$\: for all $a \in S$ and
    $U \subseteq S$;

  \item $\alpha$ \emph{enters  $\pos$}, i.e.
    $a \elm \alpha \subseteq V \imp a \pos V$\: for all $a \in S$ and $V \subseteq
    S$.
\end{enumerate}
The collection of ideal points of a positive topology
$\mathcal{S}$ is denoted by $\Pt{\mathcal{S}}$.
\end{definition}
The collection  $\Pt{\mathcal{S}}$ is equipped with a pointwise
topology generated by open subcollections of the form
\[
 \Ext(a) \defeql \left\{ \alpha \colon \Pt{\mathcal{S}} \mid a \elm \alpha \right\}
\]
for each $a \in S$. 
This topology can be represented by a \emph{large} concrete space
$\Ip(\mathcal{S}) = (\Pt{\mathcal{S}},
\backelm, S)$,
where $\alpha \backelm a \defeqiv a \elm \alpha$.

The modifier \emph{large} is due to the fact that the left-hand side
of $\Ip(\mathcal{S})$ is a collection rather than a set, and thus it is
not a concrete space in a proper sense. 
However, it is convenient to consider such structures, and we will do so in the following.
 The reason why it is safe to consider large structures is that we work on them 
 using only predicate logic, which applies to collections as well as to sets. 
 Thus, the results on concrete spaces that depend only on logic apply
 automatically to large concrete spaces.
 The same remark applies to large covers and large positivities, which
 will be introduced at the beginning of Section~\ref{sec:Cont}.
\begin{definition}\label{def:ContMap}
If $\mathcal{S}$ and $\mathcal{T}$ are positive topologies, 
a \emph{continuous map} from  $\Ip(\mathcal{S})$
to $\Ip(\mathcal{T})$ is a pair $(g,s)$ where $g \colon \Pt{\mathcal{S}}
\to \Pt{\mathcal{T}}$ is an operation and $s \subseteq S \times T$ is a
relation which makes the following diagram commute:
\[
 \begin{tikzcd}
   \Pt{\mathcal{S}}
 \arrow[r,"\backelm"] \arrow[d,"g"] &
 S \arrow[d,"s"] \\
 \Pt{\mathcal{T}} \arrow[r, "\backelm"] 
 & T
 \end{tikzcd}
\]
\end{definition}
The above definition
is motivated by the following observation on representable topologies:
let $(r,s)\colon \mathcal{X} \to \mathcal{Y}$ be a convergent relation pair.
Then $s$ is a formal map from $\mathcal{S}_{\mathcal{X}}$ to $\mathcal{S}_{\mathcal{Y}}$
and hence (as we will see in Corollary~\ref{cor:FMapImpCont} below)
$s_*$ is an operation from 
$\Pt {\mathcal{S}_{\mathcal{X}}}$ to $\Pt {\mathcal{S}_{\mathcal{Y}}}$. 
Since $(s_*,s)$ trivially makes the above square commute,
it is a continuous map from $\Pt {\mathcal{S}_{\mathcal{X}}}$ 
to $\Pt {\mathcal{S}_{\mathcal{Y}}}$.

Every continuous map $(g,s)$ is pointwise continuous, that is
\[
  b \elm g(\alpha) \imp  \exists a \elm \alpha 
  \left[ \left( \forall \beta \colon \Pt{\mathcal{S}} \right) a \elm \beta \imp b \elm g(\beta)
  \right]
\]
for all $\alpha \colon \Pt{\mathcal{S}}$ and $b \in T$. More
specifically, from
the open neighbourhood $\Ext(b)$ of $g(\alpha)$, 
by commutativity one can trace $b$ backward along $s$
and find an open neighbourhood $\Ext(a)$ of $\alpha$ whose image
under $g$ falls within $\Ext(b)$. Thus, the relation $s$ acts as a
\emph{modulus} of continuity for the operation $g$.

Commutativity of the square means that
\[
  b \elm g(\alpha)
  \leftrightarrow s^{-}b \meets \alpha \leftrightarrow b \elm
  \Image{s}(\alpha)
\]
for all $b \in T$ and $\alpha \colon \Pt{\mathcal{S}}$,
 which is equivalent to saying that $g$ is equal to the image
 operation $\Image{s} \colon \Pow{S} \to \Pow{T}$ on
 $\Pt{\mathcal{S}}$. Hence, every continuous map
 from $\Ip(\mathcal{S})$  to $\Ip(\mathcal{T})$ is of the form
 $(\Image{s},s)$ where $s \subseteq S \times T$ is a relation which
 maps $\alpha \colon \Pt{\mathcal{S}}$ to $\Image{s}(\alpha) \colon
 \Pt{\mathcal{T}}$.  Conversely, if $s \subseteq S \times T$ is a
 relation which induces a well-defined operation $\Image{s} \colon
 \Pt{\mathcal{S}} \to \Pt{\mathcal{T}}$, then the pair $(\Image{s},s)$
 clearly makes the square commute.
 In summary, we have the following theorem.
 \begin{theorem}[Continuity theorem]
  \label{thm:PWCont}
  Let $\mathcal{S}$ and $\mathcal{T}$ be positive topologies, and 
  let $s \subseteq S \times T$ be a relation. 
  If $\Image{s}$ is a
  mapping from
  $\Pt{\mathcal{S}}$ to $\Pt{\mathcal{T}}$, then $(\Image{s},s)$ is a continuous map from
  $\Ip(\mathcal{S})$ to $\Ip(\mathcal{T})$. Moreover, every
  continuous map from $\Ip(\mathcal{S})$ to $\Ip(\mathcal{T})$ is
  induced by a relation from $S$ to $T$ in this way.
\end{theorem}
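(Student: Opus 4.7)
The plan is to treat the commutative square in Definition~\ref{def:ContMap} as the working definition of continuous map and to read off each half of the theorem by unfolding the relations $\backelm$ and $\Image{s}$ pointwise. No structural property of $\Pt{\mathcal{S}}$ nor any of the formal map axioms needs to be invoked; the content is essentially that $\Image{s}$ is the unique choice of operation that can fit into such a square, once we already know it lands in $\Pt{\mathcal{T}}$.

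For the forward direction, assume $s \subseteq S \times T$ is a relation such that $\Image{s}$ restricts to a total operation $\Pt{\mathcal{S}} \to \Pt{\mathcal{T}}$. To see that $(\Image{s}, s)$ is a continuous map, I would verify commutativity of the square pointwise: for any $\alpha \colon \Pt{\mathcal{S}}$ and any $b \in T$,
\[
  \alpha \,(s \mathop{\circ} \backelm)\, b \iff (\exists a \elm \alpha)\, a \mathrel{s} b \iff b \elm \Image{s}(\alpha) \iff \alpha \,(\backelm \mathop{\circ} \Image{s})\, b.
\]
The middle equivalence is literally the definition of the image operation, so commutativity is tautological once the hypothesis guarantees that $\Image{s}(\alpha)$ is a legitimate ideal point of $\mathcal{T}$.

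For the converse, suppose $(g, s)$ is a continuous map. Reading the same chain of equivalences at each $\alpha$ and $b$ gives $b \elm g(\alpha) \iff b \elm \Image{s}(\alpha)$, so by extensionality of subsets $g(\alpha)$ and $\Image{s}(\alpha)$ agree as subsets of $T$; hence $g$ coincides with $\Image{s}$ on $\Pt{\mathcal{S}}$, and since $g(\alpha) \colon \Pt{\mathcal{T}}$ by assumption, $\Image{s}$ is automatically a well-defined operation $\Pt{\mathcal{S}} \to \Pt{\mathcal{T}}$. Honestly, there is no significant obstacle: the theorem is a direct repackaging of Definition~\ref{def:ContMap}, and the paragraphs preceding the statement already carry out this unfolding. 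The one conceptual point worth being explicit about is the Minimalist Foundation distinction between an operation and the image of a relation: we identify $g$ with $\Image{s}$ through the extensional equality of their values as subsets, which is exactly the appropriate notion of equality of maps into $\Pt{\mathcal{T}}$ here.
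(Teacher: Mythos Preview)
Your proposal is correct and follows the same approach as the paper: the theorem is a direct unfolding of the commuting square in Definition~\ref{def:ContMap}, and the paper's justification is precisely the chain of equivalences $b \elm g(\alpha) \leftrightarrow s^{-}b \meets \alpha \leftrightarrow b \elm \Image{s}(\alpha)$ given in the paragraph immediately preceding the statement. Your additional remark about extensional equality of subsets as the appropriate notion of equality is apt and matches the spirit of the paper.
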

Theorem \ref{thm:PWCont} should be compared to the familiar form of
continuity principle ``every full function is continuous''.  The
theorem articulates a condition on operations between ideal points in
which the continuity principle holds, i.e.\ that of being induced by a
relation between bases.  This condition is reasonable from the
constructive point of view since in order to define an operation on an
infinite object like an ideal point, we can
only rely on finite information about it, i.e.\ its formal
neighbourhoods and a relation between them (cf.\ Vickers
\cite{vickers1989topology}). From a predicative point of
view, we believe that the notion of continuous map is one of the
simplest way of characterising continuous operations between
ideal points.

\subsection{Principle of pointfree continuity}\label{sec:Cont}
Our next aim is to relate the notion of continuous map to that of
formal map.

Given a positive topology $\mathcal{S}$, let
$\mathcal{S}_{\Ip} = (S, \cov_{\Ip}, \pos_{\Ip})$ be the \emph{large}
positive topology associated with the concrete space
$\Ip(\mathcal{S})$ (cf.\ \eqref{eq:Representable}).
Note that $\cov_{\Ip}$ and $\pos_{\Ip}$ are defined by
quantifications over $\Pt{\mathcal{S}}$, which is not
necessarily a set, and hence they are \emph{large} structures in general.
There is a formal map $\varepsilon_{\mathcal{S}} \colon
\mathcal{S}_{\Ip} \to \mathcal{S}$ represented by the identity
relation on $S$.  In particular, we have
\begin{align*}
&a \cov U \imp a \cov_{\Ip} U
&a \pos_{\Ip} U \imp a \pos U
\end{align*}
for all $a \in S$ and $U \subseteq S$.

\begin{lemma}\label{lem:FundLemma}
Let $s \subseteq S \times T$ be a relation between the bases
of positive topologies $\mathcal{S}$ and $\mathcal{T}$. Then
\begin{enumerate}
  \item\label{lem:FundLemma1} $S \cov_{\mathcal{S}_{\Ip}} s^{-}T$ if and only if
    $\Image{s}(\alpha)$ is inhabited for all $\alpha \colon
    \Pt{\mathcal{S}}$;

  \item\label{lem:FundLemma2} $s^{-}b \downarrow s^{-}c
    \cov_{\mathcal{S}_{\Ip}}
    s^{-}(b \downarrow c)$ for all $b,c \in T$ if and only if
    $\Image{s}(\alpha)$ is filtering for all $\alpha \colon
    \Pt{\mathcal{S}}$;

  \item\label{lem:FundLemma3} $b \cov_{\mathcal{T}}V \imp s^{-}b
    \cov_{\mathcal{S}_{\Ip}} s^{-}V$ for all $b \in T$ and $V \subseteq T$ if and
    only if $\Image{s}(\alpha)$ splits $\cov_{\mathcal{T}}$ for all
    $\alpha \colon \Pt{\mathcal{S}}$;

  \item\label{lem:FundLemma4} $s^{-}b \pos_{\mathcal{S}_{\Ip}}s^{*}V
    \imp b \pos_{\mathcal{T}} V$ for all
    $b \in T$ and $V \subseteq T$ if and only if $\Image{s}(\alpha)$ enters
    $\pos_{\mathcal{T}}$ for all $\alpha \colon \Pt{\mathcal{S}}$.
\end{enumerate}
\end{lemma}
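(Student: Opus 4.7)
The strategy is to reduce each item to a direct unfolding of the large cover and positivity on $\mathcal{S}_{\Ip}$ in terms of ideal points of $\mathcal{S}$. Since $\mathcal{S}_{\Ip}$ is the topology represented by the large concrete space $\Ip(\mathcal{S})$, and since $\Diamond \alpha = \alpha$ for every $\alpha \colon \Pt{\mathcal{S}}$, definition~\eqref{eq:Representable} specialises to
\[
  a \cov_{\mathcal{S}_{\Ip}} U \iff (\forall \alpha \colon \Pt{\mathcal{S}})\,(a \elm \alpha \imp \alpha \meets U),
\]
\[
  a \pos_{\mathcal{S}_{\Ip}} U \iff (\exists \alpha \colon \Pt{\mathcal{S}})\,(a \elm \alpha \amp \alpha \subseteq U).
\]
Combined with the elementary equivalences $b \elm \Image{s}(\alpha) \iff \alpha \meets s^{-}b$ and $\Image{s}(\alpha) \subseteq V \iff \alpha \subseteq s^{*}V$, each of the four items reduces to a short logical manipulation.

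For item~(1), $S \cov_{\mathcal{S}_{\Ip}} s^{-}T$ translates to the statement that every $\alpha$ containing some $a \in S$ meets $s^{-}T$; since each ideal point is inhabited, this collapses to $\alpha \meets s^{-}T$ for every $\alpha$, equivalently that $\Image{s}(\alpha)$ is inhabited. For item~(3), the hypothesis $b \cov_{\mathcal{T}} V \imp s^{-}b \cov_{\mathcal{S}_{\Ip}} s^{-}V$ reads, via the translations, as: whenever $b \cov_{\mathcal{T}} V$ and $b \elm \Image{s}(\alpha)$ one has $\Image{s}(\alpha) \meets V$---i.e., $\Image{s}(\alpha)$ splits $\cov_{\mathcal{T}}$. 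For item~(4), $s^{-}b \pos_{\mathcal{S}_{\Ip}} s^{*}V$ unfolds, via the positivity translation applied to some $a \elm s^{-}b$, to the assertion that there exists $\alpha$ with $b \elm \Image{s}(\alpha)$ and $\Image{s}(\alpha) \subseteq V$; hence the implication to $b \pos_{\mathcal{T}} V$, quantified over $b$ and $V$, is equivalent to saying that $\Image{s}(\alpha)$ enters $\pos_{\mathcal{T}}$ for every $\alpha$.

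The main obstacle is item~(2), where the $\downarrow$ operator sits on different sides of $s$. In the forward direction, given $b, c \elm \Image{s}(\alpha)$, choose witnesses $a_1, a_2 \elm \alpha$ with $a_1 \mathrel{s} b$ and $a_2 \mathrel{s} c$; the filtering property of $\alpha$ in $\mathcal{S}$ yields $a \elm \alpha \cap (a_1 \downarrow a_2) \subseteq s^{-}b \downarrow s^{-}c$, and the hypothesis $s^{-}b \downarrow s^{-}c \cov_{\mathcal{S}_{\Ip}} s^{-}(b \downarrow c)$ then forces $\alpha \meets s^{-}(b \downarrow c)$, i.e., $\Image{s}(\alpha) \meets (b \downarrow c)$. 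Conversely, given $a \elm \alpha \cap (s^{-}b \downarrow s^{-}c)$, one has $a \cov_{\mathcal{S}} \{a_1\}$ and $a \cov_{\mathcal{S}} \{a_2\}$ for some $a_1 \elm s^{-}b$ and $a_2 \elm s^{-}c$; the splitting property of $\alpha$ for $\cov_{\mathcal{S}}$ yields $a_1, a_2 \elm \alpha$, so $b, c \elm \Image{s}(\alpha)$, and filtering of $\Image{s}(\alpha)$ in $\mathcal{T}$ produces $d \elm \Image{s}(\alpha) \cap (b \downarrow c)$, whose $\alpha$-witness lies in $s^{-}(b \downarrow c)$. The delicate step is the use of $\cov_{\mathcal{S}}$-splitting of $\alpha$ to descend from $s^{-}b \downarrow s^{-}c$ to $s^{-}b$ and $s^{-}c$; it is precisely this asymmetric interplay of filtering and splitting that makes the $\downarrow$ operator interchangeable across $s^{-}$ once all ideal points are taken into account.
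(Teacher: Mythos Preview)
Your proposal is correct and follows the same approach as the paper: unfold the definition of $\cov_{\mathcal{S}_{\Ip}}$ and $\pos_{\mathcal{S}_{\Ip}}$ in terms of ideal points and match each condition against the corresponding clause in the definition of ideal point. Items (1), (3), and (4) are handled identically.

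The only difference is in emphasis on item~(2). You treat it as the ``main obstacle'' and split it into two directions, invoking both filtering of $\alpha$ and $\cov_{\mathcal{S}}$-splitting of $\alpha$. The paper instead dismisses it in one line (``Similar to (1) using the fact that every ideal point is filtering''), because the filtering of $\alpha$ alone already gives the key equivalence
\[
  \alpha \meets \bigl(s^{-}b \downarrow_{\mathcal{S}_{\Ip}} s^{-}c\bigr) \iff b,c \elm \Image{s}(\alpha),
\]
after which the statement is an immediate translation, just as in item~(1). Your appeal to $\cov_{\mathcal{S}}$-splitting in the converse direction works only if the domain-side $\downarrow$ is read as $\downarrow_{\mathcal{S}}$; under the intended reading $\downarrow_{\mathcal{S}_{\Ip}}$ (since this is (FM2) for a map $\mathcal{S}_{\Ip} \to \mathcal{T}$), you only get $a \cov_{\mathcal{S}_{\Ip}} \{a_1\}$, and then no splitting is needed---membership of $a_1$ in $\alpha$ follows directly from the definition of $\cov_{\mathcal{S}_{\Ip}}$. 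Either way the argument closes, since the two readings of $\downarrow$ yield $\cov_{\mathcal{S}_{\Ip}}$-equivalent subsets, but the ``delicate step'' you highlight is an artifact of the notational choice rather than a genuine obstacle.
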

\begin{proof}
  \noindent \eqref{lem:FundLemma1}
   Since every ideal point is inhabited, $S \cov_{\mathcal{S}_{\Ip}} s^{-}T$
   is equivalent to $\alpha \meets s^{-}T$ for all $\alpha \colon
   \Pt{\mathcal{S}}$, that is $\Image{s}(\alpha) \meets  T$ for
   all $\alpha \colon \Pt{\mathcal{S}}$.
   \smallskip

  \noindent\eqref{lem:FundLemma2} Similar to  \eqref{lem:FundLemma1} using
  the fact that every ideal point is filtering.
   \smallskip

  \noindent \eqref{lem:FundLemma3}
  By unfolding the definition,
  $b \cov_{\mathcal{T}}V \imp s^{-}b
    \cov_{\mathcal{S}_{\Ip}} s^{-}V$ for all $b \in T$ and $V
    \subseteq T$
    if and only if
   $b \cov_{\mathcal{T}}V \imp  \forall \alpha \colon 
   \Pt{\mathcal{S}} \left[  \alpha \meets s^{-}b
     \imp
    \alpha \meets s^{-}V \right]$ for all $b \in T$ and $V \subseteq T$. This is easily seen to be equivalent to 
   $b \cov_{\mathcal{T}}V \amp  b \elm \Image{s}(\alpha) \imp
   \Image{s}(\alpha) \meets V $ for all $b \in T$, $V \subseteq T$
    and $\alpha \colon \Pt{\mathcal{S}}$, that is
   $\Image{s}(\alpha)$ splits $\cov_{\mathcal{T}}$ for all $\alpha
   \colon \Pt{\mathcal{S}}$.
   \smallskip

  \noindent\eqref{lem:FundLemma4} By unfolding the definition, we have
  $s^{-}b \pos_{\mathcal{S}_{\Ip}} s^{*}V \imp b \pos_{\mathcal{T}} V$
  for all $b \in T$ and $V \subseteq T$
  if and only if
  $ \exists \alpha \colon \Pt{\mathcal{S}}  \left[ \alpha
  \meets s^{-}b  \amp \alpha \subseteq s^{*}V \right]  \imp b
  \pos_{\mathcal{T}} V$
  for all $b \in T$ and $V \subseteq T$. This is equivalent to
  $b \elm \Image{s}(\alpha) \amp \Image{s}(\alpha) \subseteq V  
    \imp b \pos_{\mathcal{T}} V$
    for all $b \in T$, $V \subseteq T$ and $\alpha \colon
    \Pt{\mathcal{S}}$, that is
    $\Image{s}(\alpha)$  enters $\pos_{\mathcal{T}}$ for all
    $\alpha \colon \Pt{\mathcal{S}}$.
\end{proof}

\begin{proposition}\label{prop:EquiContOp}
  Let $\mathcal{S}$ and $\mathcal{T}$ be positive topologies
  and $s \subseteq  S \times T$ be a relation. The following are
  equivalent:
  \begin{enumerate}
  \item\label{lam:EquiContOp1} $\Image{s}(\alpha) \colon
    \Pt{\mathcal{T}}$ for all $\alpha \colon \Pt{\mathcal{S}}$;
  \item\label{lam:EquiContOp2} $(\Image{s}, s)$ is a continuous
    map from $\Ip(\mathcal{S})$ to
    $\Ip(\mathcal{T})$;
  \item\label{lam:EquiContOp4} 
    $s$ is a formal map from $\mathcal{S}_{\Ip}$ to $\mathcal{T}$.
  \end{enumerate}
\end{proposition}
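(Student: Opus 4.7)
The plan is to establish the equivalence by proving the cycle $(1) \Rightarrow (2) \Rightarrow (1) \Leftrightarrow (3)$, where the bulk of the work has already been done in Lemma~\ref{lem:FundLemma}. No new computation should be needed; the proof is essentially a bookkeeping exercise that matches up the four clauses defining an ideal point with the four clauses (FM1)--(FM4) defining a formal map.

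For the equivalence $(1) \Leftrightarrow (3)$, I would simply invoke Lemma~\ref{lem:FundLemma} clause by clause. Clause~\eqref{lem:FundLemma1} identifies (FM1) for $s \colon \mathcal{S}_{\Ip} \to \mathcal{T}$ with the condition that $\Image{s}(\alpha)$ is inhabited for every ideal point $\alpha$; clauses~\eqref{lem:FundLemma2}, \eqref{lem:FundLemma3}, and \eqref{lem:FundLemma4} do the same for (FM2)--(FM4) and the conditions filtering, splitting $\cov_{\mathcal{T}}$, and entering $\pos_{\mathcal{T}}$, respectively. Thus $s$ is a formal map from $\mathcal{S}_{\Ip}$ to $\mathcal{T}$ if and only if $\Image{s}(\alpha)$ satisfies the four defining properties of an ideal point of $\mathcal{T}$ for every $\alpha \colon \Pt{\mathcal{S}}$.

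For $(1) \Rightarrow (2)$, assuming $\Image{s}$ restricts to an operation $\Pt{\mathcal{S}} \to \Pt{\mathcal{T}}$, I would verify that the square in Definition~\ref{def:ContMap} commutes. By the definitions of $\backelm$ and of the image operation, we have, for every $\alpha \colon \Pt{\mathcal{S}}$ and $b \in T$,
\[
  b \elm \Image{s}(\alpha) \iff (\exists a \elm \alpha)\, a \mathrel{s} b \iff (s \circ \backelm)(\alpha, b),
\]
which is exactly the commutativity condition $\backelm \circ \Image{s} = s \circ \backelm$. Hence $(\Image{s}, s)$ is a continuous map. The converse $(2) \Rightarrow (1)$ is immediate, since Definition~\ref{def:ContMap} requires the first component of a continuous map to be an operation between collections of ideal points.

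The only step requiring any care is making explicit the commutativity verification in $(1)\Rightarrow(2)$, but this is exactly the observation already made in the discussion preceding Theorem~\ref{thm:PWCont}: continuous maps $\Ip(\mathcal{S}) \to \Ip(\mathcal{T})$ are precisely pairs $(\Image{s}, s)$ for relations $s$ whose image operation sends ideal points to ideal points. There is no conceptual obstacle; the proposition is a compact reformulation of Lemma~\ref{lem:FundLemma} together with this observation, repackaging conditions on an operation between points as conditions on a relation between bases.
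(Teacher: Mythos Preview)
Your proposal is correct and follows essentially the same approach as the paper: the equivalence $(1)\Leftrightarrow(3)$ is derived clause by clause from Lemma~\ref{lem:FundLemma}, and the equivalence $(1)\Leftrightarrow(2)$ is the content of Theorem~\ref{thm:PWCont}, whose proof you have effectively reproduced. The only cosmetic difference is that the paper cites Theorem~\ref{thm:PWCont} directly for $(1)\Leftrightarrow(2)$, whereas you spell out the commutativity verification inline.
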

\begin{proof}
  (\ref{lam:EquiContOp1} $\Leftrightarrow$ \ref{lam:EquiContOp2})
  This is the content of Theorem \ref{thm:PWCont}.

  \noindent(\ref{lam:EquiContOp1} $\Leftrightarrow$ \ref{lam:EquiContOp4})
  Immediate from Lemma \ref{lem:FundLemma}.
\end{proof}

\begin{corollary}\label{cor:FMapImpCont}
  If $s \colon \mathcal{S} \to \mathcal{T}$ is a formal map,
  then $(\Image{s},s)$ is continuous, i.e.\ $\Image{s}$ is a mapping from
  $\Pt{\mathcal{S}}$ to $\Pt{\mathcal{T}}$.
\end{corollary}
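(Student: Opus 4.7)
The plan is to reduce the statement directly to Proposition~\ref{prop:EquiContOp}. That proposition gives the equivalence of three conditions, including (\ref{lam:EquiContOp2}) that $(\Image{s},s)$ is continuous and (\ref{lam:EquiContOp4}) that $s$ is a formal map from $\mathcal{S}_{\Ip}$ to $\mathcal{T}$. So the whole task is to promote the hypothesis that $s$ is a formal map $\mathcal{S} \to \mathcal{T}$ to the stronger-looking statement that $s$ is a formal map $\mathcal{S}_{\Ip} \to \mathcal{T}$.

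For this, I would appeal to the formal map $\varepsilon_{\mathcal{S}} \colon \mathcal{S}_{\Ip} \to \mathcal{S}$, represented by the identity relation on $S$, which is explicitly introduced just before Lemma~\ref{lem:FundLemma}. Since $\PTop$ is a category, the composition $s \circ \varepsilon_{\mathcal{S}}$ is a formal map $\mathcal{S}_{\Ip} \to \mathcal{T}$; and because composition is given by relational composition and $\varepsilon_{\mathcal{S}}$ is the identity relation, this composite is just $s$ again. Condition (\ref{lam:EquiContOp4}) of Proposition~\ref{prop:EquiContOp} therefore holds, and the conclusion follows.

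If one prefers not to invoke $\varepsilon_{\mathcal{S}}$ and composition, there is a direct route using Lemma~\ref{lem:FundLemma}: the four clauses (FM1)--(FM4) of $s$ being a formal map $\mathcal{S} \to \mathcal{T}$ each imply the corresponding clause with $\cov_{\mathcal{S}}$ replaced by $\cov_{\Ip}$ and $\pos_{\mathcal{S}}$ replaced by $\pos_{\Ip}$, using the two implications $a \cov U \imp a \cov_{\Ip} U$ and $a \pos_{\Ip} U \imp a \pos U$ recorded just before Lemma~\ref{lem:FundLemma}; then Lemma~\ref{lem:FundLemma} translates these conditions into the four defining properties of an ideal point for $\Image{s}(\alpha)$.

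The main thing to watch for is the second route: the binary operation $\downarrow$ used in (FM2) is defined in terms of $\cov$, so $\downarrow_{\Ip}$ is a priori a larger relation than $\downarrow_{\mathcal{S}}$, and one must verify that (FM2) for $\cov_{\mathcal{S}}$ still implies (FM2) for $\cov_{\Ip}$ (the enlargement is on the left-hand side and only makes the covering target easier to reach, so the implication goes through). Taking the first route sidesteps this worry entirely by delegating it to the general fact that formal maps compose; this is why I would prefer to present the proof as a one-line application of Proposition~\ref{prop:EquiContOp} to the composite $s \circ \varepsilon_{\mathcal{S}} = s$.
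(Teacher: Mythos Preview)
Your preferred route---composing $s$ with the canonical map $\varepsilon_{\mathcal{S}}\colon \mathcal{S}_{\Ip}\to\mathcal{S}$ and then applying Proposition~\ref{prop:EquiContOp}---is exactly the paper's proof, and it is correct.

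One small caution about your parenthetical second route: your remark that ``the enlargement is on the left-hand side and only makes the covering target easier to reach'' is backwards. Since $\downarrow_{\Ip}$ is larger than $\downarrow_{\mathcal{S}}$, the set $s^{-}b\downarrow_{\Ip}s^{-}c$ on the left of (FM2) is \emph{larger}, so more elements must be shown to be covered by $s^{-}(b\downarrow_{\mathcal{T}}c)$; this is a priori harder, not easier. The implication does still go through, but one has to argue pointwise: given $a\in s^{-}b\downarrow_{\Ip}s^{-}c$ and $\alpha\ni a$, use that $\alpha$ is filtering (with respect to $\downarrow_{\mathcal{S}}$) to produce some $a'\in\alpha$ lying in $s^{-}b\downarrow_{\mathcal{S}}s^{-}c$, then apply (FM2) for $\cov_{\mathcal{S}}$ and split. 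So the direct verification works but is not the one-liner your parenthetical suggests; sticking with the composition argument, as you do, is the clean choice.
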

\begin{proof}
  If $s \colon \mathcal{S} \to \mathcal{T}$ is a formal map,
  then composition with the canonical map $\varepsilon_{\mathcal{S}}
  \colon \mathcal{S}_{\Ip} \to \mathcal{S}$ gives a formal map $s
  \colon \mathcal{S}_{\Ip} \to \mathcal{T}$. Then, the conclusion
  follows from Proposition \ref{prop:EquiContOp}.
\end{proof}

If the pointwise cover and positivity coincide with the pointfree 
ones, i.e.\ the canonical map $\varepsilon_{\mathcal{S}} \colon
\mathcal{S}_{\Ip} \to \mathcal{S}$ is an isomorphism, then we could
have replaced item \eqref{lam:EquiContOp4} of Proposition
\ref{prop:EquiContOp} with a formal map $s \colon \mathcal{S} \to
\mathcal{T}$. Here, the notion of bi-spatiality is exactly what is
required.
\begin{definition}\label{def:bi-spatialty}
  A positive topology $\mathcal{S}$ is $\emph{bi-spatial}$ if
  the canonical formal map $\varepsilon_{\mathcal{S}} \colon
  \mathcal{S}_{\Ip} \to \mathcal{S}$ is an isomorphism, i.e.
  \begin{align*}
  &a \cov_{\Ip} U \imp  a \cov U && \text{(spatiality)}\\
  &a \pos U  \imp   a \pos_{\Ip} U && \text{(reducibility)} 
  \end{align*}
   for all $a \in S$ and $U \subseteq S$.
   A positive topology is \emph{spatial} if 
   it satisfies spatiality.
\end{definition}
The following proposition corresponds to the embedding
of the category of sober topological spaces into that of spatial
locales (Johnstone \cite[Chapter II, 1.7]{johnstone-82}).
\begin{proposition}\label{prop:RepBiSpatial}
  Every representable topology is bi-spatial.
\end{proposition}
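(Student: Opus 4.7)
The plan is to exploit the obvious source of ideal points for a representable topology $\mathcal{S}_{\mathcal{X}}$ arising from a concrete space $\mathcal{X} = (X, \Vdash, S)$: namely, each element $x \in X$ determines a subset $\Diamond x = \{a \in S \mid x \Vdash a\}$ of the base, and the first step is to verify that $\Diamond x \colon \Pt{\mathcal{S}_{\mathcal{X}}}$ for every $x \in X$. Inhabitedness of $\Diamond x$ follows from (B2), filtering from (B1), and splitting of $\cov_{\mathcal{X}}$ is immediate upon unfolding $a \cov_{\mathcal{X}} U \defeqiv \ext a \subseteq \ext U$. For the entering condition, if $a \elm \Diamond x \subseteq V$ then $x \in \ext a \cap \rest V$, which by definition means $a \pos_{\mathcal{X}} V$.

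Once this key lemma is in place, spatiality is almost a restatement. Assume $a \cov_{\Ip} U$, i.e.\ that every ideal point $\alpha$ of $\mathcal{S}_{\mathcal{X}}$ containing $a$ meets $U$. Given any $x \in \ext a$, one has $a \elm \Diamond x$, so $\Diamond x \meets U$; this supplies $b \elm U$ with $x \Vdash b$, so $x \elm \ext U$. Thus $\ext a \subseteq \ext U$, which is exactly $a \cov_{\mathcal{X}} U$.

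For reducibility, suppose $a \pos_{\mathcal{X}} U$, i.e.\ there exists $x \in X$ with $x \Vdash a$ and $\Diamond x \subseteq U$. Then $\Diamond x$ is an ideal point satisfying $a \elm \Diamond x \subseteq U$, and this is precisely the witness required by the unfolded definition $a \pos_{\Ip} U \defeqiv \ext a \meets \rest U$ in the large concrete space $\Ip(\mathcal{S}_{\mathcal{X}})$, where $\rest U = \{\alpha \colon \Pt{\mathcal{S}_{\mathcal{X}}} \mid \alpha \subseteq U\}$.

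The only real work is in the preliminary lemma that $\Diamond x$ is an ideal point; after that both implications reduce to rewriting definitions. I expect no serious obstacle, since the axioms of concrete spaces were designed so that points of $X$ behave as ideal points of the induced topology; the argument is essentially the pointfree analogue of the classical fact that every sober topological space is spatial as a locale. The only subtlety worth flagging is that $\mathcal{S}_{\Ip}$ is a \emph{large} positive topology, but as noted in the text this causes no logical difficulty because only predicate logic is used.
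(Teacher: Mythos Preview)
Your proposal is correct and follows exactly the approach indicated in the paper: the paper's proof simply states that $\Diamond x$ is an ideal point of $\mathcal{S}_{\mathcal{X}}$ for each $x \in X$ and leaves the derivation of spatiality and reducibility implicit, whereas you have spelled out those straightforward details.
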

\begin{proof}
  This follows from the fact that in any concrete space
  $\mathcal{X}$, the subset $\Diamond x$ is an ideal point of
  $\mathcal{S}_{\mathcal{X}}$ for each $x \in X$.
\end{proof}

From Definition \ref{def:bi-spatialty} and  Proposition 
\ref{thm:PWCont}, we obtain our second continuity theorem:
\begin{theorem}\label{thm:biSpatImpCont}
  Let $\mathcal{S}$ and $\mathcal{T}$ be positive topologies, and $s
  \subseteq S \times T$ be a relation.  If $\mathcal{S}$ is bi-spatial, then
  $s$ is a formal map $s \colon \mathcal{S} \to \mathcal{T}$ if and only if
  $\Image{s}(\alpha) \colon \Pt{\mathcal{T}}$ for all $\alpha \colon
  \Pt{\mathcal{S}}$.
\end{theorem}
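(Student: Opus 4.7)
The plan is to derive this by combining the equivalence in Proposition~\ref{prop:EquiContOp} with the translation between $\cov_{\Ip}, \pos_{\Ip}$ and $\cov, \pos$ provided by bi-spatiality.

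The forward direction requires no bi-spatiality assumption: if $s \colon \mathcal{S} \to \mathcal{T}$ is a formal map, then Corollary~\ref{cor:FMapImpCont} already states that $\Image{s}$ is a mapping from $\Pt{\mathcal{S}}$ to $\Pt{\mathcal{T}}$.

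For the converse, I would argue as follows. Assume $\Image{s}(\alpha) \colon \Pt{\mathcal{T}}$ for every $\alpha \colon \Pt{\mathcal{S}}$. By the equivalence (\ref{lam:EquiContOp1} $\Leftrightarrow$ \ref{lam:EquiContOp4}) of Proposition~\ref{prop:EquiContOp}, $s$ is a formal map from $\mathcal{S}_{\Ip}$ to $\mathcal{T}$, that is, $s$ satisfies (FM\ref{def:FormalMap1})--(FM\ref{def:FormalMap4}) with the cover and positivity on the base $S$ read as $\cov_{\Ip}$ and $\pos_{\Ip}$. Now invoke bi-spatiality of $\mathcal{S}$: spatiality says $\cov_{\Ip}\, \subseteq\, \cov_{\mathcal{S}}$ (the reverse inclusion is automatic), so $\cov_{\Ip}$ and $\cov_{\mathcal{S}}$ coincide as relations; reducibility says $\pos_{\mathcal{S}}\, \subseteq\, \pos_{\Ip}$ (and the reverse inclusion is automatic), so $\pos_{\mathcal{S}}$ and $\pos_{\Ip}$ coincide. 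Hence conditions (FM\ref{def:FormalMap1}), (FM\ref{def:FormalMap2}), (FM\ref{def:FormalMap3}) transfer verbatim to $\cov_{\mathcal{S}}$, and (FM\ref{def:FormalMap4}) transfers because $s^{-}b \pos_{\mathcal{S}} s^{*}V$ gives $s^{-}b \pos_{\Ip} s^{*}V$ by reducibility, which then yields $b \pos_{\mathcal{T}} V$ by the hypothesis on $s$. Thus $s$ is a formal map from $\mathcal{S}$ to $\mathcal{T}$.

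The only mild subtlety is that the $\downarrow$ operation appearing inside (FM\ref{def:FormalMap2}) is defined relative to a specific cover, so one has to notice that the sets $s^{-}b \downarrow s^{-}c$ and $s^{-}(b \downarrow c)$ are the same whether computed in $\mathcal{S}$ or in $\mathcal{S}_{\Ip}$; but this is immediate from the coincidence of $\cov_{\mathcal{S}}$ and $\cov_{\Ip}$ under spatiality. Alternatively, one can package the argument categorically: bi-spatiality is exactly the statement that $\varepsilon_{\mathcal{S}} \colon \mathcal{S}_{\Ip} \to \mathcal{S}$ is an isomorphism in $\PTop$, carried by the identity relation, so its inverse precomposed with the formal map $s \colon \mathcal{S}_{\Ip} \to \mathcal{T}$ yields a formal map $\mathcal{S} \to \mathcal{T}$ whose underlying relation is again $s$. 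No genuine obstacle is expected; the theorem is essentially a bookkeeping corollary of Proposition~\ref{prop:EquiContOp} once the isomorphism $\varepsilon_{\mathcal{S}}$ is available.
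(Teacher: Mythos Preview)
Your proposal is correct and follows the same line as the paper, which simply states that the theorem follows from the definition of bi-spatiality together with the earlier equivalence (Proposition~\ref{prop:EquiContOp}/Theorem~\ref{thm:PWCont}); you merely spell out the bookkeeping in more detail, including the observation that $\downarrow$ in $\mathcal{S}$ and $\mathcal{S}_{\Ip}$ agree once $\cov_{\mathcal{S}} = \cov_{\Ip}$.
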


Our new principle of continuity is then obtained by omitting the
assumption of bi-spatiality from Theorem \ref{thm:biSpatImpCont}.
\begin{definition}\label{def:Cont}
Let $\mathcal{S}$ and $\mathcal{T}$ be positive topologies.
The \emph{principle of pointfree continuity for $\mathcal{S}$ and
$\mathcal{T}$} is the statement:
\begin{description}
  \item[($\Cont_{\mathcal{S}, \mathcal{T}}$)]
For any relation  $s \subseteq S \times T$, if
$\Image{s}$ is a mapping from $\Pt{\mathcal{S}}$ to $\Pt{\mathcal{T}}$,
then $s$ is a formal map $s \colon \mathcal{S} \to \mathcal{T}$.
\end{description}
\end{definition}
After this definition, 
the content of Theorem~\ref{thm:biSpatImpCont} may be expressed by:  
$\Cont_{\mathcal{S}, \mathcal{T}}$ holds whenever $\mathcal{S}$ is bi-spatial.
\begin{remark}
  $\Cont_{\mathcal{S},\mathcal{T}}$ is concerned with the property of a relation $s
\subseteq  S \times T$, and not with the property of the operation
$\Image{s} \colon \Pt{\mathcal{S}} \to \Pt{\mathcal{T}}$. Another
possible formulation of the continuity principle would be to say that
every continuous operation $(g, s) \colon \Ip(\mathcal{S}) \to \Ip(\mathcal{T})$
is induced by a formal map $s' \colon \mathcal{S} \to
\mathcal{T}$, namely $g = \Image{s'}$, or equivalently
$\Image{s} = \Image{s'}$. This formulation of continuity principle is
analogous to the uniform continuity theorem for the Cantor space, saying
that every pointwise continuous function from the Cantor space to
the discrete space of natural numbers is uniformly continuous. This latter form is
weaker than the one given in Definition \ref{def:Cont} (cf.\ Theorem
\ref{thm:PiBiEquivInducedFMap}).
\end{remark}

With no restriction on $\mathcal{S}$ and $\mathcal{T}$,
the principle $\Cont_{\mathcal{S},\mathcal{T}}$ is false.
The following examples are not surprising, given the fact
that $\Cont_{\mathcal{S},\mathcal{T}}$ is a kind of completeness principle.
\begin{example}[Non-spatial topologies] \leavevmode
  \begin{enumerate}
    \item Consider a positive topology $\mathcal{S}$ with
      $S = \left\{ * \right\}$, and
    $a \cov U \defeqiv a \elm U$ and $a \pos U \defeqiv \bot$
    for all $a \in S$ and $U \subseteq S$.
    The topology $\mathcal{S}$ has no points,
    i.e.\ $\Pt{\mathcal{S}} = \emptyset$.
    For any positive topology $\mathcal{T}$, if the principle
    $\Cont_{\mathcal{S},\mathcal{T}}$ holds, then any relation $s
    \subseteq S \times T$ is a formal map from $\mathcal{S}$ to
    $\mathcal{T}$. But the empty relation cannot be a formal map since
    it does not satisfy \ref{def:FormalMap1}.

    \item There is a counterexample which is more natural than the
      previous one, the geometric theory of surjective functions
      between sets (cf.\ Fox \cite[Section 4.1.4]{Fox05}).  This
      topology, call it  $\mathcal{S}$, is a positive topology with no
      points, but there exists $a \in S$ such that $a \pos S$.
      Then, the same argument as in the first example leads to a
      contradiction.
  \end{enumerate}
\end{example}

In the light of the counterexamples above, we give some sufficient
conditions on
$\mathcal{S}$ and $\mathcal{T}$ under which
$\Cont_{\mathcal{S}, \mathcal{T}}$ holds. Note that by Theorem
\ref{thm:biSpatImpCont} if $\mathcal{S}$ is bi-spatial
then $\Cont_{\mathcal{S}, \mathcal{T}}$ holds.
For example,
$\Cont_{\mathcal{S},\mathcal{T}}$ holds whenever
$\mathcal{S}$ is representable (cf.\ Proposition
\ref{prop:RepBiSpatial}).

Let $s \subseteq S \times T$ be a relation between the underlying
bases of positive topologies $\mathcal{S}$ and $\mathcal{T}$.
Define relations $\cov_{s}$ and $\pos_{s}$ from $T$ to $\Pow{T}$ as
\begin{align}
  b \cov_{s} V &\defeqiv s^{-}b \cov_{\mathcal{S}} s^{-}V &
  b \pos_{s} V &\defeqiv s^{-}b \pos_{\mathcal{S}} s^{*}V.
  \label{eq:Image}
\end{align}
We write $\im[s]$ for the structure $(T, \cov_{s},  \pos_{s})$.
\begin{lemma}
  \label{lem:Image}
  \leavevmode
  \begin{enumerate}
    \item\label{lem:Image1} $\im[s]$ satisfies all the properties of
      positive topology except ($\downset$-right).

    \item\label{lem:Image2} If $s$ satisfies \ref{def:FormalMap2} and 
      \ref{def:FormalMap3}, then $\im[s]$ is a positive topology.

    \item\label{lem:Image3} If $s$ satisfies \ref{def:FormalMap2}
      and \ref{def:FormalMap3}, then the relation $s$ is a
      formal map from $\mathcal{S}$ to $\im[s]$ if and only if
      $s$ satisfies \ref{def:FormalMap1}.

    \item\label{lem:Image4} 
      If $s$ satisfies \ref{def:FormalMap2} and \ref{def:FormalMap3},
      then the identity relation $\id_{T}$ on $T$ is a formal map from
      $\im[s]$ to $\mathcal{T}$ if and only if $s$ satisfies
      \ref{def:FormalMap4}.
  \end{enumerate}
\end{lemma}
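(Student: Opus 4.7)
The plan is to prove all four parts by repeated unfolding of the definitions of $\cov_{s}$ and $\pos_{s}$ together with the corresponding axioms in $\mathcal{S}$. For part~\ref{lem:Image1}, I would verify reflexivity, transitivity, coreflexivity, cotransitivity, and compatibility one by one. Reflexivity and transitivity of $\cov_{s}$ follow immediately from the same axioms of $\cov_{\mathcal{S}}$ applied to the preimages $s^{-}b$ and $s^{-}V$. For coreflexivity of $\pos_{s}$, unfolding $b \pos_{s} V$ produces some $a \elm s^{-}b$ with $a \pos_{\mathcal{S}} s^{*}V$; coreflexivity of $\pos_{\mathcal{S}}$ yields $a \elm s^{*}V$, i.e.\ $s\{a\} \subseteq V$, and $a \mathrel{s} b$ then gives $b \elm V$. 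For cotransitivity I would apply cotransitivity of $\pos_{\mathcal{S}}$ to this witness $a$, reducing the required premise $a'' \pos_{\mathcal{S}} s^{*}V \imp a'' \elm s^{*}W$ to the hypothesis $c \pos_{s} V \imp c \elm W$ by taking an arbitrary $c \elm s\{a''\}$ and noting $a'' \elm s^{-}c$. Compatibility is analogous: from $b \cov_{s} V$ and $b \pos_{s} W$ with witness $a$, compatibility in $\mathcal{S}$ applied to $a \cov_{\mathcal{S}} s^{-}V$ and $a \pos_{\mathcal{S}} s^{*}W$ yields $s^{-}V \pos_{\mathcal{S}} s^{*}W$, which unfolds to $V \pos_{s} W$.

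The main obstacle, and the only place where \ref{def:FormalMap2} and \ref{def:FormalMap3} of $s$ enter essentially, is $\downset$-right in part~\ref{lem:Image2}. My plan is to bridge $\downset$ in $\im[s]$ through $\downset$ in $\mathcal{T}$: from $b \cov_{s} V$ and $b \cov_{s} V'$, the $\downset$-right axiom of $\mathcal{S}$ gives $s^{-}b \cov_{\mathcal{S}} s^{-}V \downarrow s^{-}V'$; then \ref{def:FormalMap2} of $s$, applied for each $c \elm V$ and $c' \elm V'$, yields $s^{-}V \downarrow s^{-}V' \cov_{\mathcal{S}} s^{-}(V \downarrow_{\mathcal{T}} V')$, so by transitivity $b \cov_{s} V \downarrow_{\mathcal{T}} V'$. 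Finally, \ref{def:FormalMap3} of $s$ implies $d \cov_{\mathcal{T}} \{c\} \imp d \cov_{s} \{c\}$, whence $V \downarrow_{\mathcal{T}} V' \subseteq V \downarrow_{\im[s]} V'$, and combining this inclusion with reflexivity and transitivity of $\cov_{s}$ already established in part~\ref{lem:Image1} concludes $b \cov_{s} V \downarrow_{\im[s]} V'$.

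For parts~\ref{lem:Image3} and~\ref{lem:Image4}, the plan is simply to inspect each of \ref{def:FormalMap1}--\ref{def:FormalMap4} for the two candidate formal maps in turn. For $s \colon \mathcal{S} \to \im[s]$, conditions \ref{def:FormalMap3} and \ref{def:FormalMap4} hold \emph{by the very definitions} of $\cov_{s}$ and $\pos_{s}$, while \ref{def:FormalMap2} reduces to \ref{def:FormalMap2} of $s$ via the inclusion $b \downarrow_{\mathcal{T}} c \subseteq b \downarrow_{\im[s]} c$ used above; only \ref{def:FormalMap1} remains, and it is literally \ref{def:FormalMap1} of $s$. Dually, for $\id_{T} \colon \im[s] \to \mathcal{T}$, condition \ref{def:FormalMap1} is reflexivity of $\cov_{s}$, \ref{def:FormalMap2} follows from $\downset$-right of $\mathcal{S}$ together with \ref{def:FormalMap2} of $s$ (as in the previous paragraph), and \ref{def:FormalMap3} unfolds directly to \ref{def:FormalMap3} of $s$, so the only residual condition is \ref{def:FormalMap4}, which unfolds exactly to \ref{def:FormalMap4} of $s$. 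This yields the two equivalences claimed.
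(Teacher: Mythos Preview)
Your proposal is correct and follows essentially the same approach as the paper: unfolding $\cov_{s}$ and $\pos_{s}$ and reducing to the corresponding axioms of $\mathcal{S}$, with \ref{def:FormalMap2} and \ref{def:FormalMap3} entering only for $\downset$-right via the inclusion $V \downarrow_{\mathcal{T}} V' \subseteq V \downarrow_{\im[s]} V'$. The paper is terser (it calls part~\ref{lem:Image1} ``straightforward'' and spells out only compatibility), but your more detailed verification of coreflexivity and cotransitivity is sound and matches the intended argument.
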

\begin{proof}
\noindent\eqref{lem:Image1}
It is straightforward to show that $\im[s]$
satisfies
all the properties of positive topology except ($\downset$-right)
using the corresponding properties of $\mathcal{S}$.
For example, to see that $\cov_{s}$ and $\pos_{s}$ satisfy
(compatibility), suppose we have $b \cov_{s} V$ and 
$b \pos_{s} W$. Then there exists $a \elm s^{-} b$ such that $a
\pos_{\mathcal{S}} s^{*}W$. Since $a \cov_{\mathcal{S}} s^{-}V$,
we have $ s^{-}V \pos_{\mathcal{S}} s^{*}W$ by
(compatibility) of $\mathcal{S}$. Hence $V \pos_{s} W$.
\smallskip

\noindent\eqref{lem:Image2}
Assume that $s$ satisfies \ref{def:FormalMap2} and \ref{def:FormalMap3}.
By \eqref{lem:Image1},
it suffices to show that  $\im[s]$ satisfies ($\downset$-right).
Suppose that $b \cov_{s} V$ and $b \cov_{s} W$. By ($\downset$-right) 
of $\mathcal{S}$, we have
$s^{-}b \cov_{\mathcal{S}} s^{-}V \downarrow_{\mathcal{S}} s^{-}W$.
Then $s^{-}V \downarrow_{\mathcal{S}} s^{-}W \cov_{\mathcal{S}} s^{-}(V
\downarrow_{\mathcal{T}} W)$ by
\ref{def:FormalMap2}. Hence by \ref{def:FormalMap3} and
(transitivity) of $\cov_{s}$, we have  $b \cov_{s} V
\downarrow_{\im[s]} W$.
   \smallskip

\noindent\eqref{lem:Image3}
Assume that $s$ satisfies \ref{def:FormalMap2} and \ref{def:FormalMap3}.
It is easy to see that  $s$ (as a formal map from $\mathcal{S}$ to
$\im[s]$) satisfies \ref{def:FormalMap2},  \ref{def:FormalMap3}, and
\ref{def:FormalMap4}. Then condition \ref{def:FormalMap1} is equivalent
to that of $s$ (as a formal map from $\mathcal{S}$ to $\mathcal{T}$).
\smallskip

\noindent\eqref{lem:Image4}
Assume that $s$ satisfies \ref{def:FormalMap2} and \ref{def:FormalMap3}.
Then $\id_{T}$ trivially satisfies \ref{def:FormalMap1}.
For \ref{def:FormalMap2}, let $d \elm {\id_{T}}^{-} b
\downarrow_{\im[s]} {\id_{T}}^{-} c$. Then $s^{-} d
\cov_{\mathcal{S}}
s^{-} b$ and $s^{-} d \cov_{\mathcal{S}} s^{-} c$. By 
\ref{def:FormalMap2} of $s$, we have $s^{-}d \cov_{\mathcal{S}}
s^{-}(b \downarrow_{\mathcal{T}} c)$, i.e.\ $d \cov_{s}
{\id_{T}}^{-}(b \downarrow_{\mathcal{T}} c)$. Therefore
${\id_{T}}^{-} b \downarrow_{\im[s]} {\id_{T}}^{-} c
\cov_{s}
{\id_{T}}^{-}(b \downarrow_{\mathcal{T}} c)$.
Lastly, conditions \ref{def:FormalMap3} and
\ref{def:FormalMap4} for $\id_{T}$ is equivalent to the corresponding
conditions of $s$.
\end{proof}

\begin{definition}\label{def:Image}
  The \emph{image} of a formal map $s \colon \mathcal{S} \to
  \mathcal{T}$ is the positive topology $\im[s] = (T, \cov_{s},
  \pos_{s})$.
\end{definition}
As a corollary of Lemma \ref{lem:Image}, we have an image
factorisation.
\begin{proposition}\label{prop:ImageFact}
  Any formal map $s \colon \mathcal{S} \to \mathcal{T}$ between
  positive topologies factors as $s \colon \mathcal{S} \to
  \im[s]$ and $\id_{T} \colon \im[s] \to \mathcal{T}$.
\end{proposition}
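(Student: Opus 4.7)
The proof will be essentially a direct application of Lemma~\ref{lem:Image}, which has already done the substantive work. The plan is to unpack the hypothesis that $s \colon \mathcal{S} \to \mathcal{T}$ is a formal map into its four components (FM1)--(FM4), and then feed these into the three relevant parts of the lemma.

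First, from the assumption that $s$ is a formal map from $\mathcal{S}$ to $\mathcal{T}$, we have in particular that $s$ satisfies \ref{def:FormalMap2} and \ref{def:FormalMap3}. Thus Lemma~\ref{lem:Image}\eqref{lem:Image2} applies and guarantees that $\im[s]$ is a positive topology, so talking about formal maps into and out of it is legitimate. Next, since $s$ also satisfies \ref{def:FormalMap1}, Lemma~\ref{lem:Image}\eqref{lem:Image3} yields that $s$ is a formal map $\mathcal{S} \to \im[s]$. Similarly, since $s$ satisfies \ref{def:FormalMap4}, Lemma~\ref{lem:Image}\eqref{lem:Image4} yields that $\id_{T}$ is a formal map $\im[s] \to \mathcal{T}$.

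It remains to check that the composition $\id_{T} \mathbin{\circ} s$ in $\PTop$ equals the original formal map $s \colon \mathcal{S} \to \mathcal{T}$. Since morphisms in $\PTop$ compose via relational composition, the composite relation is just $s$ itself, and equality of formal maps is defined in \eqref{eq:EqualityPTop} in terms of the cover on $\mathcal{S}$ applied to the inverse image, so the identity of the two maps is immediate. There is no real obstacle here: all the verifications needed for the factorisation have been isolated into Lemma~\ref{lem:Image}, and the present proposition simply repackages them as a factorisation statement.
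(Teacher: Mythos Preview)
Your proof is correct and follows exactly the route the paper intends: the paper merely states that Proposition~\ref{prop:ImageFact} is a corollary of Lemma~\ref{lem:Image}, and your argument unpacks this by invoking parts~\eqref{lem:Image2}, \eqref{lem:Image3}, and \eqref{lem:Image4} under the hypotheses (FM1)--(FM4), then observing that the relational composite $\id_{T} \circ s$ is literally $s$.
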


\begin{proposition}\label{prop:ImagePoPC}
  For any formal map $s \colon \mathcal{S} \to \mathcal{T}$
  and a positive topology $\mathcal{S}'$, the principle
  $\Cont_{\mathcal{S},\mathcal{S}'}$ implies
  $\Cont_{\im[s],\mathcal{S}'}$.
\end{proposition}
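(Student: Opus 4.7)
The plan is to reduce $\Cont_{\im[s],\mathcal{S}'}$ to $\Cont_{\mathcal{S},\mathcal{S}'}$ by pre-composing with the given formal map $s$. Suppose we are given a relation $r \subseteq T \times S'$ such that $\Image{r}$ is a mapping from $\Pt{\im[s]}$ to $\Pt{\mathcal{S}'}$. By Proposition~\ref{prop:ImageFact}, the formal map $s$ factors through $\im[s]$, so in particular $s$ is a formal map from $\mathcal{S}$ to $\im[s]$. Applying Corollary~\ref{cor:FMapImpCont} to this factorisation, $\Image{s}$ is a mapping from $\Pt{\mathcal{S}}$ to $\Pt{\im[s]}$. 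Since $\Image{r \circ s}(\alpha) = \Image{r}(\Image{s}(\alpha))$, the composite relation $r \circ s \subseteq S \times S'$ induces a mapping $\Pt{\mathcal{S}} \to \Pt{\mathcal{S}'}$, and the hypothesis $\Cont_{\mathcal{S},\mathcal{S}'}$ yields that $r \circ s$ is a formal map from $\mathcal{S}$ to $\mathcal{S}'$.

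The remaining task is to verify conditions \ref{def:FormalMap1}--\ref{def:FormalMap4} for $r$ as a relation from $\im[s]$ to $\mathcal{S}'$. The key observation is that, by definition of $\im[s]$, the pre-image operation along $r$ in $\im[s]$ is computed via the pre-image along $r \circ s$ in $\mathcal{S}$: for any $b' \in S'$ and $V' \subseteq S'$ one has
\[
  r^{-} b' \cov_{\im[s]} r^{-} V' \iff (r \circ s)^{-} b' \cov_{\mathcal{S}} (r \circ s)^{-} V',
\]
using $s^{-} \circ r^{-} = (r \circ s)^{-}$, and analogously for $\downarrow$. For the positivity side one uses the elementary identity $s^{*}(r^{*} V') = (r \circ s)^{*}(V')$. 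Under these translations, \ref{def:FormalMap1}--\ref{def:FormalMap3} for $r$ unfold directly to the corresponding conditions for $r \circ s$, which have already been established.

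The step that needs a small extra argument is \ref{def:FormalMap4}. Unfolding $r^{-} b' \pos_{\im[s]} r^{*} V'$ produces some $b \elm r^{-} b'$ with $s^{-} b \pos_{\mathcal{S}} s^{*}(r^{*} V') = (r \circ s)^{*}(V')$; since $b \elm r^{-} b'$ gives $s^{-} b \subseteq (r \circ s)^{-} b'$, monotonicity of $\pos_{\mathcal{S}}$ in its first argument promotes this to $(r \circ s)^{-} b' \pos_{\mathcal{S}} (r \circ s)^{*}(V')$, whereupon \ref{def:FormalMap4} of the formal map $r \circ s$ yields $b' \pos_{\mathcal{S}'} V'$, as required. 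There is no real obstacle beyond this bookkeeping; the proof is essentially a transport of the formal map structure of $r \circ s$ along the image factorisation of $s$.
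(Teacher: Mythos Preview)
Your proof is correct and follows essentially the same route as the paper: reduce to $\Cont_{\mathcal{S},\mathcal{S}'}$ by pre-composing with $s$, then deduce that $r$ is a formal map from $\im[s]$ to $\mathcal{S}'$ from the fact that $r\circ s$ is a formal map from $\mathcal{S}$ to $\mathcal{S}'$. The paper compresses the second step into the remark that ``$s'$ is a formal map from $\im[s]$ to $\mathcal{S}'$ if and only if $s'\circ s$ is a formal map from $\mathcal{S}$ to $\mathcal{S}'$,'' whereas you spell out the needed direction explicitly via the translation $r^{-}b'\cov_{\im[s]}r^{-}V' \Leftrightarrow (r\circ s)^{-}b'\cov_{\mathcal{S}}(r\circ s)^{-}V'$ and the identity $s^{*}(r^{*}V')=(r\circ s)^{*}V'$.
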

\begin{proof}
  Fix a formal map $s \colon \mathcal{S} \to \mathcal{T}$ and
  a positive topology $\mathcal{S}'$.
  It is easy to see that a relation $s' \subseteq T
  \times S'$ is a formal map from $\im[s]$ to $\mathcal{S}'$ if
  and only if the composition $s' \circ s$ is a formal map from
  $\mathcal{S}$ to $\mathcal{S}'$. 
  
  Let $s' \subseteq T \times
  S'$ be a relation such that $\Image{s'}$ is a mapping from
  $\Pt{\im[s]}$ to $\Pt{\mathcal{S}'}$.
  Since $\Image{s}$ is a mapping from $\Pt{\mathcal{S}}$ to
  $\Pt{\im[s]}$, we have that
  $\Image{(s' \circ s)}$ is a mapping from $\Pt{\mathcal{S}}$ to
  $\Pt{\mathcal{S}'}$. Thus $s$ is a formal map from
  $\mathcal{S}$ to $\mathcal{S}'$ by
  $\Cont_{\mathcal{S},\mathcal{S}'}$. Therefore $s'$ is a formal map
  from $\im[s]$ to $\mathcal{S}'$.
\end{proof}

Given a cover $\cov$ on a set $S$, we can impredicatively define a
positivity $\pos_{\cov}$ which is compatible with $\cov$ as follows:
\begin{align} 
\label{imprdefJ} 
  a \pos_{\cov} U 
  \defeqiv
  \left( \exists V \colon \Pow{S} \right)
  \left[  
  a \elm V \subseteq U \amp
  \left( \forall W \colon \Pow{S} \right) V \meets \sat W \imp V \meets W
   \right]
\end{align}
where
  $
  \sat W \defeql \left\{ a \in S \mid a \cov W \right\}.
  $
  It is easy to see that $\pos_{\cov}$ is the 
  greatest positivity
compatible with $\cov$.
If a cover $\cov$ is generated inductively (cf.\ Coquand et al.\
\cite{Coquand200371}), then the positivity $\pos_{\cov}$ can be constructed by
coinduction \cite{Sambin_BP_book}, and hence its construction can be
done predicatively. This is the case for the positive topologies
treated in Section~\ref{sec:ContBaire} and Section~\ref{sec:ContReal}, whose
covers are generated inductively.

\begin{lemma}
  \label{lem:Compatibility}
  Let $\cov$ and $\cov'$ be two covers on a set $S$, and let
  $\pos$ be a positivity on $S$. If $\cov' \subseteq \cov$ and $\pos$ is
  compatible with $\cov$, then $\pos$ is compatible with $\cov'$.
\end{lemma}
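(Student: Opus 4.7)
The plan is to unfold the definition of compatibility and chain the two hypotheses. Compatibility of a positivity $\pos$ with a cover $\cov$ is the rule
\[
  \frac{a \cov U \quad a \pos V}{U \pos V}
\]
for all $a \in S$ and $U, V \subseteq S$. So I only need to verify this rule with $\cov$ replaced by $\cov'$, assuming the rule holds for $\cov$.

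First I would assume $a \cov' U$ and $a \pos V$. Since by hypothesis $\cov' \subseteq \cov$ (as relations from $S$ to $\Pow{S}$), the assumption $a \cov' U$ immediately yields $a \cov U$. Then compatibility of $\pos$ with $\cov$, applied to $a \cov U$ and $a \pos V$, gives $U \pos V$.

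There is essentially no obstacle here: the argument is a two-line direct application of the inclusion hypothesis followed by compatibility. No appeal to the reflexivity/transitivity/coreflexivity/cotransitivity clauses of Definition~\ref{def:PositiveTop} is needed, and nothing about the specific way in which $\cov$ or $\cov'$ are generated (e.g.\ inductively) enters the argument.
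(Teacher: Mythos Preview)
Your argument is correct and is precisely what the paper has in mind: the paper's own proof simply reads ``Obvious from the definition of compatibility,'' and what you wrote is the natural unfolding of that remark.
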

\begin{proof}
  Obvious from the definition of compatibility.
\end{proof}

\begin{corollary}\label{col:MaxCoDomFormalMap}
  Let $\mathcal{S}$ and $\mathcal{T}$ be positive topologies,
  and  $s \subseteq S \times T$ be a relation which satisfies 
  \ref{def:FormalMap1}, \ref{def:FormalMap2}, and \ref{def:FormalMap3}.
  If $\pos_{\mathcal{T}}$ is the 
  greatest positivity compatible with
  $\cov_{\mathcal{T}}$, then $s$ is a formal map from $\mathcal{S}$ to
  $\mathcal{T}$.
\end{corollary}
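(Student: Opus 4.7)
The plan is to verify only condition \ref{def:FormalMap4}, since \ref{def:FormalMap1}, \ref{def:FormalMap2}, and \ref{def:FormalMap3} hold by hypothesis. The approach exploits the image construction $\im[s] = (T, \cov_s, \pos_s)$ together with the maximality of $\pos_{\mathcal{T}}$.

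First, I would apply Lemma \ref{lem:Image}\eqref{lem:Image2}: since $s$ satisfies \ref{def:FormalMap2} and \ref{def:FormalMap3}, the triple $\im[s]$ is a positive topology. In particular, the relation $\pos_s$ is a positivity on $T$ compatible with the cover $\cov_s$.

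Next, I would observe that condition \ref{def:FormalMap3} says exactly that $\cov_{\mathcal{T}} \subseteq \cov_s$. Indeed, unfolding \eqref{eq:Image}, $b \cov_{\mathcal{T}} V$ implies $s^{-}b \cov_{\mathcal{S}} s^{-}V$, which is precisely $b \cov_s V$. Applying Lemma \ref{lem:Compatibility} to the two covers $\cov_{\mathcal{T}} \subseteq \cov_s$ and to the positivity $\pos_s$ (which is compatible with $\cov_s$), I conclude that $\pos_s$ is also compatible with $\cov_{\mathcal{T}}$.

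Finally, by the hypothesis that $\pos_{\mathcal{T}}$ is the greatest positivity compatible with $\cov_{\mathcal{T}}$, we get $\pos_s \subseteq \pos_{\mathcal{T}}$. Unfolding the definition of $\pos_s$ in \eqref{eq:Image}, this inclusion is precisely the statement $s^{-}b \pos_{\mathcal{S}} s^{*}V \imp b \pos_{\mathcal{T}} V$, i.e.\ condition \ref{def:FormalMap4}. Thus $s$ is a formal map. I do not expect any serious obstacle: once one recognises that the image $\im[s]$ supplies a ``canonical'' positivity on $T$ compatible with a cover that contains $\cov_{\mathcal{T}}$, everything reduces to the maximality assumption via the elementary Lemma \ref{lem:Compatibility}.
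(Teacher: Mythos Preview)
Your proposal is correct and follows essentially the same route as the paper: reduce \ref{def:FormalMap4} to the inclusion $\pos_{s}\subseteq\pos_{\mathcal{T}}$, obtain $\cov_{\mathcal{T}}\subseteq\cov_{s}$ from \ref{def:FormalMap3}, apply Lemma~\ref{lem:Compatibility} to see that $\pos_{s}$ is compatible with $\cov_{\mathcal{T}}$, and conclude by maximality. The only difference is cosmetic: you explicitly invoke Lemma~\ref{lem:Image}\eqref{lem:Image2} to ensure $\pos_{s}$ is a positivity compatible with $\cov_{s}$, whereas the paper leaves this implicit (in fact Lemma~\ref{lem:Image}\eqref{lem:Image1} already suffices for that, since the positivity axioms and compatibility do not involve $\downarrow$-right).
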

\begin{proof}
  Condition \ref{def:FormalMap4}
  is equivalent to saying that
  the positivity $\pos_s$ is smaller than
  $\pos_{\mathcal{T}}$. By \ref{def:FormalMap3}, we have
  $\cov_{\mathcal{T}} \subseteq \cov_{s}$. Thus
  $\pos_{s}$ is compatible with $\cov_{\mathcal{T}}$ by Lemma
  \ref{lem:Compatibility}. Therefore $\pos_{s} \subseteq \pos_{\mathcal{T}}$.
\end{proof}

\begin{proposition}\label{prop:SpatImpCont}
  Let $\mathcal{S}$ and $\mathcal{T}$ be positive topologies.
  If $\mathcal{S}$ is spatial and
  $\mathcal{T}$ has the greatest
  positivity compatible with its cover,
  then $\Cont_{\mathcal{S},\mathcal{T}}$ holds.
\end{proposition}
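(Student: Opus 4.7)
The plan is to reduce the statement to Corollary~\ref{col:MaxCoDomFormalMap} by using spatiality of $\mathcal{S}$ to transfer conditions \ref{def:FormalMap1}--\ref{def:FormalMap3} from the large cover $\cov_{\Ip}$ down to $\cov_{\mathcal{S}}$, and then letting the maximality hypothesis on $\pos_{\mathcal{T}}$ take care of \ref{def:FormalMap4} for free.

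More concretely, I start from a relation $s \subseteq S \times T$ such that $\Image{s}(\alpha) \colon \Pt{\mathcal{T}}$ for every $\alpha \colon \Pt{\mathcal{S}}$. By the equivalence (\ref{lam:EquiContOp1} $\Leftrightarrow$ \ref{lam:EquiContOp4}) of Proposition~\ref{prop:EquiContOp}, this is precisely the statement that $s$ is a formal map from $\mathcal{S}_{\Ip}$ to $\mathcal{T}$. In particular $s$ satisfies
\[
  S \cov_{\Ip} s^{-}T, \qquad
  s^{-}b \downarrow s^{-}c \cov_{\Ip} s^{-}(b \downarrow c), \qquad
  b \cov_{\mathcal{T}} V \imp s^{-}b \cov_{\Ip} s^{-}V,
\]
for all $b,c \in T$ and $V \subseteq T$. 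Spatiality of $\mathcal{S}$ (i.e.\ $\cov_{\Ip} \subseteq \cov_{\mathcal{S}}$) then turns each of these assertions into the corresponding statement with $\cov_{\mathcal{S}}$ in place of $\cov_{\Ip}$, so $s$ verifies \ref{def:FormalMap1}, \ref{def:FormalMap2} and \ref{def:FormalMap3} as a candidate formal map from $\mathcal{S}$ to $\mathcal{T}$.

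At this stage I invoke Corollary~\ref{col:MaxCoDomFormalMap}: since $\pos_{\mathcal{T}}$ is assumed to be the greatest positivity compatible with $\cov_{\mathcal{T}}$, the conditions \ref{def:FormalMap1}--\ref{def:FormalMap3} already suffice to make $s$ a formal map $s \colon \mathcal{S} \to \mathcal{T}$. This is what $\Cont_{\mathcal{S},\mathcal{T}}$ demands.

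I don't foresee any serious obstacle, since every ingredient has been set up in the preceding lemmas; the only delicate point is that only the cover part of bi-spatiality is available. Without reducibility one cannot directly move \ref{def:FormalMap4} from $\pos_{\Ip}$ down to $\pos_{\mathcal{S}}$, but Corollary~\ref{col:MaxCoDomFormalMap} circumvents this issue by exploiting maximality of $\pos_{\mathcal{T}}$, which is exactly why this hypothesis appears in the statement.
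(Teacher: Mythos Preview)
Your argument is correct and is exactly the route the paper takes: the paper's proof simply says ``Immediate from Corollary~\ref{col:MaxCoDomFormalMap} and Proposition~\ref{prop:EquiContOp}'', and you have merely unpacked these two references. The only point worth noting explicitly is that under spatiality one has $\cov_{\Ip} = \cov_{\mathcal{S}}$, so the $\downarrow$ operation computed in $\mathcal{S}_{\Ip}$ coincides with that in $\mathcal{S}$, which is needed for the transfer of \ref{def:FormalMap2}; this is implicit in your reasoning and causes no trouble.
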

\begin{proof}
  Immediate from Corollary \ref{col:MaxCoDomFormalMap} and 
  Proposition \ref{prop:EquiContOp}.
\end{proof}
Most of the positive topologies which arise in practice have
greatest
compatible positivities, so the assumption on the topology
$\mathcal{T}$ in Proposition \ref{prop:SpatImpCont} is often
satisfied. Thus, it is the spatiality of $\mathcal{S}$
that is crucial for the continuity principle
$\Cont_{\mathcal{S},\mathcal{T}}$ to hold.

At this point, it is natural to ask whether (bi-)spatiality of
$\mathcal{S}$ is actually necessary for the principle
$\Cont_{\mathcal{S},\mathcal{T}}$ to hold. In the following sections, we
answer this question in some specific cases which occupy a central
place in Brouwer's intuitionism.

\section{Continuity on the Baire space}\label{sec:ContBaire}
The continuity principle for the Baire space deserves special attention.
Since this is the context in which Brouwer introduced his principle of
continuity, it is of our particular interest to see in what sense our
continuity principle is related to the principles of intuitionism.

\subsection{Formal Baire space}\label{sec:Baire}
We recall the pointfree definition of the Baire space,
whose ideal points can
be considered as free choice sequences.
  Let $\FSeq$ denote 
   the set of finite sequences of natural numbers.
  We write $a * b$ for the concatenation of finite sequences $a$
  and $b$. By an abuse of notation, we write $a * n$ for the
  concatenation of a finite sequence $a$ and the singleton sequence of
  $n \in \Nat$.
  The order $\leq_{\FBaire}$ on $\FSeq$ is defined by the reverse prefix
  ordering:
  \[
    a \leq_{\FBaire} b \defeqiv \text{$b$ is an initial segment of $a$.}
  \]
  In particular, we have $a * n \leq_{\FBaire} a$.

\begin{definition}  The \emph{formal Baire space} is a positive topology  $\FBaire = (\FSeq,
  \cov_{\FBaire}, \pos_{\FBaire})$ where the cover $\cov_{\FBaire}$
  is inductively generated by the following rules
  \begin{gather*}
    \frac{a  \elm U}{a \cov_{\FBaire} U}\,(\eta) \qquad
    \frac{a \leq_{\FBaire} b \cov_{\FBaire}U}{a \cov_{\FBaire}
  U}\,(\zeta) \qquad
  \frac{\left( \forall n \in \Nat \right) a * n \cov_{\FBaire}
U}{a \cov_{\FBaire} U}\,(\digamma)
\end{gather*}
and $\pos_{\FBaire}$ is the greatest positivity compatible with
$\cov_{\FBaire}$. 
\end{definition}

We recall some well-known properties of $\FBaire$ that
we shall use in this section.
Let $\downset_{\FBaire}{U}$ denote the
downward closure of a subset $U \subseteq \FSeq$ with respect to
$\leq_{\FBaire}$:
\begin{equation*}
  \downset_{\FBaire} {U} \defeql \left\{ a \in \FSeq \mid \left(
    \exists b \elm U \right) a \leq_{\FBaire} b \right\}.
  \end{equation*}
The $\zeta$-inference can be eliminated in the
following sense; see e.g.\ Troelstra and {van Dalen} \cite[Chapter 4, Exercise 4.8.10]{ConstMathI}.
\begin{lemma} \label{lem:ElimZeta}
  Let $\bcov_{\FBaire}$ be the relation from $\FSeq$ to $\Pow{\FSeq}$
  inductively defined by $\eta$ and $\digamma$-rules. Then
  \[
    a \cov_{\FBaire} U \leftrightarrow a \bcov_{\FBaire}
    \downset_{\FBaire}{U}
  \]
  for all $a \in \FSeq$ and $U \subseteq \FSeq$.
\end{lemma}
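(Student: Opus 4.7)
The plan is to prove the two directions of the biconditional separately. The direction $(\Leftarrow)$ is routine: first, every $b \elm \downset_{\FBaire} U$ satisfies $b \cov_{\FBaire} U$, because by definition $b$ has a witness $c \elm U$ with $b \leq_{\FBaire} c$, whence $\eta$ gives $c \cov_{\FBaire} U$ and a single $\zeta$-step yields $b \cov_{\FBaire} U$. Second, $\bcov_{\FBaire} \subseteq \cov_{\FBaire}$ follows by a straightforward induction on $\bcov_{\FBaire}$-derivations, since $\cov_{\FBaire}$ itself satisfies $\eta$ and $\digamma$. Combining these two observations with the transitivity of $\cov_{\FBaire}$ gives $a \cov_{\FBaire} U$ from $a \bcov_{\FBaire} \downset_{\FBaire} U$.

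The direction $(\Rightarrow)$ proceeds by induction on the derivation of $a \cov_{\FBaire} U$. The $\eta$-case follows from $U \subseteq \downset_{\FBaire} U$ plus one application of $\eta$ for $\bcov_{\FBaire}$; the $\digamma$-case is immediate by applying $\digamma$ for $\bcov_{\FBaire}$ to the induction hypotheses at each $a * n$. The $\zeta$-case is the crux and relies on an auxiliary monotonicity claim: \emph{if $V \subseteq \FSeq$ is downward-closed with respect to $\leq_{\FBaire}$, $b \bcov_{\FBaire} V$, and $a \leq_{\FBaire} b$, then $a \bcov_{\FBaire} V$}. I would prove this by a secondary induction on the derivation of $b \bcov_{\FBaire} V$. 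The $\eta$-base gives $b \elm V$, and downward-closedness together with $a \leq_{\FBaire} b$ yields $a \elm V$, hence $a \bcov_{\FBaire} V$ by $\eta$. For the $\digamma$-step with hypothesis $\left( \forall n \in \Nat \right) b * n \bcov_{\FBaire} V$, write $a = b * c$ and split on whether $c = \nil$ (then $a = b$ and one application of $\digamma$ to the hypothesis concludes) or $c = n * c'$ (then $a \leq_{\FBaire} b * n$, and the inner induction hypothesis at $b * n$ delivers $a \bcov_{\FBaire} V$).

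To close the $\zeta$-case of the main induction, one observes that $\downset_{\FBaire} U$ is $\leq_{\FBaire}$-downward-closed by transitivity of $\leq_{\FBaire}$, so the auxiliary lemma applied to the inductive hypothesis $b \bcov_{\FBaire} \downset_{\FBaire} U$ yields $a \bcov_{\FBaire} \downset_{\FBaire} U$ whenever $a \leq_{\FBaire} b$. The main obstacle is formulating and proving this monotonicity lemma with the correct parameter (namely, arbitrary downward-closed $V$ rather than just $\downset_{\FBaire} U$), so that the induction on $\bcov_{\FBaire}$-derivations goes through without the original $U$ needing to be reintroduced; the case split on whether the extending tail is empty presents no constructive issue since it amounts to a decidable comparison of lengths of finite sequences.
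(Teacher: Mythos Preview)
Your proposal is correct and follows the same approach the paper indicates: the paper's proof reads in full ``By induction on $\cov_{\FBaire}$ and $\bcov_{\FBaire}$'', and your argument is a faithful (and carefully detailed) unpacking of exactly that double induction, including the key monotonicity sublemma for $\bcov_{\FBaire}$ on downward-closed targets needed to discharge the $\zeta$-case.
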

\begin{proof}
  By induction on $\cov_{\FBaire}$ and $\bcov_{\FBaire}$.
\end{proof}
Using Lemma \ref{lem:ElimZeta}, it is straightforward to show
\begin{equation}\label{eq:ConvOrd}
  a \leq_{\FBaire} b  \leftrightarrow a \cov_{\mathcal{B}} \left\{ b \right\}
\end{equation}
for all $a,b \in \FSeq$. From Lemma \ref{lem:ElimZeta} and
\eqref{eq:ConvOrd}, we can obtain the following elementary characterisation of
an ideal point of $\FBaire$.
\begin{proposition}\label{col:CSasIPt}
  A subset $\alpha \subseteq \FSeq$ is an ideal point of $\FBaire$ if and only if
\begin{enumerate}
  \item $\alpha \meets \alpha$;
  \item $a, b \elm \alpha \imp 
    \alpha \meets (a \downarrow b)$ for all
    $a,b \in \FSeq$;
  \item $\cons{a}{x} \elm \alpha \imp  a \elm \alpha$ for all
    $a \in \FSeq$ and $x \in \Nat$;
  \item\label{col:CSasIPt4} $a \elm \alpha \imp  \left( \exists x \in \Nat \right)
    \cons{a}{x} \elm \alpha$ for all
    $a \in \FSeq$.
\end{enumerate}
\end{proposition}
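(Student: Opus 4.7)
My plan is to prove the two directions separately, using Lemma~\ref{lem:ElimZeta}, equation~\eqref{eq:ConvOrd}, and the impredicative description~\eqref{imprdefJ} of the greatest positivity compatible with a cover.

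For the forward direction, conditions (1) and (2) are literal restatements of inhabitedness and filteringness of an ideal point. For (3), since $\cons{a}{x} \leq_{\FBaire} a$, equation~\eqref{eq:ConvOrd} gives $\cons{a}{x} \cov_{\FBaire} \{a\}$; then $\cons{a}{x} \elm \alpha$ together with the splitting property of $\alpha$ yields $a \elm \alpha$. For (4), given $a \elm \alpha$, let $U = \{\cons{a}{n} \mid n \in \Nat\}$; applying $(\eta)$ to each $a * n \elm U$ and then $(\digamma)$ gives $a \cov_{\FBaire} U$, and splitting then produces an $x$ with $\cons{a}{x} \elm \alpha$.

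For the backward direction, I need to verify that a subset $\alpha$ satisfying (1)--(4) both splits $\cov_{\FBaire}$ and enters $\pos_{\FBaire}$. To show splitting, I first use Lemma~\ref{lem:ElimZeta} to replace $\cov_{\FBaire}$ by the $\zeta$-free $\bcov_{\FBaire}$, reducing the claim to: $a \bcov_{\FBaire} V \amp a \elm \alpha \imp \alpha \meets V$. I then induct on the generation of $\bcov_{\FBaire}$. The $(\eta)$-case is immediate from $a \elm V \cap \alpha$. In the $(\digamma)$-case, condition (4) supplies an $x$ with $\cons{a}{x} \elm \alpha$, and the inductive hypothesis at $\cons{a}{x}$ gives $\alpha \meets V$. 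Applying this with $V = \downset_{\FBaire} U$ yields $\alpha \meets \downset_{\FBaire} U$; for any witness $b \elm \alpha$ with $b \leq_{\FBaire} c \elm U$, writing $b = c * d$ and iterating condition (3) along the finitely many coordinates of $d$ promotes the intersection to $\alpha \meets U$.

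For the ``enters $\pos_{\FBaire}$'' clause, recall that $\pos_{\FBaire}$ is the greatest positivity compatible with $\cov_{\FBaire}$, so it is characterised impredicatively by~\eqref{imprdefJ}. Given $a \elm \alpha \subseteq V$, I take the witnessing subset in~\eqref{imprdefJ} to be $\alpha$ itself: the inclusion $a \elm \alpha \subseteq V$ is immediate, and the auxiliary clause ``$\alpha \meets \sat W \imp \alpha \meets W$'' is precisely a reformulation of the splitting property just established. The only delicate point is the $(\digamma)$-case of the splitting induction, where one must remember first to apply condition (4) to enter an extension before invoking the inductive hypothesis, and then climb back up via iterated condition (3); the rest of the argument is direct unfolding of the definitions.
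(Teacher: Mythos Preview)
Your proof is correct and follows exactly the route the paper indicates (it merely says the result follows from Lemma~\ref{lem:ElimZeta} and~\eqref{eq:ConvOrd} without spelling out details), so you have faithfully filled in what the paper leaves implicit. One small remark: for the ``enters $\pos_{\FBaire}$'' clause you invoke the impredicative description~\eqref{imprdefJ}, which the paper flags as such; this is harmless here because your witness $V' = \alpha$ is a concrete subset and the argument amounts to exhibiting a specific positivity (namely $b \pos' W \defeqiv b \elm \alpha \amp \alpha \subseteq W$) compatible with $\cov_{\FBaire}$, whence $a \pos_{\FBaire} V$ follows predicatively from $\pos_{\FBaire}$ being greatest.
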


The above characterisation
provides us with a geometric intuition of an ideal point of
$\FBaire$ as an infinitely proceeding sequence 
\[
  \nil, x_{0},  x_{0} * x_{1},  x_{0} * x_{1} * x_{2}, \cdots
\]
of one-step extensions of elements of $\FSeq$ starting from the empty
sequence $\nil$; at each
``stage'' $x_{0} * \cdots * x_{n}$ of this sequence, the next step
$x_{0} * \cdots * x_{n} * x$ is constructed by choosing an arbitrary
element $x \in \Nat$ without any restriction (cf.\ condition
\eqref{col:CSasIPt4} of Proposition \ref{col:CSasIPt}).
Thus, the notion of ideal point of $\FBaire$ can be considered as
a possible manifestation of free choice sequences \cite[Chapter 3]{DummettElemInt}. This led us
to identify choice sequences with ideal points of
$\FBaire$.
\begin{definition}
A \emph{choice sequence} is an ideal point of $\FBaire$.
\end{definition}

We here leave the reader to check that in the Minimalist Foundation 
(in particular, without assuming any form of axiom of choice) one can prove that:
\begin{theorem}
There is a bijective correspondence between choice sequences, i.e.\ ideal points of $\FBaire$,
and functions from 
$\Nat$ to $\Nat$, i.e.\ total and single-valued relations from $\Nat$
to $\Nat$.
\end{theorem}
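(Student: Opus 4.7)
The plan is to exhibit mutually inverse assignments between ideal points of $\FBaire$ and functions $\Nat \to \Nat$, relying on the elementary characterisation from Proposition~\ref{col:CSasIPt} and the equivalence \eqref{eq:ConvOrd}.

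For an ideal point $\alpha$, I would first prove a uniqueness lemma: if $a, b \elm \alpha$ have $\lh{a} = \lh{b}$, then $a = b$. The filtering condition supplies some $c \elm \alpha$ with $c \elm a \downarrow b$, i.e.\ $c \cov_{\FBaire} \{a\}$ and $c \cov_{\FBaire} \{b\}$; by \eqref{eq:ConvOrd} this means $a, b \leq_{\FBaire} c$, so both $a$ and $b$ are initial segments of a common sequence $c$, forcing $a = b$ since they have the same length. Existence of some $a_n \elm \alpha$ of each length $n$ follows by induction on $n$: $\nil \elm \alpha$ because $\alpha$ is inhabited and condition~3 of Proposition~\ref{col:CSasIPt} gives downward closure under one-step prefix; and given $a_n \elm \alpha$, condition~4 yields some $x \in \Nat$ with $\cons{a_n}{x} \elm \alpha$. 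Now define $f_\alpha \colon \Nat \to \Nat$ by taking $f_\alpha(n)$ to be the unique $x$ such that $\cons{a_n}{x} \elm \alpha$; the uniqueness lemma ensures that $f_\alpha$ is a bona fide total single-valued relation, with no appeal to choice since $x$ is uniquely determined at each stage.

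Conversely, for a function $f \colon \Nat \to \Nat$, set $\alpha_f \defeql \{\seq{f(0), \ldots, f(n-1)} \mid n \in \Nat\}$ (the case $n = 0$ giving $\nil$). Verifying the four clauses of Proposition~\ref{col:CSasIPt} is routine: $\alpha_f$ contains $\nil$; any two members of $\alpha_f$ are prefix-comparable, so the longer is a common extension and, by \eqref{eq:ConvOrd}, witnesses filtering; condition~3 is immediate from the definition of $\alpha_f$; and condition~4 is witnessed by $x = f(n)$ applied to $a = \seq{f(0), \ldots, f(n-1)}$.

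Finally, the two assignments are mutually inverse. The equality $f_{\alpha_f}(n) = f(n)$ is immediate from the construction. The equality $\alpha_{f_\alpha} = \alpha$ follows because, by the uniqueness lemma, both subsets consist precisely of the sequences $a_0, a_1, a_2, \ldots$ constructed above. The main obstacle I anticipate is the uniqueness step for $f_\alpha$: the filtering condition is phrased abstractly in terms of $\downarrow$, and one must invoke \eqref{eq:ConvOrd} to unfold it into the concrete statement that two elements of $\alpha$ of equal length have a common extension. Once this translation is in place, the rest is essentially bookkeeping, and it is crucial that the argument uses only Proposition~\ref{col:CSasIPt} and basic properties of $\leq_{\FBaire}$ — no form of choice is required.
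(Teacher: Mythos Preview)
Your argument is correct and fills in exactly the kind of detail the paper omits: the paper does not give a proof here, but simply writes ``We here leave the reader to check'' before the statement. Your decomposition via the uniqueness lemma (at each length, $\alpha$ contains at most one sequence) together with the inductive existence from conditions~3 and~4 of Proposition~\ref{col:CSasIPt} is the natural route, and your care in noting that $f_\alpha(n)$ is \emph{uniquely} determined is precisely what makes the construction choice-free, in line with the paper's parenthetical remark about the Minimalist Foundation.

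One small slip: after invoking \eqref{eq:ConvOrd} you write ``$a, b \leq_{\FBaire} c$'', but in fact $c \cov_{\FBaire} \{a\}$ gives $c \leq_{\FBaire} a$ (and likewise for $b$), since $\leq_{\FBaire}$ is the reverse prefix order. Your subsequent sentence --- that $a$ and $b$ are initial segments of the common sequence $c$ --- is nevertheless the correct conclusion, so this is only a typographical reversal and does not affect the argument.
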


\subsection{Continuity principles for the Baire space}\label{sec:ContBN}
The principle of bar induction and the principle of continuity for functions from
choice sequences to the natural numbers are closely related.
In particular, Brouwer introduced bar induction in his analysis of the
computation tree of such a function.  Hence, we focus on the
continuity principle
between the Baire space and the natural numbers in the pointfree
setting, and study its connection to various forms of bar induction.
The standard reference of bar induction is Troelstra and van Dalen
\cite[Chapter 4, Section 8]{ConstMathI}.

\begin{definition}
For every set $S$, we define the \emph{discrete positive topology} 
on $S$ to be a positive topology
$\Discrete{S} = (S, \cov_{\Discrete{S}}, \pos_{\Discrete{S}})$
where cover and positivity are defined by
\begin{align*}
  a \cov_{\Discrete{S}} U &\defeqiv a \elm U, &
  a \pos_{\Discrete{S}} U &\defeqiv a \elm U.
\end{align*}
It is easy to see that $\pos_{\Discrete{S}}$ is the 
greatest positivity compatible with
$\cov_{\Discrete{S}}$, and that ideal points of $\Discrete{S}$ are
singletons of $S$.
The \emph{formal topology of natural numbers} is the discrete positive topology
$\Discrete{\Nat}$ on the set $\Nat$ of natural numbers,
which we denote by
$\FNat = \left(\Nat, \cov_{\FNat}, \pos_{\FNat} \right)$.
\end{definition}
A relation $s \subseteq X \times Y$ from a poset $(X,\leq)$ to a set
$Y$ is said to be \emph{monotone} if 
\[
  \downarrow s^{-} y = s^{-}y
\]
for all $y \in Y$. The \emph{monotonisation} of a relation $s
\subseteq X \times Y$ is the composition $s \mathop{\circ} \leq$, which
is obviously monotone.
\begin{lemma}\label{lem:Monotonisation}
Every formal map $s
\colon \FBaire \to \FNat$ (or $s \colon \FBaire_{\Ip} \to \FNat$) is
equal to its monotonisation with respect to the order $\leq_{\FBaire}$.
\end{lemma}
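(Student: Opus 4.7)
The plan is to verify the equality of formal maps directly via \eqref{eq:EqualityPTop}. Writing $s' \defeql s \mathop{\circ} \leq_{\FBaire}$ for the monotonisation, unfolding the definition immediately gives
\begin{equation*}
s'^{-}\{n\} = \downset_{\FBaire} s^{-}\{n\}
\end{equation*}
for each $n \in \Nat$. Hence the whole claim reduces to showing, for every $a \in \FSeq$ and $n \in \Nat$,
\begin{equation*}
a \cov s^{-}\{n\} \;\leftrightarrow\; a \cov \downset_{\FBaire} s^{-}\{n\},
\end{equation*}
where $\cov$ is either $\cov_{\FBaire}$ or $\cov_{\Ip}$ of $\FBaire$ depending on the case.

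The forward implication is trivial from the inclusion $s^{-}\{n\} \subseteq \downset_{\FBaire} s^{-}\{n\}$ together with reflexivity of $\cov$. For the converse, transitivity of the cover reduces the task to the single step
\begin{equation*}
\downset_{\FBaire} s^{-}\{n\} \cov s^{-}\{n\},
\end{equation*}
which is the real content of the lemma. It expresses a structural absorption property of the prefix order by the (large or small) cover on $\FBaire$, and crucially does not use any specific feature of the formal map $s$.

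For $\cov_{\FBaire}$, I would invoke the $\zeta$-rule directly: given $b \elm \downset_{\FBaire} s^{-}\{n\}$, pick $c \elm s^{-}\{n\}$ with $b \leq_{\FBaire} c$; then $c \cov_{\FBaire} s^{-}\{n\}$ by $\eta$, and applying $\zeta$ yields $b \cov_{\FBaire} s^{-}\{n\}$. For $\cov_{\Ip}$, I would instead argue through the characterisation of choice sequences in Proposition~\ref{col:CSasIPt}: fix $b \elm \downset_{\FBaire} s^{-}\{n\}$ and an ideal point $\alpha$ with $b \elm \alpha$; then the witness $c \elm s^{-}\{n\}$ is an initial segment of $b$, so iterated application of clause (3) (closure under initial segments) gives $c \elm \alpha$, hence $\alpha \meets s^{-}\{n\}$.

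I do not anticipate a real obstacle. The only point requiring care is the orientation convention: $a \leq_{\FBaire} b$ means that $b$ is an initial segment of $a$, so $\downset_{\FBaire} U$ is the set of \emph{extensions} of elements of $U$, which is exactly what the $\zeta$-rule (resp.\ closure of ideal points under initial segments) is tailored to absorb.
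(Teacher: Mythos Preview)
Your proof is correct and follows the same idea as the paper's one-line argument, namely reducing the equality condition~\eqref{eq:EqualityPTop} to the absorption $\downset_{\FBaire} U \cov U$ (which is what the paper's citation of Lemma~\ref{lem:ElimZeta} encodes). Two small remarks: your treatment of $\cov_{\Ip}$ via Proposition~\ref{col:CSasIPt} works but is heavier than needed, since $\cov_{\FBaire} \subseteq \cov_{\Ip}$ already carries the $\zeta$-argument over; and the paper's additional citation of Corollary~\ref{col:MaxCoDomFormalMap} is there to certify that the monotonisation $s'$ is itself a formal map (disposing of (FM4) via the greatest positivity on $\FNat$), a point you leave unaddressed.
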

\begin{proof}
  Immediate from Lemma \ref{lem:ElimZeta} and Corollary \ref{col:MaxCoDomFormalMap}.
\end{proof}

\begin{lemma}\label{lem:pFunction}
For every monotone relation $s \subseteq \FSeq \times \Nat$,
the following are equivalent:
  \begin{enumerate}
    \item\label{lem:pFunction1} $s$ is single valued;
    \item\label{lem:pFunction3} $s^{-} n \downarrow s^{-}m \cov_{\FBaire}
      s^{-} ( n  \downarrow  m )$ for
      all $n,m \in \Nat$;
    \item\label{lem:pFunction4} $s^{-} n \downarrow_{\Ip} s^{-}m
      \cov_{\FBaire_{\Ip}} s^{-} ( n  \downarrow  m )$
      for all $n,m \in \Nat$.
  \end{enumerate}
\end{lemma}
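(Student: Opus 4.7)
The plan is to establish $(\ref{lem:pFunction1}) \Leftrightarrow (\ref{lem:pFunction3})$ and $(\ref{lem:pFunction1}) \Leftrightarrow (\ref{lem:pFunction4})$ in parallel, exploiting the discrete structure of $\FNat$. The key arithmetic fact is that $n \downarrow m$ in $\FNat$ equals $\{n\}$ when $n = m$ and is empty otherwise, so $s^{-}(n \downarrow m)$ coincides with $s^{-}n$ in the first case and is empty in the second; the covering assertions in \eqref{lem:pFunction3} and \eqref{lem:pFunction4} therefore essentially encode single-valuedness of $s$.

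For $(\ref{lem:pFunction1}) \Rightarrow (\ref{lem:pFunction3})$, I would take $a \elm s^{-}n \downarrow s^{-}m$, unfold to obtain $b \elm s^{-}n$ and $c \elm s^{-}m$ with $a \cov_{\FBaire} \{b\}$ and $a \cov_{\FBaire} \{c\}$, and use \eqref{eq:ConvOrd} to deduce $a \leq_{\FBaire} b$ and $a \leq_{\FBaire} c$. Monotonicity of $s$ then gives $a \mathrel{s} n$ and $a \mathrel{s} m$; single-valuedness forces $n = m$, so $a \elm s^{-}(n \downarrow m)$ and the cover follows by reflexivity. The implication $(\ref{lem:pFunction1}) \Rightarrow (\ref{lem:pFunction4})$ proceeds similarly, but since \eqref{eq:ConvOrd} is not directly available for $\cov_{\FBaire_{\Ip}}$, I would unfold the goal via ideal points: given $a \elm s^{-}n \downarrow_{\Ip} s^{-}m$ with witnesses $b, c$ satisfying $a \cov_{\FBaire_{\Ip}} \{b\}$ and $a \cov_{\FBaire_{\Ip}} \{c\}$, take an arbitrary $\alpha \colon \Pt{\FBaire}$ with $a \elm \alpha$; then $b, c \elm \alpha$, and by the filtering property there is $d \elm \alpha \cap (b \downarrow c)$, yielding $d \leq_{\FBaire} b$ and $d \leq_{\FBaire} c$. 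Monotonicity gives $d \mathrel{s} n$ and $d \mathrel{s} m$; single-valuedness gives $n = m$, so $d \elm s^{-}(n \downarrow m)$, and hence $\alpha \meets s^{-}(n \downarrow m)$.

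For the converses $(\ref{lem:pFunction3}) \Rightarrow (\ref{lem:pFunction1})$ and $(\ref{lem:pFunction4}) \Rightarrow (\ref{lem:pFunction1})$, I would run one uniform argument: suppose $a \mathrel{s} n$ and $a \mathrel{s} m$. Reflexivity gives $a \elm a \downarrow a \subseteq s^{-}n \downarrow s^{-}m \subseteq s^{-}n \downarrow_{\Ip} s^{-}m$, so the relevant hypothesis produces a cover of $a$ by $s^{-}(n \downarrow m)$ in either $\cov_{\FBaire}$ or $\cov_{\FBaire_{\Ip}}$; the former is contained in the latter. I would then construct an ideal point $\alpha$ through $a$ explicitly---the set of initial segments of the infinite sequence obtained by padding $a$ with zeros, verified to be a choice sequence via Proposition~\ref{col:CSasIPt}---and use that $\alpha$ splits the cover (for $\cov_{\FBaire_{\Ip}}$ this is immediate from the definition of $\cov_{\Ip}$) to produce some $b \elm \alpha$ with $b \mathrel{s} k$ for some $k \elm n \downarrow m$. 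Inhabitedness of $n \downarrow m$ in $\FNat$ then forces $n = m$.

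The main subtle step is $(\ref{lem:pFunction1}) \Rightarrow (\ref{lem:pFunction4})$: the large cover $\cov_{\FBaire_{\Ip}}$ lacks a direct inductive order characterisation analogous to \eqref{eq:ConvOrd}, so I must reason externally through ideal points and invoke the filtering property to descend from $b$ and $c$ to a common extension $d$ lying in the ideal point. Everything else is routine once the shape of $n \downarrow m$ in $\FNat$ is pinned down.
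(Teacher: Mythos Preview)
Your proof is correct but organised differently from the paper's. The paper argues in a cycle $(\ref{lem:pFunction1}) \Rightarrow (\ref{lem:pFunction3}) \Rightarrow (\ref{lem:pFunction4}) \Rightarrow (\ref{lem:pFunction1})$ rather than proving two equivalences in parallel. For $(\ref{lem:pFunction1}) \Rightarrow (\ref{lem:pFunction3})$ the paper simply observes, using monotonicity and \eqref{eq:ConvOrd}, that $s^{-}n \downarrow s^{-}m = s^{-}n \cap s^{-}m$, after which the cover is immediate; your version unfolds the same content more explicitly. The substantive difference is the step $(\ref{lem:pFunction3}) \Rightarrow (\ref{lem:pFunction4})$: the paper proves the auxiliary fact $a \cov_{\FBaire_{\Ip}} \{b\} \imp a \leq_{\FBaire} b$ by testing against the \emph{two} ideal points $\alpha_{a*0}$ and $\alpha_{a*1}$ (whose only common prefixes lie above $a$), thereby identifying $\downarrow_{\Ip}$ with $\downarrow$ and reducing (\ref{lem:pFunction4}) to (\ref{lem:pFunction3}) via $\cov_{\FBaire} \subseteq \cov_{\FBaire_{\Ip}}$. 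You instead prove $(\ref{lem:pFunction1}) \Rightarrow (\ref{lem:pFunction4})$ directly, using the filtering property of a single ideal point to descend to a common extension of the two witnesses. Your route avoids the two-point trick and is arguably more direct for this lemma, while the paper's route isolates a clean order-theoretic characterisation of singleton covers in $\FBaire_{\Ip}$. The implication $(\ref{lem:pFunction4}) \Rightarrow (\ref{lem:pFunction1})$ is essentially the same in both: pad $a$ with zeros to get an ideal point $\alpha_a$, use that $\alpha_a$ splits the cover, and conclude that $n \downarrow m$ is inhabited.
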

\begin{proof}
  (\ref{lem:pFunction1} $\Rightarrow$ \ref{lem:pFunction3})
  Assume \eqref{lem:pFunction1}. Since $s$ is monotone, we
  have $s^{-} n \downarrow s^{-}m = s^{-} n \mathrel{\cap} s^{-}m$
  for all $n, m \in \Nat$ by \eqref{eq:ConvOrd}.  Then \eqref{lem:pFunction3} is clear.

  \smallskip
  \noindent(\ref{lem:pFunction3} $\Rightarrow$ \ref{lem:pFunction4})
  For each $a \in \FSeq$, define an ideal point $\alpha_{a} \colon \Pt{\FBaire}$ by
  \begin{equation}\label{eq:Pointa}
    \alpha_{a} \defeql \left\{ b \in \FSeq \mid a \leq_{\FBaire} b \right\}
    \cup \big\{ a * \underbrace{0 * \dots * 0}_{n+1} \mid n \in \Nat \big\}.
  \end{equation}
  Now assume \eqref{lem:pFunction3}. 
  We show that $a \cov_{\FBaire_{\Ip}} \left\{ b \right\} \imp a \leq_{\FBaire} b$. 
  Suppose that $a \cov_{\FBaire_{\Ip}} \left\{ b \right\}$. Since
  $a \elm \alpha_{a * 0} \cap \alpha_{a * 1}$, we have $b \elm
  \alpha_{a * 0} \cap \alpha_{a * 1}$, so we must have $a
  \leq_{\FBaire} b$.
  Then \eqref{lem:pFunction4} follows from the fact that
  $\cov_{\FBaire} \subseteq \cov_{\FBaire_{\Ip}}$.

  \smallskip
  \noindent(\ref{lem:pFunction4} $\Rightarrow$ \ref{lem:pFunction1})
  Assume \eqref{lem:pFunction4}. Suppose that $a \mathrel{s} n$
  and $a \mathrel{s} m$. Then 
  \[
    a\cov_{\FBaire_{\Ip}} s^{-} n \downarrow_{\Ip} s^{-}m
    \cov_{\FBaire_{\Ip}} s^{-}( n \downarrow m ).
  \]
  Since $a \elm \alpha_{a}$, we have 
  $\alpha_{a} \meets s^{-}( n \downarrow  m )$. Hence $n = m$.
\end{proof}

Recall from \cite[Chapter 4, Section 8]{ConstMathI} that a subset $U
\subseteq \FSeq$ is a \emph{bar} if every choice sequence has a
neighbourhood which is in $U$. In the setting of positive topology,
we can express the condition of bar as
\[
  \left( \forall \alpha \colon \Pt{\FBaire} \right) \alpha \meets
  U.
\]
The \emph{domain} $\dom(s)$ of a relation $s \subseteq X \times Y$ is
the  subset $s^{-}Y$ of $X$.
\begin{proposition}\label{prop:pFuncMBar}
  For every monotone $s \subseteq \FSeq \times \Nat$,
  the following are equivalent:
  \begin{enumerate}
    \item\label{prop:pFuncMBar1} $s$ is a partial function
      whose domain is a bar;

    \item\label{prop:pFuncMBar2} $s$ is a formal map from
      $\FBaire_{\Ip}$ to $\FNat$.
  \end{enumerate}
\end{proposition}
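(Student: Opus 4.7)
The plan is to unfold each of the four conditions \ref{def:FormalMap1}--\ref{def:FormalMap4} defining a formal map $s \colon \FBaire_{\Ip} \to \FNat$, exploiting two structural features of the target: $\FNat$ is discrete and $\pos_{\FNat}$ is the greatest positivity compatible with $\cov_{\FNat}$. Each condition will then read off as either a piece of single-valuedness or as the bar property of the domain, and both directions of the equivalence will drop out simultaneously.

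First I would dispatch \ref{def:FormalMap3}: since $\cov_{\FNat}$ is the discrete cover, $n \cov_{\FNat} V$ already forces $n \elm V$, hence $s^{-}n \subseteq s^{-}V$, and \ref{def:FormalMap3} holds by reflexivity of $\cov_{\FBaire_{\Ip}}$. Next, \ref{def:FormalMap2} unfolds to the statement $s^{-}n \downarrow_{\Ip} s^{-}m \cov_{\FBaire_{\Ip}} s^{-}(n \downarrow m)$ for all $n,m \in \Nat$, which is precisely clause \eqref{lem:pFunction4} of Lemma~\ref{lem:pFunction} and is therefore equivalent to $s$ being single-valued. So \ref{def:FormalMap2} captures exactly the ``partial function'' part of (1).

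For \ref{def:FormalMap1}, I would unfold the definition of the pointwise cover via \eqref{eq:Representable}: $\FSeq \cov_{\FBaire_{\Ip}} s^{-}\Nat$ amounts to saying that $\Ext_{\Ip}(\FSeq) \subseteq \Ext_{\Ip}(s^{-}\Nat)$, which, using that every ideal point of $\FBaire$ is inhabited, is just the assertion that every $\alpha \colon \Pt{\FBaire}$ meets $s^{-}\Nat = \dom(s)$; and this is the definition of $\dom(s)$ being a bar. Finally, once \ref{def:FormalMap1}, \ref{def:FormalMap2}, and \ref{def:FormalMap3} are in place, Corollary~\ref{col:MaxCoDomFormalMap} supplies \ref{def:FormalMap4} automatically, since $\pos_{\FNat}$ is maximal among positivities compatible with $\cov_{\FNat}$.

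Putting the four observations together yields the equivalence in one stroke: $s$ is a formal map from $\FBaire_{\Ip}$ to $\FNat$ if and only if $s$ is single-valued (equivalent to \ref{def:FormalMap2}) and $\dom(s)$ is a bar (equivalent to \ref{def:FormalMap1}). There is no genuine obstacle here; the only mildly delicate step is recognising that \ref{def:FormalMap1} against the pointwise cover $\cov_{\FBaire_{\Ip}}$ says exactly ``$\dom(s)$ is a bar'', and that follows immediately once $\cov_{\Ip}$ is spelled out in terms of ideal points.
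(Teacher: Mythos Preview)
Your proposal is correct and follows essentially the same route as the paper: dispatch \ref{def:FormalMap3} as trivial by discreteness of $\FNat$, identify \ref{def:FormalMap2} with single-valuedness via Lemma~\ref{lem:pFunction}, read \ref{def:FormalMap1} for $\cov_{\FBaire_{\Ip}}$ as the bar condition on $\dom(s)$, and eliminate \ref{def:FormalMap4} using Corollary~\ref{col:MaxCoDomFormalMap}. The paper only presents these ingredients in a slightly different order (invoking Corollary~\ref{col:MaxCoDomFormalMap} up front to reduce the equivalence to \ref{def:FormalMap1}--\ref{def:FormalMap3}), but the argument is the same.
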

\begin{proof}
  Fix a monotone relation $s \subseteq \FSeq \times \Nat$.
  Then by Corollary \ref{col:MaxCoDomFormalMap}, $s$ is a
  formal map from $\FBaire_{\Ip}$ to $\FNat$ if and only if
  \ref{def:FormalMap1}, 
  \ref{def:FormalMap2}, and \ref{def:FormalMap3} hold. Condition \ref{def:FormalMap3} is trivial, while condition
  \ref{def:FormalMap2} is equivalent to $s$ being single valued by Lemma
  \ref{lem:pFunction}. Moreover, it is easy to see that condition
  \ref{def:FormalMap1} is
  equivalent to $\dom(s)$ being a bar.
\end{proof}

By Lemma \ref{prop:EquiContOp}, Lemma \ref{lem:Monotonisation}, and
Proposition \ref{prop:pFuncMBar}, we can rephrase the principle
$\ContBN$ as follows:
\begin{description}
  \item[($\ContBN$)]
    If $s \subseteq \FSeq \times \Nat$ is a monotone partial
    function whose domain is a bar, then $s$ is a formal
    map from $\FBaire$ to $\FNat$.
\end{description}
We say that a subset  $U \subseteq \FSeq$ is
  \begin{itemize}
    \item \emph{monotone} if 
      $a \leq_{\FBaire} b \elm U \imp a \elm U$ for all $a, b \in
      \FSeq$;
    \item \emph{inductive} if 
    $
    \left[\left( \forall x \in \Nat \right) \cons{a}{x} \elm U  \right]
    \imp a \elm U
    $
    for all $a \in \FSeq$.
  \end{itemize}
  The \emph{monotone bar induction} \cite[Chapter 4, Section 8]{ConstMathI} is
  the statement:
  \begin{description}
    \item[($\mBI$)]
  For any monotone bar $U \subseteq \FSeq$ and an inductive
  subset $V \subseteq \FSeq$ such that $U \subseteq V$, 
  it holds that  $\nil \elm V$.
  \end{description}

\begin{theorem} \label{thm:ContBEquivBIm}
  The following are equivalent.
  \begin{enumerate}
    \item\label{thm:ContBEquivBIm1} $\mBI$.
    \item\label{thm:ContBEquivBIm2} $\ContBN$.
    \item\label{thm:ContBEquivBIm2prime} $\Cont_{\FBaire, \mathcal{T}}$
      holds for all positive topology $\mathcal{T}$ with the greatest
      positivity.
    \item\label{thm:ContBEquivBIm3} Spatiality of $\FBaire$:
      $a \cov_{\FBaire_{\Ip}} U \imp a \cov_{\FBaire} U$ for all
      $a \in \FSeq$ and $U \subseteq \FSeq$.
    \item\label{thm:ContBEquivBIm4} $\nil \cov_{\FBaire_{\Ip}} U \imp
      \nil \cov_{\FBaire} U$ for all $U \subseteq \FSeq$.
  \end{enumerate}
\end{theorem}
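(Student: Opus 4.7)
The plan is to establish the cyclic chain
$(1)\Rightarrow(5)\Rightarrow(4)\Rightarrow(3)\Rightarrow(2)\Rightarrow(1)$,
where the numbering refers to the items as stated. Only $(5)\Rightarrow(4)$ is substantive; the rest are light repackaging.

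For $(1)\Rightarrow(5)$: given $\nil \cov_{\FBaire_{\Ip}} U$, the monotone closure $\downset_{\FBaire} U$ is a monotone bar, and the set $V = \{b \in \FSeq : b \cov_{\FBaire} U\}$ is monotone (by $\zeta$), inductive (by $\digamma$) and contains $\downset_{\FBaire} U$ (by $\eta$ and $\zeta$), so $\mBI$ delivers $\nil \elm V$, i.e.\ $\nil \cov_{\FBaire} U$. $(4)\Rightarrow(3)$ is immediate from Proposition~\ref{prop:SpatImpCont}. $(3)\Rightarrow(2)$ is the specialisation $\mathcal{T} := \FNat$, since $\FNat$ carries the greatest compatible positivity. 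For $(2)\Rightarrow(1)$, given a monotone bar $U$ and an inductive $V \supseteq U$, the relation $s = \{(b,0) : b \elm U\}$ is a monotone single-valued partial function whose domain is a bar, hence a formal map by the rephrased $\ContBN$; condition~\ref{def:FormalMap1} yields $\nil \cov_{\FBaire} U$, which by Lemma~\ref{lem:ElimZeta} (and monotonicity of $U$) reads $\nil \bcov_{\FBaire} U$. A straightforward induction on the $\eta/\digamma$-generation of $\bcov_{\FBaire}$ shows $\{b : b \bcov_{\FBaire} U\} \subseteq V$, so $\nil \elm V$.

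The key step is $(5)\Rightarrow(4)$, which shifts the focus from $\nil$ to an arbitrary $a$. Given $a \cov_{\FBaire_{\Ip}} U$, enlarge the target to
\[
  U' = U \cup \{c \in \FSeq : \lh{c} = \lh{a} \text{ and } c \neq a\}.
\]
Then $\nil \cov_{\FBaire_{\Ip}} U'$: for any choice sequence $\alpha$, either $a \elm \alpha$, in which case $\alpha \meets U$ by assumption, or else iterating the ``extending'' condition~(\ref{col:CSasIPt4}) of Proposition~\ref{col:CSasIPt} exactly $\lh{a}$ times starting from $\nil \elm \alpha$ yields an element of $\alpha$ of length $\lh{a}$; this element must differ from $a$ (otherwise $a \elm \alpha$), hence lies in $U'$. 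By~(5), $\nil \cov_{\FBaire} U'$, and the $\zeta$-rule applied to $a \leq_{\FBaire} \nil$ gives $a \cov_{\FBaire} U'$. By Lemma~\ref{lem:ElimZeta}, $a \bcov_{\FBaire} \downset_{\FBaire} U'$. A simple induction on this $\zeta$-free derivation shows that every $\eta$-leaf is an extension of $a$, and hence agrees with $a$ on its first $\lh{a}$ entries; such an element cannot lie in $\downset_{\FBaire}\{c : \lh{c} = \lh{a}, c \neq a\}$, so it must lie in $\downset_{\FBaire} U$. Therefore $a \bcov_{\FBaire} \downset_{\FBaire} U$, and Lemma~\ref{lem:ElimZeta} yields $a \cov_{\FBaire} U$.

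The only nontrivial implication is $(5)\Rightarrow(4)$: the ``off-branch'' enlargement $U \rightsquigarrow U'$ manufactures a bar rooted at $\nil$ out of a bar rooted at $a$, and the absence of the $\zeta$-rule in $\bcov_{\FBaire}$ guarantees that derivations starting at $a$ remain inside the subtree of extensions of $a$, so the auxiliary sequences added to $U$ cannot actually be invoked. All other implications reduce to the general results of Section~\ref{sec:Cont} together with the elementary combinatorics of $\cov_{\FBaire}$ captured in Lemma~\ref{lem:ElimZeta}.
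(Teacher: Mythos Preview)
Your proposal is correct and follows essentially the same cyclic route as the paper, including the same ``off-branch complement'' idea $C_a = \{c : \lh{c}=\lh{a},\, c\neq a\}$ for the key step $(5)\Rightarrow(4)$. The only noteworthy difference is in how that step is finished: the paper establishes the clean two-way equivalence $a \cov_{\FBaire} U \leftrightarrow \nil \cov_{\FBaire} U \cup C_a$ (and its $\cov_{\FBaire_{\Ip}}$ analogue) using the $\downarrow$-right rule with $\{a\}$, so that the reverse passage from $\nil \cov_{\FBaire} U\cup C_a$ back to $a \cov_{\FBaire} U$ is purely algebraic; you instead pass through Lemma~\ref{lem:ElimZeta} and argue by induction on the $\zeta$-free derivation $a \bcov_{\FBaire} \downset_{\FBaire} U'$ that every leaf extends $a$ and therefore cannot lie in $\downset_{\FBaire} C_a$. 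Your route is slightly more concrete and avoids invoking $\downarrow$-right, at the cost of an extra induction; the paper's route packages the reduction more symmetrically and reuses it verbatim for both covers. Either way the mathematical content is the same.
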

\begin{proof}
  Obviously \eqref{thm:ContBEquivBIm2prime} implies
  \eqref{thm:ContBEquivBIm2}. Moreover, 
  \eqref{thm:ContBEquivBIm3} implies \eqref{thm:ContBEquivBIm2prime}
  by Proposition \ref{prop:SpatImpCont}.
  Thus, it suffices to show 
  (\ref{thm:ContBEquivBIm1} $\Rightarrow$ \ref{thm:ContBEquivBIm4}),
  (\ref{thm:ContBEquivBIm4} $\Rightarrow$ \ref{thm:ContBEquivBIm3}),
  and (\ref{thm:ContBEquivBIm2} $\Rightarrow$
  \ref{thm:ContBEquivBIm1}).

  \smallskip
  \noindent(\ref{thm:ContBEquivBIm1} $\Rightarrow$ \ref{thm:ContBEquivBIm4})
  Assume $\mBI$. Fix $U \subseteq \FSeq$, and suppose that $ \nil
  \cov_{\FBaire_{\Ip}} U$. Then $ \nil
  \cov_{\FBaire_{\Ip}} \downset_{\FBaire} U$, i.e.\ $\downset_{\FBaire} U$
  is a monotone bar. Define a subset $V \subseteq \FSeq$ by
  \[
    V \defeql \left\{ a \in \FSeq \mid a \bcov_{\FBaire}
    \downset_{\FBaire} U\right\}
  \]
  which is an inductive subset containing $\downset_{\FBaire} U$. Thus
  by $\mBI$, we have $\nil \bcov_{\FBaire} \downset_{\FBaire} U$, and
  hence $\nil \cov_{\FBaire} U$ by Lemma \ref{lem:ElimZeta}.

  \smallskip
  \noindent(\ref{thm:ContBEquivBIm4} $\Rightarrow$ \ref{thm:ContBEquivBIm3})
  Assume \eqref{thm:ContBEquivBIm4}. 
  First, one can show that 
  \begin{equation}
    \label{eq:LevelN}
    a \cov_{\FBaire} \left\{ a * b \mid \lh{b} = n \right\}
  \end{equation}
  for all $n \in \Nat$ by induction on $n$. Here,
  $\lh{b}$ denotes the length of a sequence $b$.

  Next, for each $a \in \FSeq$, define $C_{a} \subseteq \FSeq$ by
  \begin{equation}\label{eq:complement}
    C_{a} \defeql \left\{ b \in \FSeq \mid \lh{b} = \lh{a} \amp a \neq b
  \right\}.
  \end{equation}
  Since the equality on $\FSeq$ is decidable, i.e.\ $a = b$ or $\neg(a =
  b)$ for all $a,b \in \FSeq$, we see that
  \begin{equation}
    \label{eq:ClopenCover}
    C_{a} \cup \left\{ a \right\} = \left\{ b \in \FSeq \mid
      \lh{b} = \lh{a}\right\}.
  \end{equation}
  Then, for each $U \subseteq \FSeq$, we have
  \begin{equation}\label{eq:clopen}
    a \cov_{\FBaire} U \leftrightarrow \nil \cov_{\FBaire} U \cup
    C_{a}.
  \end{equation}
  Indeed, if $a \cov_{\FBaire} U$, then by \eqref{eq:LevelN} and
  \eqref{eq:ClopenCover}, we have
  \[
    \nil \cov_{\FBaire}  \left\{ a \right\} \cup C_{a} \cov_{\FBaire} U \cup C_{a}.
  \]
  Conversely, if $\nil \cov_{\FBaire} U \cup C_{a}$ then
  \[
    a \cov_{\FBaire}
    \left( U \cup C_{a} \right) \downset \left\{ a \right\}
    \cov_{\FBaire}
    \left( U \downset \left\{ a \right\} \right) 
    \cup \left( C_{a}\downset \left\{ a \right\} \right)
    \cov_{\FBaire} U.
  \]
  The equivalence \eqref{eq:clopen} holds for the cover
  $\cov_{\FBaire_{\Ip}}$ as well, and admits an analogous proof. For
  example, suppose that $a \cov_{\FBaire_{\Ip}} U$.  Then, for each
  $\alpha \colon \Pt{\mathcal{\FBaire}}$, we have $\alpha \meets C_{a}
  \cup \left\{ a \right\}$ by \eqref{eq:LevelN} and
  \eqref{eq:ClopenCover}.  If $a \elm \alpha$, then $\alpha \meets U$
  because $\alpha$ splits $\cov_{\FBaire}$. Thus $\alpha \meets
  \left( U \cup C_{a} \right)$, and hence $\nil
  \cov_{\FBaire_{\Ip}} U \cup C_{a}$. The converse is also straightforward.

  Now, suppose that
  $a \cov_{\FBaire_{\Ip}} U$. Then $\nil
  \cov_{\FBaire_{\Ip}} U \cup C_{a}$, and thus $\nil \cov_{\FBaire} U \cup C_{a}$
  by the assumption. Hence $a \cov_{\FBaire} U$.

  \smallskip
  \noindent(\ref{thm:ContBEquivBIm2} $\Rightarrow$ \ref{thm:ContBEquivBIm1})
  Assume $\ContBN$. Let $U \subseteq \FSeq$ be a monotone bar
  and $V \subseteq \FSeq$ be an inductive subset such that $U
  \subseteq V$. Define $s_{U} \subseteq  \FSeq \times \Nat$ by
  \[
    a \mathrel{s_{U}} n  \defeqiv  a \elm U \amp n = 0,
  \]
  which is a monotone partial function whose domain is a
  bar. Thus, $s_{U}$ is a formal map from $\FBaire$ to
  $\FNat$ by $\ContBN$. In particular, we have
  \[
    \nil \cov_{\FBaire} \dom(s_{U}) = U = \downset_{\FBaire} U,
  \]
  and so $\nil \bcov_{\FBaire}  U$ by Lemma \ref{lem:ElimZeta}.
  Therefore $\nil \elm V$ by induction on $\bcov_{\FBaire}$.
\end{proof}
\begin{remark}
  The equivalence between spatiality of $\FBaire$ and $\mBI$ is well known
  (cf.\ Schuster and Gambino \cite[Proposition
4.1]{SchusterGambinoSpatiality}; see Fourman and Grayson \cite[Theorem
3.4]{FormalSpace} for an impredicative result).
\end{remark}

\subsection{Variety of continuity principles}\label{sec:VariContBaire}
We introduce some variants of $\ContBN$ by imposing some restrictions
on the relation $s \subseteq \FSeq \times \Nat$. We study
the connections between these
principles and some well-known variants of bar induction.

\begin{definition}
  $\ContBND$ is a variant of $\ContBN$ 
  formulated with respect to a monotone partial function with a
  \emph{decidable} domain:
\begin{description}
  \item[($\ContBND$)]
    If $s \subseteq \FSeq \times \Nat$ is a monotone partial
    function whose domain is a decidable bar, then $s$ is a formal
    map from $\FBaire$ to $\FNat$.
\end{description}
\end{definition}
Apart from the distinction between operations and functions (total
and single-valued relations) in the Minimalist Foundation, 
it is easy to see that monotone partial functions from $\FSeq$ to
$\Nat$ whose domains
are decidable bars bijectively correspond
to \emph{neighbourhood functions}; see Beeson~\cite[Chapter 6, Section
7.3]{BeesonFoundationConstMath} or Troelstra and van
Dalen~\cite[Chapter 4, Section 6.8]{ConstMathI}.

The \emph{decidable bar induction} is the statement:%
\footnote{Brouwer seems to have introduced bar induction in this form
  \cite[Section 2]{BrouwerDomainsofFunctions}; see also Kleene and
  Vesley \cite[Chapter I, Section 6]{KleeneVesley}.}
\begin{description}
  \item[($\dBI$)]
For any decidable bar $U \subseteq \FSeq$ and an inductive
subset $V \subseteq \FSeq$ such that $U \subseteq V$, 
it holds that  $\nil \elm V$.

\end{description}
In $\dBI$, we may assume that a bar is monotone since the downward
closure $\downset_{\FBaire} U$ of a decidable subset $U \subseteq
\FSeq$ is decidable (cf.\ Troelstra and van Dalen \cite[Exercise
4.8.10]{ConstMathI}).

\begin{proposition} \label{prop:ContBDEquivBId}
  The following are equivalent.
  \begin{enumerate}
    \item\label{prop:ContBDEquivBId1} $\dBI$.
    \item\label{prop:ContBDEquivBId2} $\ContBND$.
    \item\label{prop:ContBDEquivBId3} $a \cov_{\FBaire_{\Ip}} U
      \imp a \cov_{\FBaire} U$ for all $a \in \FSeq$ and decidable $U \subseteq
      \FSeq$.
    \item\label{prop:ContBDEquivBId4} $\nil \cov_{\FBaire_{\Ip}} U
      \imp \nil \cov_{\FBaire} U$ for all decidable $U \subseteq
      \FSeq$.
  \end{enumerate}
\end{proposition}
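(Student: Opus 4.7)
The plan is to mirror the cyclic structure of the proof of Theorem~\ref{thm:ContBEquivBIm}, proving $(1) \Rightarrow (4) \Rightarrow (3) \Rightarrow (2) \Rightarrow (1)$, with the decidability hypothesis threaded through every step so that the arguments restrict to decidable subsets.

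For $(1) \Rightarrow (4)$, given a decidable $U$ with $\nil \cov_{\FBaire_{\Ip}} U$, form $\downset_{\FBaire} U$; by the exercise cited in the discussion of $\dBI$ it is again decidable, and it is plainly a monotone bar. Then the set $V = \left\{a \in \FSeq \mid a \bcov_{\FBaire} \downset_{\FBaire} U\right\}$ is inductive and contains $\downset_{\FBaire} U$, so $\dBI$ gives $\nil \bcov_{\FBaire} \downset_{\FBaire} U$, and Lemma~\ref{lem:ElimZeta} then yields $\nil \cov_{\FBaire} U$. For $(4) \Rightarrow (3)$ I would replay the ``clopen'' argument used in Theorem~\ref{thm:ContBEquivBIm}: the equivalence $a \cov U \leftrightarrow \nil \cov (U \cup C_a)$ holds for both $\cov_{\FBaire}$ and $\cov_{\FBaire_{\Ip}}$, and the only new observation is that $U \cup C_a$ is decidable whenever $U$ is, because $C_a$ is decidable by decidability of equality on $\FSeq$.

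For $(3) \Rightarrow (2)$, let $s \subseteq \FSeq \times \Nat$ be a monotone partial function with decidable bar domain. Corollary~\ref{col:MaxCoDomFormalMap} reduces the task of showing $s$ is a formal map $\FBaire \to \FNat$ to verifying \ref{def:FormalMap1}, \ref{def:FormalMap2}, and \ref{def:FormalMap3}. Condition \ref{def:FormalMap3} is trivial for the discrete topology $\FNat$, and \ref{def:FormalMap2} follows from single-valuedness of $s$ by Lemma~\ref{lem:pFunction}. For \ref{def:FormalMap1}, the fact that $\dom(s)$ is a bar translates to $\FSeq \cov_{\FBaire_{\Ip}} \dom(s)$; since $\dom(s)$ is decidable, (3) upgrades this to $\FSeq \cov_{\FBaire} \dom(s)$.

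Finally, for $(2) \Rightarrow (1)$, I would follow the proof of Theorem~\ref{thm:ContBEquivBIm}: given a decidable bar $U$ (WLOG monotone by the cited exercise) and an inductive $V \supseteq U$, define $a \mathrel{s_U} n \defeqiv a \elm U \amp n = 0$, a monotone partial function with decidable bar domain; $\ContBND$ then makes $s_U$ a formal map, giving $\nil \cov_{\FBaire} U$, hence $\nil \bcov_{\FBaire} U$ by Lemma~\ref{lem:ElimZeta}, and induction on $\bcov_{\FBaire}$ delivers $\nil \elm V$. The only genuinely new work relative to Theorem~\ref{thm:ContBEquivBIm} is the bookkeeping that keeps every auxiliary subset ($\downset_{\FBaire} U$, $U \cup C_a$, $\dom(s_U)$) decidable; I expect this to be a routine verification rather than a conceptual obstacle.
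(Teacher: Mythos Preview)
Your proposal is correct and follows essentially the same approach as the paper. The paper singles out the implication $(4)\Rightarrow(2)$ (using $\nil \cov_{\FBaire} \dom(s)$ together with Corollary~\ref{col:MaxCoDomFormalMap} and Lemma~\ref{lem:pFunction}) and then declares the remaining implications analogous to Theorem~\ref{thm:ContBEquivBIm}, noting precisely your decidability bookkeeping point that $U \cup C_a$ stays decidable; your cycle $(1)\Rightarrow(4)\Rightarrow(3)\Rightarrow(2)\Rightarrow(1)$ differs only in routing through $(3)$ before $(2)$, which is inessential.
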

\begin{proof}
  We show $(\ref{prop:ContBDEquivBId4} \Rightarrow
  \ref{prop:ContBDEquivBId2})$. Assume \eqref{prop:ContBDEquivBId4}, and let
  $s \subseteq \FSeq \times \Nat$ be a monotone partial function whose
  domain is a decidable bar. Then
  $\nil \cov_{\FBaire} \dom(s)$ by assumption; hence
  $s$ is a formal map  $s \colon \FBaire \to \FNat$ by 
  Corollary~\ref{col:MaxCoDomFormalMap} and Lemma~\ref{lem:pFunction}.

  The proof of the other equivalences are analogous to those of Theorem
  \ref{thm:ContBEquivBIm}.
  Note that for each $a \in \FSeq$ and decidable $U \subseteq \FSeq$,
  the union $U \cup C_{a}$ is decidable,
  where $C_{a}$ is defined as in \eqref{eq:complement}.
\end{proof}
\begin{remark}
  The equivalence between $\dBI$ and $\ContBND$ is analogous to
  Proposition 8.14 (i) in Troelstra and van Dalen \cite[page
  230]{ConstMathI}, where it is shown that $\dBI$ is equivalent
  to identification of the class of neighbourhood functions
  and that of inductively defined neighbourhood functions.
\end{remark}

The second variant of $\ContBN$ is based on the assumption that
the domain of 
a partial function $s \subseteq \FSeq \times \Nat$
is recursively enumerable. The logical
counterpart of this assumption is that the domain of $s$ is a
$\Sigma^{0}_{1}$ subset of $\FSeq$.  Here, a subset $U \subseteq
\FSeq$ is $\Sigma^{0}_{1}$ if there exists a decidable subset $D
\subseteq \FSeq \times \Nat$ such that 
\[
  a \elm U \leftrightarrow \left( \exists n \in \Nat \right) (a,n) \elm D.
\]
\begin{definition}
 $\ContSigmaOne$ is a variant of $\ContBN$ 
 formulated with respect to a monotone partial function whose domain
 is a $\Sigma^{0}_{1}$ subset of $\FSeq$:
\begin{description}
  \item[($\ContSigmaOne$)]
    If $s \subseteq \FSeq \times \Nat$ is a monotone partial
    function whose domain is a $\Sigma^{0}_{1}$ bar, then $s$ is a formal
    map from $\FBaire$ to $\FNat$.
\end{description}
\end{definition}
The principle $\ContSigmaOne$ is related to
the corresponding principle of bar induction.
\emph{$\Sigma^{0}_{1}$ monotone bar induction} is the statement:
\begin{description}
  \item[($\BISigmaOne$)]
    For any monotone $\Sigma^{0}_{1}$ bar $U \subseteq \FSeq$ and an inductive
  subset $V \subseteq \FSeq$ such that $U \subseteq V$, 
  it holds that  $\nil \elm V$.

\end{description}
In $\BISigmaOne$, we cannot drop the condition of monotonicity on
bars; otherwise the non-constructive principle $\LPO$ (the limited
principle of omniscience) would be derivable,
which is not acceptable in constructive mathematics (cf.\ Troelstra and van Dalen
\cite[Exercise 4.8.11]{ConstMathI}).
\begin{proposition} \label{prop:ContSigmaOnequivBISimgaOne}
  The following are equivalent.
  \begin{enumerate}
    \item $\BISigmaOne$.
    \item $\ContSigmaOne$.
    \item $a \cov_{\FBaire_{\Ip}} U
      \imp a \cov_{\FBaire} U$ for all $a \in \FSeq$ and
      $\Sigma^{0}_{1}$-subset $U \subseteq
      \FSeq$.
    \item $\nil \cov_{\FBaire_{\Ip}} U
      \imp \nil \cov_{\FBaire} U$ for all $\Sigma^{0}_{1}$-subset $U \subseteq
      \FSeq$.
  \end{enumerate}
\end{proposition}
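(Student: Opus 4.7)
The plan is to follow exactly the same cycle $(1) \Rightarrow (4) \Rightarrow (3) \Rightarrow (2) \Rightarrow (1)$ used in Theorem~\ref{thm:ContBEquivBIm} and Proposition~\ref{prop:ContBDEquivBId}, adapting the closure arguments to the class $\Sigma^{0}_{1}$ rather than to arbitrary subsets or decidable subsets. The key observation that powers the adaptation is that the class of $\Sigma^{0}_{1}$ subsets of $\FSeq$ is closed under all the subset-forming operations used in the prior proofs: finite unions, downward closure $\downset_{\FBaire}(\cdot)$ (an extra existential quantifier over $\FSeq$), and unions with decidable subsets such as $C_{a}$ from \eqref{eq:complement}. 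I would state these closure properties as a preliminary remark, then recycle the earlier arguments verbatim.

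For $(\ref{prop:ContSigmaOnequivBISimgaOne}.1 \Rightarrow \ref{prop:ContSigmaOnequivBISimgaOne}.4)$, given a $\Sigma^{0}_{1}$ subset $U$ with $\nil \cov_{\FBaire_{\Ip}} U$, I would note that $\downset_{\FBaire} U$ is a monotone $\Sigma^{0}_{1}$ bar and apply $\BISigmaOne$ to the inductive subset $V = \left\{ a \in \FSeq \mid a \bcov_{\FBaire} \downset_{\FBaire} U \right\}$, concluding by Lemma~\ref{lem:ElimZeta}. For $(\ref{prop:ContSigmaOnequivBISimgaOne}.4 \Rightarrow \ref{prop:ContSigmaOnequivBISimgaOne}.3)$, the clopen-cover trick from Theorem~\ref{thm:ContBEquivBIm} goes through unchanged, since if $U$ is $\Sigma^{0}_{1}$ and $C_{a}$ is decidable, then $U \cup C_{a}$ is $\Sigma^{0}_{1}$, so the equivalence \eqref{eq:clopen} reduces the question from $a$ to $\nil$ within the $\Sigma^{0}_{1}$ class.

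For $(\ref{prop:ContSigmaOnequivBISimgaOne}.3 \Rightarrow \ref{prop:ContSigmaOnequivBISimgaOne}.2)$, I would argue as in Proposition~\ref{prop:ContBDEquivBId}: given a monotone partial function $s$ whose domain is a $\Sigma^{0}_{1}$ bar, we have $\nil \cov_{\FBaire_{\Ip}} \dom(s)$ by definition of bar, hence $\nil \cov_{\FBaire} \dom(s)$ by $(\ref{prop:ContSigmaOnequivBISimgaOne}.3)$ applied to the $\Sigma^{0}_{1}$ set $\dom(s)$, which establishes \ref{def:FormalMap1}; conditions \ref{def:FormalMap2} and \ref{def:FormalMap3} come from Lemma~\ref{lem:pFunction} and triviality, while \ref{def:FormalMap4} is furnished by Corollary~\ref{col:MaxCoDomFormalMap}. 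For $(\ref{prop:ContSigmaOnequivBISimgaOne}.2 \Rightarrow \ref{prop:ContSigmaOnequivBISimgaOne}.1)$, given a monotone $\Sigma^{0}_{1}$ bar $U$ and an inductive $V \supseteq U$, I would define $s_{U}$ by $a \mathrel{s_{U}} n \defeqiv a \elm U \amp n = 0$, which is monotone (by monotonicity of $U$), single-valued, and has $\Sigma^{0}_{1}$ domain $U$; applying $\ContSigmaOne$ yields $\nil \cov_{\FBaire} U$, hence $\nil \bcov_{\FBaire} U$ by Lemma~\ref{lem:ElimZeta}, and induction on $\bcov_{\FBaire}$ inside the inductive set $V$ gives $\nil \elm V$.

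The only point requiring care—not really an obstacle but the one thing to verify explicitly—is that $\Sigma^{0}_{1}$ is genuinely closed under downward closure in $\FSeq$ and under union with the decidable set $C_{a}$; both are straightforward but I would sketch them so the ``analogous to Theorem~\ref{thm:ContBEquivBIm}'' appeal is honest. No form of decidability or omniscience is needed beyond what is already implicit in $\Sigma^{0}_{1}$, and monotonicity of the bar in $\BISigmaOne$ is exactly what lets the $s_{U}$ construction in the last step yield a well-defined monotone partial function.
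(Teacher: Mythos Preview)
Your proposal is correct and follows essentially the same approach as the paper, which simply states that the proof is analogous to Proposition~\ref{prop:ContBDEquivBId} and singles out the observation that $U \cup C_{a}$ remains $\Sigma^{0}_{1}$. You have spelled out the cycle and the additional closure of $\Sigma^{0}_{1}$ under $\downset_{\FBaire}(\cdot)$ more explicitly than the paper does, but the argument is the same.
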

\begin{proof}
  The proof is analogous to that of Proposition  \ref{prop:ContBDEquivBId}.
  Note that for each $a \in \FSeq$ and $\Sigma^{0}_{1}$ subset $U
  \subseteq \FSeq$, the union $U \cup C_{a}$ is also
  $\Sigma^{0}_{1}$.
\end{proof}
In fact, these $\Sigma^{0}_{1}$ variants are equivalent to the
decidable counterparts. Ishihara \cite[Proposition
16.15]{ConstRevMatheCompactness} already observed an analogous fact for the
fan theorem.
In the following, we write $\langle n,m
\rangle$ for a fixed coding of pairs of numbers $n,m \in \Nat$
and write $j_{0},
j_{1}$ for the projections of pairs. Without loss of generality, we
 assume that the coding is surjective and satisfies
$n,m \leq \langle n,m \rangle$. 
The proof of the following proposition is based on  Ishihara \cite[Proposition
16.15]{ConstRevMatheCompactness}.
\begin{proposition}\label{prop:EquivSigmaD}
$\BISigmaOne$ and $\dBI$ are equivalent.
\end{proposition}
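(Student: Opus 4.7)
The easy direction $\BISigmaOne \imp \dBI$ follows from the cover-theoretic reformulations in Propositions~\ref{prop:ContBDEquivBId} and~\ref{prop:ContSigmaOnequivBISimgaOne}: $\dBI$ is equivalent to the inclusion $\nil \cov_{\FBaire_{\Ip}} U \imp \nil \cov_{\FBaire} U$ over decidable $U \subseteq \FSeq$, while $\BISigmaOne$ is equivalent to the analogous inclusion over $\Sigma^{0}_{1}$ subsets. Since every decidable subset is trivially $\Sigma^{0}_{1}$, the $\Sigma^{0}_{1}$ inclusion yields the decidable one.

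For the converse, I would follow the Ishihara-style bounded-quantifier trick referred to in the passage immediately before the statement. Given a monotone $\Sigma^{0}_{1}$ bar $U$, say $a \elm U \leftrightarrow (\exists n \in \Nat)\, D(a, n)$ for some decidable $D \subseteq \FSeq \times \Nat$, together with an inductive subset $V \subseteq \FSeq$ with $U \subseteq V$, define
\[
  U' \defeql \left\{ a \in \FSeq \mid (\exists m \leq \lh{a})(\exists n \leq \lh{a})\, D(a \restriction m, n) \right\},
\]
where $a \restriction m$ denotes the length-$m$ initial segment of $a$. Bounded quantification over a decidable predicate makes $U'$ decidable.

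Three things then remain. First, $U' \subseteq U$: if $a \elm U'$ then $(a \restriction m) \elm U$ for some $m \leq \lh{a}$; since $a \restriction m$ is an initial segment of $a$, i.e.\ $a \leq_{\FBaire} a \restriction m$, monotonicity of $U$ gives $a \elm U$, whence $U' \subseteq V$. Second, $U'$ is a bar: for any $\alpha \colon \Pt{\FBaire}$ the bar property of $U$ supplies $a \elm \alpha \cap U$ and $n \in \Nat$ with $D(a, n)$, and iterating condition~(4) of Proposition~\ref{col:CSasIPt} produces $b \elm \alpha$ extending $a$ with $\lh{b} \geq \max(\lh{a}, n)$, so that $b \elm U'$ via the witnesses $m = \lh{a}$ and $n$. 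Third, applying $\dBI$ to the decidable bar $U'$ and the inductive subset $V \supseteq U'$ yields $\nil \elm V$. The main obstacle is the second step: the faithfulness of the bounded-quantifier encoding at the level of choice sequences relies precisely on the fact, recorded in Proposition~\ref{col:CSasIPt}, that ideal points of $\FBaire$ admit arbitrarily long finite approximations.
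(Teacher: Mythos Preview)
Your proof is correct and follows the same strategy as the paper's: find a decidable bar inside the given monotone $\Sigma^{0}_{1}$ bar and then invoke $\dBI$. The only difference is cosmetic --- the paper encodes the two witnesses via a surjective pairing function, defining $V = \{\,a \mid D(\overline{a}j_{0}(\lh{a}),\, j_{1}(\lh{a}))\,\}$ and citing \eqref{eq:LevelN} to reach the required length in a single step, whereas you recover the witnesses by bounded search; both arguments hinge on monotonicity of $U$ for the inclusion and on ideal points containing arbitrarily long approximants for the bar property.
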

\begin{proof}
It suffices to show that $\dBI$ implies $\BISigmaOne$. To this end, we
show that every monotone $\Sigma^{0}_{1}$ bar contains a decidable
bar.  Let $U \subseteq \FSeq$ be a monotone $\Sigma^{0}_{1}$ bar, and
let $D \subseteq \FSeq \times \Nat$ be a decidable subset such that $a
\elm U \leftrightarrow \left( \exists n \in \Nat \right)(a,n) \elm
D$ for all $a \in \FSeq$. Define $V \subseteq \FSeq$ by
\[
  V \defeql \left\{ a \in \FSeq \mid
    (\overline{a}j_{0}(\lh{a}), j_{1}(\lh{a})) \elm
    D\right\}
  \]
where $\overline{a}n \;(n < \lh{a})$ is the initial segment
of $a$ of length $n$. Obviously, $V$ is decidable. Moreover, 
since $a \leq_{\FBaire} \overline{a}j_{0}(\lh{a})$ and 
$U$ is monotone, $V$ is contained in $U$. 
Lastly, to see that $V$ is a bar, let $\alpha \colon
\Pt{\mathcal{\FBaire}}$. Since $U$ is a bar, there exists $a \elm
\alpha$ such that $a \elm U$. Since $U$ is $\Sigma^{0}_{1}$, there exists
$n \in \Nat$ such that $D(a,n)$. By \eqref{eq:LevelN}, there exists $b
\elm \alpha$  such that $\lh{b} = \langle \lh{a},n \rangle$.
Then $b \elm V$, and hence $V$ is bar.
\end{proof}

Another variant of $\ContBN$ is the following principle.
\begin{definition}\label{def:ContPiOne}
  $\ContPiOne$ is a variant of $\ContBN$
  formulated with respect to a monotone partial function whose domain
  is a $\Pi^{0}_{1}$ subset of $\FSeq$:
\begin{description}
  \item[($\ContPiOne$)]
    If $s \subseteq \FSeq \times \Nat$ is a monotone partial
    function whose domain is a $\Pi^{0}_{1}$ bar, then $s$ is a formal
    map from $\FBaire$ to $\FNat$.
\end{description}
Here a subset $U \subseteq \FSeq$ is $\Pi^{0}_{1}$ if 
there exists a decidable subset $D \subseteq \FSeq \times \Nat$
such that 
\[
  a \elm U \leftrightarrow \left( \forall n \in \Nat \right) (a,n) \elm D.
\]
\end{definition}
A  monotone partial function with a
$\Pi^{0}_{1}$ domain arises as a \emph{modulus} of a continuous
map $(g,s) \colon \Pt{\FBaire} \to \Pt{\FNat}$. 
This can be made precise as follows: let $\Image{s} \colon \Pt{\FBaire} \to
\Pt{\FNat}$ be the continuous map induced by a relation
$s \subseteq \FSeq \times \Nat$. Define a monotone partial function
$s'
\subseteq \FSeq \times \Nat$ by
\[
  a \mathrel{s'} n \defeqiv \left( \forall b \in \FSeq \right)
    \Image{s}(\alpha_{a})  = \Image{s}(\alpha_{a*b}) \amp n \elm
    \Image{s}(\alpha_{a}),
\]
where $\alpha_{a}$ is defined by \eqref{eq:Pointa}.
Note that
\[
  \dom(s') = \left\{ a \in \FSeq \mid \left( \forall b \in \FSeq \right)
    \Image{s}(\alpha_{a})  = \Image{s}(\alpha_{a*b})\right\},
  \]
which is a $\Pi^{0}_{1}$ subset of $\FSeq$ under
a suitable coding of $\FSeq$ in $\Nat$. We claim that
\begin{equation}\label{eq:MonotonePFuncPiDom}
  \left( \forall \alpha \colon \Pt{\FBaire} \right) \Image{s'}(\alpha) =
  \Image{s}(\alpha).
\end{equation}
To see this, fix $\alpha \colon \Pt{\FBaire}$. Let $n \elm
\Image{s'}(\alpha)$, and let $a \elm \alpha$ be such that $a
\mathrel{s'} n$.  Let $m \elm \Image{s}(\alpha)$ be the unique
element of $\Image{s}(\alpha)$, and $b \elm \alpha$ be such that $b
\mathrel{s} m$. Since $\alpha$ is filtering, we have either $a
\leq_{\FBaire}
b \amp b \neq a$ or $b \leq_{\FBaire} a$. In the former case, we have $m
\elm \Image{s}(\alpha_{a})$, and hence $n = m$ because
$\Image{s}(\alpha_{a})$ is a singleton. In the latter case, we have
$\Image{s}(\alpha_{a}) =  \Image{s}(\alpha_{b})$ because
$a \elm \dom(s')$. Since $n \elm \Image{s}(\alpha_{a})$ and $m
\elm \Image{s}(\alpha_{b})$, we have $n = m$. Therefore
$\Image{s'}(\alpha) \subseteq \Image{s}(\alpha)$. The converse
inclusion of \eqref{eq:MonotonePFuncPiDom} is obvious.
In summary, we have the following proposition.

\begin{proposition}\label{prop:ContMapMonotonePFuncPiDom}
For every continuous map $(\Image{s},s) \colon \Pt{\FBaire} \to
\Pt{\FNat}$, there exists a  monotone partial function $s' \subseteq
\FSeq \times \Nat$ with a $\Pi^{0}_{1}$ domain such that
$\Image{s} = \Image{s'}$.
\end{proposition}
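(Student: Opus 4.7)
The plan is to take the relation $s'$ already defined in the discussion preceding the proposition, namely $a \mathrel{s'} n \defeqiv (\forall b \in \FSeq)\, \Image{s}(\alpha_a) = \Image{s}(\alpha_{a*b}) \amp n \elm \Image{s}(\alpha_a)$, and to verify three properties: (i) $s'$ is a monotone partial function, (ii) $\dom(s')$ is $\Pi^{0}_{1}$, and (iii) $\Image{s'} = \Image{s}$ pointwise on $\Pt{\FBaire}$.

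For (i), single-valuedness is immediate: if $a \mathrel{s'} n$ and $a \mathrel{s'} m$, then both $n$ and $m$ lie in the singleton $\Image{s}(\alpha_a) \colon \Pt{\FNat}$, so $n = m$. For monotonicity I would argue: if $b \mathrel{s'} n$ and $a \leq_{\FBaire} b$, write $a = b * d$; the stability of $s'$ at $b$ gives $\Image{s}(\alpha_{a*c}) = \Image{s}(\alpha_{b*(d*c)}) = \Image{s}(\alpha_b) = \Image{s}(\alpha_a)$ for every $c$, and $n \elm \Image{s}(\alpha_a)$, whence $a \mathrel{s'} n$. For (ii), because $\Image{s}(\alpha_a)$ is always inhabited, $a \elm \dom(s')$ collapses to $(\forall b \in \FSeq)\, \Image{s}(\alpha_a) = \Image{s}(\alpha_{a*b})$; each $\Image{s}(\alpha_x)$ is a singleton of $\Nat$, so equality reduces to equality of their unique natural-number representatives, which is decidable, and under a coding of $\FSeq$ into $\Nat$ this exhibits $\dom(s')$ as $\Pi^{0}_{1}$.

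For (iii), I would rely on the computation around \eqref{eq:MonotonePFuncPiDom}: given $\alpha \colon \Pt{\FBaire}$, a witness $a \elm \alpha$ for $n \elm \Image{s'}(\alpha)$ (so $a \mathrel{s'} n$), and a witness $b \elm \alpha$ for $m \elm \Image{s}(\alpha)$ (so $b \mathrel{s} m$), filtering of $\alpha$ forces $a$ and $b$ to be comparable, and a short case split exploiting the stability of $s'$ at $a$ yields $n = m$. The converse inclusion is direct once $s$ is replaced by its monotonisation via Lemma \ref{lem:Monotonisation}: the same witness $b \elm \alpha$ then satisfies $b \mathrel{s'} m$. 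The main obstacle is clause (ii): in the Minimalist Foundation the claim that $\dom(s')$ is genuinely $\Pi^{0}_{1}$ depends on being able to treat $\Image{s}(\alpha_a)$ as an actual natural number uniformly in $a$, which I would justify by inspecting the monotonisation of $s$ on the downward closure of $a$ to extract the unique $n$ with $a \mathrel{s} n$.
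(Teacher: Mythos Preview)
Your proof is essentially the paper's own argument: the same $s'$, the same case split for $\Image{s'}(\alpha) \subseteq \Image{s}(\alpha)$, and the same appeal to a coding for the $\Pi^{0}_{1}$ claim. Two small remarks are worth making.

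First, the monotonisation detour for the converse inclusion is unnecessary. Given $b \elm \alpha$ with $b \mathrel{s} m$, one has $b \elm \alpha_{b*c}$ for every $c \in \FSeq$, hence $m \elm \Image{s}(\alpha_{b*c})$ and therefore $\Image{s}(\alpha_{b*c}) = \{m\} = \Image{s}(\alpha_b)$; thus $b \mathrel{s'} m$ directly. The paper simply calls this step ``obvious''.

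Second, your caution about clause (ii) is well placed, but the fix you sketch does not work as stated: the element $a$ need not lie in $\dom(s)$, so there is in general no ``unique $n$ with $a \mathrel{s} n$'' to extract by inspecting the monotonisation. What is actually needed is the unique member of the singleton $\Image{s}(\alpha_a)$, and obtaining it as a natural number uniformly in $a$ is an instance of unique choice, which the Minimalist Foundation does not grant outright. The paper handles this no more explicitly than you do; it merely writes ``under a suitable coding of $\FSeq$ in $\Nat$'' and moves on.
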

Note that Proposition \ref{prop:ContMapMonotonePFuncPiDom} does not imply that
$\ContPiOne$ is equivalent to $\ContBN$ since equality of
formal maps from $\FBaire_{\Ip}$ to $\FNat$ is weaker than equality
of formal maps from $\FBaire$ to $\FNat$.

The principle of bar induction that corresponds to $\ContPiOne$ is
\emph{$\Pi^{0}_{1}$ monotone bar induction}:
\begin{description}
  \item[($\BIPiOne$)]
    For any monotone $\Pi^{0}_{1}$ bar $U \subseteq \FSeq$ and an inductive
  subset $V \subseteq \FSeq$ such that $U \subseteq V$, 
  it holds that  $\nil \elm V$.

\end{description}
In $\BIPiOne$,
we cannot drop the condition of monotonicity on bars;
otherwise the non-constructive principle $\LLPO$ (the lesser
limited principle of omniscience) would be
derivable (cf.\ Kawai \cite[Proposition 7.1]{KawaiUnifContBaire}).
\begin{proposition} \label{prop:ContPiOnequivBIPiOne}
  The following are equivalent.
  \begin{enumerate}
    \item $\BIPiOne$.
    \item $\ContPiOne$.
    \item $a \cov_{\FBaire_{\Ip}} U
      \imp a \cov_{\FBaire} U$ for all $a \in \FSeq$ and 
      monotone
      $\Pi^{0}_{1}$-subset $U \subseteq
      \FSeq$.
    \item $\nil \cov_{\FBaire_{\Ip}} U
      \imp \nil \cov_{\FBaire} U$ for all monotone $\Pi^{0}_{1}$-subset $U \subseteq
      \FSeq$.
  \end{enumerate}
\end{proposition}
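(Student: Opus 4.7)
The plan is to establish the cycle $(1) \Rightarrow (4) \Rightarrow (3) \Rightarrow (2) \Rightarrow (1)$, following closely the schematic of the proofs of Theorem~\ref{thm:ContBEquivBIm} and Proposition~\ref{prop:ContBDEquivBId}. Three of these implications should adapt essentially verbatim. For $(1) \Rightarrow (4)$, given a monotone $\Pi^{0}_{1}$ set $U$ with $\nil \cov_{\FBaire_{\Ip}} U$ (a monotone $\Pi^{0}_{1}$ bar, since $\downset_{\FBaire} U = U$), I would use $V \defeql \left\{ a \in \FSeq \mid a \bcov_{\FBaire} U \right\}$ as the inductive subset and apply $\BIPiOne$ followed by Lemma~\ref{lem:ElimZeta}. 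For $(2) \Rightarrow (1)$, I would turn a monotone $\Pi^{0}_{1}$ bar $U$ into the monotone partial function $s_{U}$ defined by $a \mathrel{s_{U}} n \defeqiv a \elm U \amp n = 0$, whose $\Pi^{0}_{1}$ bar domain is exactly $U$, and feed it through $\ContPiOne$ to obtain $\nil \cov_{\FBaire} U$. For $(3) \Rightarrow (2)$, the key observation is that when $s \subseteq \FSeq \times \Nat$ is a monotone partial function, $\dom(s)$ is automatically monotone as a subset (each $s^{-}n$ is downward closed, and a union of downward closed sets is downward closed); a $\Pi^{0}_{1}$ bar hypothesis on $\dom(s)$ together with $(3)$ then yields $\nil \cov_{\FBaire} \dom(s)$, from which Corollary~\ref{col:MaxCoDomFormalMap} and Lemma~\ref{lem:pFunction} produce the formal map.

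The main obstacle is $(4) \Rightarrow (3)$. The arguments of Theorem~\ref{thm:ContBEquivBIm} and Proposition~\ref{prop:ContBDEquivBId} hinge on the equivalence $a \cov U \leftrightarrow \nil \cov U \cup C_{a}$ where $C_{a}$ is as in \eqref{eq:complement}; this does not transfer directly because $C_{a}$ is \emph{not} monotone, so $U \cup C_{a}$ need not be monotone even when $U$ is. My remedy is to replace $C_{a}$ by its downward closure $\downset_{\FBaire} C_{a} = \left\{ c \in \FSeq \mid \lh{c} \geq \lh{a} \amp \overline{c}\lh{a} \neq a \right\}$, which is decidable (since equality on $\FSeq$ is decidable) and monotone by construction. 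The set $W \defeql U \cup \downset_{\FBaire} C_{a}$ is therefore monotone, and it is $\Pi^{0}_{1}$ because the union of a decidable set with a $\Pi^{0}_{1}$ set is $\Pi^{0}_{1}$: if $c \elm U \leftrightarrow \left( \forall n \in \Nat \right) R(c,n)$ for decidable $R$, then $c \elm W \leftrightarrow \left( \forall n \in \Nat \right) \left( c \elm \downset_{\FBaire} C_{a} \vee R(c,n) \right)$, whose matrix is decidable.

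With this substitution in place, the reasoning mirrors the original argument. From $a \cov_{\FBaire_{\Ip}} U$ one derives $\nil \cov_{\FBaire_{\Ip}} W$ via \eqref{eq:LevelN} and a case split on whether the length-$\lh{a}$ initial segment of a point is $a$ or not (decidable on $\FSeq$); then $(4)$ upgrades this to $\nil \cov_{\FBaire} W$ and the $\zeta$-rule gives $a \cov_{\FBaire} W$. Finally, $\downset_{\FBaire} C_{a} \downarrow \left\{ a \right\} = \emptyset$ because no sequence can both extend $a$ and admit an initial segment of length $\lh{a}$ distinct from $a$, while $U \downarrow \left\{ a \right\} \subseteq U$ by monotonicity of $U$; applying ($\downset$-right) therefore yields $a \cov_{\FBaire} W \downarrow \left\{ a \right\} \subseteq U$, and so $a \cov_{\FBaire} U$ by transitivity. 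Once $(4) \Rightarrow (3)$ is secured the cycle closes and all four statements are equivalent.
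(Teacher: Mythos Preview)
Your proof is correct and follows essentially the same schematic the paper intends: the paper's own proof is the single line ``analogous to that of Proposition~\ref{prop:ContSigmaOnequivBISimgaOne}'', and you have correctly filled in the details, including the one genuinely new point---that in the step $(4)\Rightarrow(3)$ the set $U\cup C_a$ must be replaced by the monotone $\Pi^0_1$ set $U\cup\downset_{\FBaire}C_a$, which the paper leaves implicit. Your verification that this union remains $\Pi^0_1$ (using decidability of $\downset_{\FBaire}C_a$ to pull the disjunction under the universal quantifier) and that $\downset_{\FBaire}C_a\downarrow\{a\}=\emptyset$ are both sound.
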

\begin{proof}
  The proof is analogous to that of Proposition
  \ref{prop:ContSigmaOnequivBISimgaOne}. 
\end{proof}
By Proposition \ref{prop:ContMapMonotonePFuncPiDom} and Proposition
\ref{prop:ContPiOnequivBIPiOne},  the principle
$\BIPiOne$ implies that every continuous map $(\Image{s},s) \colon
\Ip( \FBaire ) \to \Ip(\FNat)$ is induced by some formal map $s' \colon
\FBaire \to \FNat$. That the converse of this holds is essentially
known (cf.\ Kawai \cite[Theorem
3.9]{KawaiFContOnBaire}). For the sake of completeness, we sketch the
proof using our notion of continuous map.
\begin{theorem} \label{thm:PiBiEquivInducedFMap}
  The following are equivalent.
  \begin{enumerate}
    \item\label{thm:PiBiEquivInducedFMap1} $\BIPiOne$.
    \item\label{thm:PiBiEquivInducedFMap2} For any continuous map 
    $(\Image{s},s) \colon \Ip( \FBaire ) \to \Ip(\FNat)$, there exists
    a formal map $s' \colon \FBaire \to \FNat$ such that
    $\Image{s} = \Image{s'}$.
  \end{enumerate}
\end{theorem}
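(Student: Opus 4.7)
The plan is to prove the two directions separately. For $(1) \Rightarrow (2)$, the approach is immediate: given a continuous map $(\Image{s},s) \colon \Ip(\FBaire) \to \Ip(\FNat)$, I would first apply Proposition~\ref{prop:ContMapMonotonePFuncPiDom} to replace it by a monotone partial function $s'$ with $\Pi^{0}_{1}$ domain satisfying $\Image{s} = \Image{s'}$. Because each $\Image{s}(\alpha)$ is inhabited (as an ideal point of $\FNat$), $\dom(s')$ is a bar. Under hypothesis $\BIPiOne$, Proposition~\ref{prop:ContPiOnequivBIPiOne} supplies $\ContPiOne$, which then makes $s'$ a formal map $\FBaire \to \FNat$.

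For the converse $(2) \Rightarrow (1)$, the plan is to derive $\ContPiOne$, equivalent to $\BIPiOne$ by Proposition~\ref{prop:ContPiOnequivBIPiOne}. Let $s$ be a monotone partial function with $\Pi^{0}_{1}$ bar domain; by Proposition~\ref{prop:pFuncMBar} combined with Proposition~\ref{prop:EquiContOp}, $(\Image{s}, s)$ is a continuous map. Applying (2) then produces a formal map $s'' \colon \FBaire \to \FNat$ with $\Image{s} = \Image{s''}$, which by Lemma~\ref{lem:Monotonisation} and Lemma~\ref{lem:pFunction} I may take to be a monotone partial function. Condition \ref{def:FormalMap1} for $s''$ yields $\nil \cov_{\FBaire} \dom(s'')$, so by Corollary~\ref{col:MaxCoDomFormalMap} the only remaining obligation for $s$ itself to be a formal map is $\nil \cov_{\FBaire} \dom(s)$.

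The key step, and the main obstacle, is to show $b \cov_{\FBaire} \dom(s)$ for each $b \elm \dom(s'')$; combined with $\nil \cov_{\FBaire} \dom(s'')$, transitivity then delivers $\nil \cov_{\FBaire} \dom(s)$. For such a $b$, the equality $\Image{s}(\alpha_{b}) = \Image{s''}(\alpha_{b}) = \{s''(b)\}$ furnishes some $c \elm \alpha_{b} \cap \dom(s)$ with $s(c) = s''(b)$, and a decidable comparison of $\lh{c}$ with $\lh{b}$ splits the argument into two cases. If $c$ is a prefix of $b$, monotonicity of $s$ places $b$ in $\dom(s)$ immediately. The harder sub-case is when $c$ strictly extends $b$ by appending zeros; here the plan is to exploit the monotonicity of $s''$, which transports the value $s''(b)$ to all extensions of $b$, together with the $\Pi^{0}_{1}$ bar structure of $\dom(s)$ to derive $b \cov_{\FBaire} \dom(s)$, following the technique of Kawai~\cite[Theorem~3.9]{KawaiFContOnBaire}.
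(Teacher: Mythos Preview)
Your direction $(1)\Rightarrow(2)$ is fine and matches the paper's (which simply remarks that this direction follows from Proposition~\ref{prop:ContMapMonotonePFuncPiDom} and Proposition~\ref{prop:ContPiOnequivBIPiOne}).

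The direction $(2)\Rightarrow(1)$, however, has a genuine gap in the ``harder sub-case''. The difficulty is that you start from an \emph{arbitrary} monotone partial function $s$ with $\Pi^{0}_{1}$ bar domain, apply~(2) to obtain a formal map $s''$, and then try to show $\dom(s'') \cov_{\FBaire} \dom(s)$. But consider the degenerate instance where $U\subseteq\FSeq$ is any monotone $\Pi^{0}_{1}$ bar and $s$ is the constant-zero partial function with domain $U$. Then $\Image{s}(\alpha)=\{0\}$ for every $\alpha$, so the trivial formal map $s''$ with $\nil \mathrel{s''} 0$ (monotonised to all of $\FSeq$) already satisfies $\Image{s''}=\Image{s}$. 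Your ``key step'' for $b=\nil\elm\dom(s'')$ asks for $\nil\cov_{\FBaire}\dom(s)=U$, which is exactly the conclusion of $\BIPiOne$ for $U$ that you are trying to establish. No amount of monotonicity of $s''$ or of the $\Pi^{0}_{1}$ description of $U$ will rescue this: the values carried by $s$ and $s''$ contain no information whatsoever about $U$. The vague appeal to \cite[Theorem~3.9]{KawaiFContOnBaire} does not help, because that argument does not work from an arbitrary $s$.

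The paper's proof proceeds differently. Rather than attempting $\ContPiOne$ directly, it takes a given monotone $\Pi^{0}_{1}$ bar $U$ (with decidable witness $D$) and \emph{manufactures} a specific relation $s$ whose values encode, for each node $a$, the position of the last failure of a decidable predicate $\overline{D}$ among the prefixes of $a$. Applying~(2) yields a formal map $s'$ with the same pointwise images; after a length-bounded truncation $\overline{s'}$, the encoded value at each $a\elm\dom(\overline{s'})$ forces $\overline{D}$ to hold at all extensions of $a$, hence $a\elm\overline{U}\subseteq U$. Thus $\dom(\overline{s'})\subseteq U$, and since $\overline{s'}$ is equal to the formal map $s'$, one gets $\nil\cov_{\FBaire}U$. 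The essential idea you are missing is that the relation fed into~(2) must be \emph{designed} so that its output values carry the bar data; an arbitrary $s$ does not.
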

\begin{proof}
  It suffices to show that \eqref{thm:PiBiEquivInducedFMap2}
  implies \eqref{thm:PiBiEquivInducedFMap1}.
  Assume \eqref{thm:PiBiEquivInducedFMap2}. Let $U \subseteq \FSeq$ be a
  $\Pi^{0}_{1}$ monotone bar, and $V \subseteq \FSeq$ be an
  inductive subset such that $U \subseteq V$. Then, there exists a
  decidable subset $D \subseteq \FSeq \times \Nat$ such that $a \elm
  U \leftrightarrow \left( \forall n \in \Nat \right)D(a, n)$ for all
  $a \in \FSeq$. Since $U$ is monotone, the right hand side is
  equivalent to 
  $\left( \forall b \in \FSeq \right)\left( \forall n \in \Nat
  \right)D(a * b, n)$ for all $a \in \FSeq$. Define a decidable subset
  $\overline{D} \subseteq \FSeq$ by
  \[
    \overline{D}(a) \defeqiv \lh{a} > 0 \imp D(\head{a},\tail{a})
  \]
  where $\head{a}$ is obtained from $a$ by omitting the last
  entry $\tail{a}$ of $a$. Furthermore, define $\overline{U}
  \subseteq \FSeq$ by
  \[
    \overline{U}(a) \defeqiv U(a) \amp \overline{D}(a).
  \]
  Then, it is straightforward to see that
  $\overline{U}$ is a monotone bar and that 
  \[
    \overline{U}(a) \leftrightarrow \left( \forall b \in \FSeq \right)
    \overline{D}(a*b).
  \]
  Note that $\overline{U}$ is a $\Pi^{0}_{1}$  subset of $\FSeq$.
  Define a relation $s \subseteq \FSeq \times \Nat$ by
  \begin{multline*}
    a \mathrel{s} n \defeqiv 
    \overline{U}(a) \amp
    \Bigl[ \left(  n < \lh{a}
      \amp \neg \overline{D}(\overline{a}n) \amp \left( \forall m <
      \lh{a} \right) \left[n < m \imp \overline{D}(\overline{a}m)
      \right]\right) \Bigr . \\
    \Bigl .
    \vee \left( \left( \forall m
      < \lh{a} \right) \overline{D}(\overline{a}m) \amp n = 1
    \right)\Bigr].
  \end{multline*}
  It is easy to see that $s$ is a monotone partial function with
  domain $\overline{U}$. By the assumption, there exists a formal map
  $s' \colon \FBaire \to \FNat $ such that $\Image{s'} =
  \Image{s}$. Define a relation $\overline{s'} \subseteq \FSeq
  \times \Nat$ by 
  \[
    a \mathrel{\overline{s'}} n \defeqiv n < \lh{a} \amp \left( \exists b
    \geq_{\FBaire} a \right) b \mathrel{s'} n.
  \]
  By \eqref{eq:LevelN}, we see that $\overline{s'}$ and $s'$ are equal
  as formal maps from $\FBaire$ to $\FNat$.
  Thus $\Image{\overline{s'}} = \Image{s'}$. Then, it is
  straightforward to show that $\dom(\overline{s'}) \subseteq
  \overline{U}$. Hence $\nil \cov_{\FBaire} U$, and therefore $\nil
  \elm V$ by Lemma \ref{lem:ElimZeta}.
\end{proof}

\subsection{Relativisation to spreads}\label{sec:Spread}
In this section, we relativise our continuity
principle to spreads. This is in accord
with Brouwer \cite{BrouwerDomainsofFunctions},
who considered continuity of functions on 
choice sequences which belong to a fixed spread.

\begin{definition}
  A \emph{spread} is a decidable tree of natural numbers in which every node
  has an extension. Specifically, a spread is an inhabited decidable subset
  $U \subseteq \FSeq$ such that
  \begin{enumerate}
    \item $a \elm U  \amp a \leq_{\FBaire} b \imp b \elm U$,
    \item $a \elm U \imp \left( \exists n \in \Nat \right)
      \cons{a}{n} \elm U$
  \end{enumerate}
  for all $a,b \in S$ and $U \subseteq S$.
  Note that every spread enters $\pos_{\FBaire}$.
  We say that a choice sequence $\alpha \colon \Pt{\FBaire}$
  \emph{belongs to} a
spread $U$ if $\alpha \subseteq U$, i.e.\
  every choice made by $\alpha$ is in the spread.
\end{definition}
  Every spread $U$ determines a cover $\cov_{U}$ and a
  positivity $\pos_{U}$ on $\FSeq$ by
  \begin{align}\label{eq:SubSpaSpread}
      a \cov_{U} V &\defeqiv a \cov_{\FBaire} \neg U \cup V  &
      a \pos_{U} V &\defeqiv a \pos_{\FBaire} U \cap V 
  \end{align}
  where $\neg U \defeql \left\{ a \in \FSeq \mid \neg \left( a \elm U
  \right) \right\}$.
\begin{lemma}\label{lem:SubSpaSpread}
  Let $U$ be a spread.
  \begin{enumerate}
    \item\label{lem:SubSpaSpread1} $\FBaire_{U} \defeql (\FSeq,
      \cov_{U}, \pos_{U})$ is a positive topology.
    \item\label{lem:SubSpaSpread3} $\pos_{U}$ is the 
      greatest
      positivity that is compatible with $\cov_{U}$.
    \item\label{lem:SubSpaSpread4}
      $\alpha \colon \Pt{\FBaire_{U}}$
      if and only if $\alpha$ is a choice sequence belonging to $U$.
  \end{enumerate}
\end{lemma}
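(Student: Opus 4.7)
The plan is to verify each of the three claims by unfolding the definitions of $\cov_{U}$ and $\pos_{U}$ from \eqref{eq:SubSpaSpread} and reducing to the corresponding axioms of $\FBaire$. Two structural facts about the spread $U$ will be used throughout: $\cov_{\FBaire} \subseteq \cov_{U}$, by monotonicity of $\cov_{\FBaire}$ in the second argument; and, since $U$ is upward closed in $\leq_{\FBaire}$, the complement $\neg U$ is downward closed, which yields the absorbing identity $\neg U \downarrow_{\FBaire} V \subseteq \neg U$ for every $V \subseteq \FSeq$. For \eqref{lem:SubSpaSpread1}, I would check each axiom of positive topology in turn: reflexivity and transitivity of $\cov_{U}$, and coreflexivity and cotransitivity of $\pos_{U}$, reduce by direct unfolding to the corresponding axioms for $\FBaire$; compatibility of $\pos_{U}$ with $\cov_{U}$ invokes decidability of $U$ to discard a would-be witness in $\neg U$; and $(\downarrow\text{-right})$ follows by decomposing $(\neg U \cup V) \downarrow_{\FBaire} (\neg U \cup W)$ into four pieces, three of which are absorbed into $\neg U$ while the fourth, $V \downarrow_{\FBaire} W$, embeds into $V \downarrow_{\FBaire_{U}} W$ via $\cov_{\FBaire} \subseteq \cov_{U}$.

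For \eqref{lem:SubSpaSpread3}, suppose $\pos'$ is any positivity on $\FSeq$ compatible with $\cov_{U}$. Since $\cov_{\FBaire} \subseteq \cov_{U}$, Lemma~\ref{lem:Compatibility} makes $\pos'$ compatible with $\cov_{\FBaire}$, hence $\pos' \subseteq \pos_{\FBaire}$ by maximality of $\pos_{\FBaire}$. The remaining task is to show $a \pos' V \imp a \pos_{\FBaire}(U \cap V)$. My plan is first to prove the auxiliary fact that $b \pos' V$ forces $b \elm U$: otherwise, decidability of $U$ gives $b \elm \neg U$, whence $b \cov_{\FBaire} \neg U$ yields $b \cov_{U} \emptyset$, and compatibility of $\pos'$ with $\cov_{U}$ forces the absurd $\emptyset \pos' V$. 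Together with coreflexivity of $\pos'$, every $b$ with $b \pos' V$ lies in $U \cap V$, so cotransitivity of $\pos'$ promotes $a \pos' V$ to $a \pos' (U \cap V)$, and the inclusion $\pos' \subseteq \pos_{\FBaire}$ then delivers $a \pos_{\FBaire}(U \cap V) = a \pos_{U} V$.

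For \eqref{lem:SubSpaSpread4}, the direction $(\Leftarrow)$ is routine: each of the four defining properties of a point of $\FBaire$ transfers to its $\FBaire_{U}$ counterpart, with $\alpha \subseteq U$ simplifying the terms involving $\neg U$ or $U \cap V$. For $(\Rightarrow)$, I would first deduce $\alpha \subseteq U$ by the same trick as in \eqref{lem:SubSpaSpread3}: any $a \elm \alpha \cap \neg U$ yields $a \cov_{U} \emptyset$, contradicting that $\alpha$ splits $\cov_{U}$. Splitting of $\cov_{\FBaire}$ follows at once from $\cov_{\FBaire} \subseteq \cov_{U}$, and entering $\pos_{\FBaire}$ follows from entering $\pos_{U}$ by the same cotransitivity argument as in \eqref{lem:SubSpaSpread3}. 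The main obstacle will be verifying that $\alpha$ is filtering in $\FBaire$, because the tempting implication ``$c \elm U$ and $c \cov_{U} \{a\}$ imply $c \leq_{\FBaire} a$'' is actually false: for example, in the all-zeros spread one has $\nil \cov_{U} \{\seq{0,0}\}$ while $\nil \not\leq_{\FBaire} \seq{0,0}$. My plan is instead to rely on decidability of $\leq_{\FBaire}$: when $a \leq_{\FBaire} b$ or $b \leq_{\FBaire} a$ holds, one of $a, b$ itself is a filtering witness; and when $a, b$ are incomparable, I would apply $\FBaire_{U}$-filtering to obtain $c \elm \alpha$ with $c \cov_{U} \{a\}$ and $c \cov_{U} \{b\}$, then invoke $(\downarrow\text{-right})$ of $\cov_{\FBaire}$ to conclude $c \cov_{\FBaire} (\neg U \cup \{a\}) \downarrow_{\FBaire} (\neg U \cup \{b\})$. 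By incomparability the sub-product $\{a\} \downarrow_{\FBaire} \{b\}$ is empty, and the absorbing identity places the other three sub-products in $\neg U$, so $c \cov_{U} \emptyset$ and hence $\alpha \meets \emptyset$, a contradiction.
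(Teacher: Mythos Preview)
Your proposal is correct and follows essentially the same route as the paper, supplying detail where the paper simply says ``we leave to the reader'' or ``it is straightforward''. In particular, your treatment of \eqref{lem:SubSpaSpread1} and \eqref{lem:SubSpaSpread3} matches the paper's argument, with your explicit appeal to Lemma~\ref{lem:Compatibility} making the reasoning for $\pos' \subseteq \pos_{\FBaire}$ a bit cleaner than the paper's ad~hoc ``mixed cotransitivity'' display.

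The one place where you take a small detour is the filtering clause in \eqref{lem:SubSpaSpread4}~$(\Rightarrow)$. Your three-way case split on comparability works, but it is not needed: the very decomposition you use in the incomparable case already shows, without any case analysis, that
\[
  (\neg U \cup \{a\}) \downarrow_{\FBaire} (\neg U \cup \{b\}) \subseteq \neg U \cup (a \downarrow_{\FBaire} b),
\]
so from $c \cov_{U} \{a\}$ and $c \cov_{U} \{b\}$ you get $c \cov_{U} (a \downarrow_{\FBaire} b)$ directly, and since $\alpha$ splits $\cov_{U}$ you obtain $\alpha \meets (a \downarrow_{\FBaire} b)$. This is presumably the ``straightforward'' argument the paper has in mind, and it also recycles exactly the absorbing identity you established for $(\downarrow\text{-right})$ in part~\eqref{lem:SubSpaSpread1}.
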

\begin{proof}
  \eqref{lem:SubSpaSpread1} We leave the reader to check that
  $\cov_{U}$ is a cover and $\pos_{U}$ is a positivity. We show that
  $\cov_{U}$ and $\pos_{U}$ are compatible. Let $a \in \FSeq$ and
  $V,W \subseteq \FSeq$.  Suppose that $a \cov_{U}V$ and $a \pos_{U}
  W$. Since $\cov_{\FBaire}$ and $\pos_{\FBaire}$ are compatible, we
  have $(\neg U \cup V) \pos_{\FBaire} U \cap W$.  Either $\neg U
  \pos_{\FBaire} U \cap W$ or $V \pos_{\FBaire} U \cap W$. The former
  case leads to a contradiction.  Hence $V \pos_{\FBaire} U \cap W$,
  i.e.\ $V \pos_{U} W$, as required.

\smallskip
\noindent\eqref{lem:SubSpaSpread3} Let $\pos'$ be a positivity on
$\FSeq$ which is compatible with $\cov_{U}$.
We show that
\[
  \frac{a \pos' V \quad
    \left( \forall b \in \FSeq \right)\left( b \pos' V \imp
  b \elm W\right)} {a \pos_{\FBaire} W}
\]
for all $V, W \subseteq \FSeq$. 
To this end, fix $V, W \subseteq \FSeq$, and put $V' = \left\{ a \in
\FSeq \mid a \pos' V \right\}$. Suppose that $a \pos' V$ and 
$\left( \forall b \in \FSeq \right) b \pos' V \imp b \elm W$.
This is equivalent to $a \elm V'$ and $V' \subseteq W$.
Since $\cov_{\FBaire} \subseteq \cov_{U}$, $\pos'$ is compatible with
$\cov_{\FBaire}$. Thus $\pos' \subseteq \pos_{\FBaire}$. Since 
$a \elm V'$ implies $a \pos' V'$, we have  $a
\pos_{\FBaire} V'$. Hence $a \pos_{\FBaire} W$.

Now, to see that $\pos' \subseteq \pos_{U}$,  it suffices to show that $a \pos' V \imp a \elm U
\cap V$ for all $a \in \FSeq$ and $V \subseteq \FSeq$. Suppose that $a
\pos' V$. Then, trivially $a \elm V$.  Moreover, either $a \elm U$ or
$a \elm \neg U$.  If $a \elm \neg U$, then $a \cov_{U} \emptyset$, and
by compatibility we obtain $\emptyset \pos' V$, a contradiction.
Hence $a \elm U$, and therefore $a \elm U \cap V$.

\smallskip
\noindent\eqref{lem:SubSpaSpread4} First, suppose that $\alpha \colon
\Pt{\FBaire_{U}}$. It is straightforward to show that $\alpha$
is an ideal point of $\FBaire$. Moreover, since $\neg U \cov_{U} \emptyset$, we have 
$\alpha \cap \neg U = \emptyset$. As $U$ is decidable, we obtain
$\alpha \subseteq \neg \neg U = U$. Conversely, suppose that $\alpha$
belongs to $U$. We show that $\alpha$ enters
$\pos_{U}$. Let $a \in \FSeq$ and $V \subseteq \FSeq$, and suppose
that $a \elm \alpha \subseteq V$. Then, $a \elm \alpha \subseteq V
\cap U$. Since $\alpha$ enters $\pos_{\FBaire}$, we have $a
\pos_{\FBaire} V \cap U$, i.e.\ $a \pos_{U} V$, as required.
The other properties of ideal points are easily checked.
\end{proof}
We often identify a spread $U \subseteq \FSeq$ with the topology
$\FBaire_{U}$.

\begin{lemma}
  \label{lem:ImgSpread}
  Every spread is a retract of $\FBaire$.
\end{lemma}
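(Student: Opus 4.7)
The plan is to exhibit explicit formal maps $i \colon \FBaire_U \to \FBaire$ and $r \colon \FBaire \to \FBaire_U$ satisfying $r \circ i = \id_{\FBaire_U}$. Since $\pos_U$ is the greatest positivity compatible with $\cov_U$ (Lemma~\ref{lem:SubSpaSpread}), Corollary~\ref{col:MaxCoDomFormalMap} lets us skip \ref{def:FormalMap4} when the codomain is $\FBaire_U$. For $i$, take the identity relation $\id_{\FSeq}$: conditions \ref{def:FormalMap1} and \ref{def:FormalMap2} are trivial; \ref{def:FormalMap3} follows from $\cov_{\FBaire} \subseteq \cov_U$ (since $V \subseteq \neg U \cup V$); and \ref{def:FormalMap4} reduces to $b \pos_{\FBaire} U \cap V \imp b \pos_{\FBaire} V$, immediate by cotransitivity.

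The construction of $r$ is the core step. I would first build a length-preserving function $\rho \colon \FSeq \to U$ by recursion: set $\rho(\nil) \defeql \nil$ (noting $\nil \elm U$ since $U$ is inhabited and closed under prefixes); set $\rho(\cons{a}{n}) \defeql \cons{\rho(a)}{n}$ when $\cons{\rho(a)}{n} \elm U$, and $\rho(\cons{a}{n}) \defeql \cons{\rho(a)}{m}$ otherwise, where $m$ is the least natural number with $\cons{\rho(a)}{m} \elm U$---which exists by the spread property applied to $\rho(a) \elm U$ and is computed by unbounded search over a decidable predicate. One checks inductively that $\rho(a) \elm U$, that $\rho$ is monotone ($a \leq_{\FBaire} a' \imp \rho(a) \leq_{\FBaire} \rho(a')$), and that $\rho(a) = a$ whenever $a \elm U$. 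Then set $a \mathrel{r} b \defeqiv \rho(a) \leq_{\FBaire} b$, so $a \elm r^{-}b$ means $b$ is a prefix of $\rho(a)$; in particular $r^{-}b = \emptyset$ for $b \elm \neg U$.

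To show $r$ is a formal map into $\FBaire_U$, condition \ref{def:FormalMap1} is trivial. Condition \ref{def:FormalMap2} uses monotonicity of $\rho$: if $a \elm r^{-}b \downarrow r^{-}c$, then both $b$ and $c$ are prefixes of $\rho(a)$, hence comparable, so the longer one lies in $b \downarrow c$ and is still a prefix of $\rho(a)$, giving $a \elm r^{-}(b \downarrow c)$. For \ref{def:FormalMap3}, given $b \cov_U V$, i.e.\ $b \cov_{\FBaire} \neg U \cup V$, induct on the generation of $\cov_{\FBaire}$: the $\eta$-step splits on whether $b \elm \neg U$ (then $r^{-}b = \emptyset$) or $b \elm V$ (then $r^{-}b \subseteq r^{-}V$); the $\zeta$-step uses that $r^{-}$ is antitone in the prefix order; and the crucial $\digamma$-step argues that for each $a \elm r^{-}b$ and $n \in \Nat$, there exists $k \in \Nat$ with $\cons{b}{k}$ a prefix of $\rho(\cons{a}{n})$ (namely the entry of $\rho(\cons{a}{n})$ at position $\lh{b}$), whence $\cons{a}{n} \elm r^{-}(\cons{b}{k})$ and the inductive hypothesis at $k$ yields $\cons{a}{n} \cov_{\FBaire} r^{-}V$; then $a \cov_{\FBaire} r^{-}V$ by the $\digamma$-rule.

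Finally, $r \circ i$ equals $r$ as a relation and must equal $\id_{\FSeq}$ as a formal map $\FBaire_U \to \FBaire_U$, i.e.\ $a \cov_U r^{-}b \leftrightarrow a \cov_U \{b\}$. The ($\Leftarrow$) direction uses $b \elm r^{-}b$ for $b \elm U$ (via $\rho(b) = b$), and is vacuous when $b \elm \neg U$. The ($\Rightarrow$) direction is where decidability of $U$ is essential: each $a' \elm r^{-}b$ is either in $U$, in which case $\rho(a') = a'$ is an extension of $b$ so $a' \cov_{\FBaire} \{b\}$; or in $\neg U$, so $a' \cov_{\FBaire} \neg U$. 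I expect the $\digamma$-case of \ref{def:FormalMap3} to be the main obstacle, requiring careful bookkeeping of how $\rho$ propagates one-step extensions of the input.
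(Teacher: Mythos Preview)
Your argument is correct and rests on the same underlying construction as the paper's: the length-preserving projection $\rho \colon \FSeq \to U$ and the induction on $\cov_{\FBaire}$ in the $\digamma$-case. The packaging differs, however. The paper takes the relation $s$ to be the graph of $\rho$, first verifies that $s$ is a formal map $\FBaire \to \FBaire$, and then proves $\im[s] = \FBaire_U$ by establishing $\cov_{s} = \cov_{U}$ and $\pos_{s} = \pos_{U}$ directly (invoking the image-factorisation machinery of Lemma~\ref{lem:Image} and Proposition~\ref{prop:ImageFact}); the retraction equation then falls out. You instead take $r$ to be the monotonisation of $\rho$ and verify \ref{def:FormalMap1}--\ref{def:FormalMap3} for $r \colon \FBaire \to \FBaire_U$ head-on, appealing to Corollary~\ref{col:MaxCoDomFormalMap} for \ref{def:FormalMap4}, and then check the retraction identity by hand. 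Your route is a touch more self-contained (it does not invoke the image construction), while the paper's route illustrates how the image factorisation absorbs part of the verification; the inductive core is the same in both.
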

\begin{proof}
  This is well known in point-set topology (cf.\ Troelstra and van
  Dalen \cite[Chapter 4, Lemma 1.4]{ConstMathI}). Here, we give a
  pointfree proof.
  
  Fix a spread $U \subseteq \FSeq$.  Clearly, the identity relation
  $\id_{\FSeq}$ on $\FSeq$ is a formal map from $\FBaire_{U}$ to
  $\FBaire$. Its retraction $s \colon \FBaire \to
  \FBaire_{U}$ is inductively defined as follows:
  \begin{gather*}
    \frac{}{\nil \mathrel{s} \nil} \qquad
    \frac{a \mathrel{s} b \quad b * n \elm U}
    {a * n \mathrel{s} b * n} \qquad
    \frac{a \mathrel{s} b \quad \neg (b * n \elm U)}
    {a * n \mathrel{s} b * l}
\end{gather*}
In the last rule, $l$ is the least number such that
$b * l \elm U$. The following properties of $s$ directly follow from
the definition:
\begin{enumerate}
  \item $s$ is a function.
  \item $a \elm U \leftrightarrow a \mathrel{s} a$.
  \item $a \mathrel{s} b \imp \lh{a} = \lh{b}$.
  \item $a * n \mathrel{s} b * m \imp a \mathrel{s} b$.
  \item $a \leq_{\FBaire} b \amp a \mathrel{s} c \amp b \mathrel{s} d \imp c \leq_{\FBaire}  d$.
\end{enumerate}
Then, it is straightforward to show that $s$ is a
formal map from $\FBaire$ to $\FBaire$. We show that $\im[s] =
\FBaire_{U}$, i.e.\ $\cov_{s} = \cov_{U}$ and $\pos_{s} = \pos_{U}$
(cf.\ \eqref{eq:Image} and \eqref{eq:SubSpaSpread}).

\smallskip
\noindent $\cov_{s} = \cov_{U}$: Fix $a \in \FSeq$ and $V \subseteq
\FSeq$. First, suppose that $a \cov_{s} V$, i.e.\ $s^{-}a
\cov_{\FBaire} s^{-}V$. Either $a \elm U$ or $a \elm \neg U$. In the latter case, we have $a \cov_{U} V$. In
  the former case, we have $a \mathrel{s} a$. Thus, $a \cov_{\FBaire}
  s^{-}V$. Let $b \elm s^{-}V$. Either $b \elm U$ or $b \elm \neg U$. 
  If $b \elm U$, then $b \elm V$. 
  Hence $s^{-} V \subseteq \neg U
  \cup V$, and so $a \cov_{U} V$. 

  The converse is proved by induction on $\cov_{\FBaire}$. Suppose that $a
  \cov_{U}V$ is derived by $\eta$-rule. Then either $a \elm \neg U$ or
  $a \elm V$. If $a \elm \neg U$, then $s^{-}a = \emptyset
  \cov_{\FBaire} s^{-}V$, and thus $a \cov_{s} V$. If $a \elm V$, then
  trivially $a \cov_{s}V$. The proof for the other rules follow
  easily from induction hypothesis and the above mentioned properties of $s$.

  \smallskip
  \noindent $\pos_{s} = \pos_{U}$: Since $\cov_{s} = \cov_{U}$ and
  $\pos_{U}$ is the greatest positivity compatible with $\cov_{U}$, it
  suffices to show that $\pos_{U} \subseteq \pos_{s}$. Suppose that $a
  \pos_{U} V$.  We must show that $a \pos_{s} V$, i.e.\ $s^{-}a
  \pos_{\FBaire} s^{*}V$.  But $a \elm U$, and so $a \mathrel{s} a$.
  Since $U \cap V \subseteq s^{*}V$, we have $a \pos_{\FBaire}
  s^{*}V$.

  \smallskip
  Therefore, $s$ is a formal map from $\FBaire$ to $\FBaire_{U}$.
  Then, it is straightforward to show that the composition of
  $\id_{\FSeq} \colon \FBaire_{U} \to \FBaire$ and $ s \colon \FBaire
  \to \FBaire_{U}$ is the identity.
\end{proof}

\begin{proposition}\label{prop:SpreadPoPC}
  For every spread $U$ and a positive topology $\mathcal{T}$,
  the principle $\Cont_{\FBaire, \mathcal{T}}$ implies
  $\Cont_{\FBaire_U,\mathcal{T}}$.
\end{proposition}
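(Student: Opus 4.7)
The plan is to obtain this as an immediate consequence of Lemma~\ref{lem:ImgSpread} combined with Proposition~\ref{prop:ImagePoPC}. Recall that Proposition~\ref{prop:ImagePoPC} says exactly that if $s \colon \mathcal{S} \to \mathcal{T}$ is a formal map, then $\Cont_{\mathcal{S},\mathcal{S}'}$ entails $\Cont_{\im[s],\mathcal{S}'}$ for any positive topology $\mathcal{S}'$. So it suffices to exhibit $\FBaire_{U}$ as the image of some formal map out of $\FBaire$.

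First I would invoke Lemma~\ref{lem:ImgSpread}: for the given spread $U$, the construction in its proof yields a formal map $s \colon \FBaire \to \FBaire_{U}$ (that is, a retraction splitting the inclusion $\id_{\FSeq} \colon \FBaire_{U} \to \FBaire$) for which the identities $\cov_{s} = \cov_{U}$ and $\pos_{s} = \pos_{U}$ were verified. Consequently $\im[s] = \FBaire_{U}$ on the nose.

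Next I would apply Proposition~\ref{prop:ImagePoPC} to this formal map $s$ with $\mathcal{S}' \defeql \mathcal{T}$. This yields that $\Cont_{\FBaire,\mathcal{T}}$ implies $\Cont_{\im[s],\mathcal{T}}$, which by the identification $\im[s] = \FBaire_{U}$ is exactly $\Cont_{\FBaire_{U},\mathcal{T}}$, as required.

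There is no real obstacle: the substantive work has been done in Lemma~\ref{lem:ImgSpread} (establishing that $\FBaire_{U}$ sits inside $\FBaire$ as the image of a concrete formal map, so that its cover and positivity are both inherited from $\FBaire$) and in Proposition~\ref{prop:ImagePoPC} (which transports the principle along formal maps to their images). The only thing to be careful about is to use the specific retraction $s$ constructed in Lemma~\ref{lem:ImgSpread}, since it is for this particular $s$ that the equality $\im[s] = \FBaire_{U}$ was proved, rather than for an arbitrary formal map with codomain $\FBaire_{U}$.
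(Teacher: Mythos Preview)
Your proof is correct and follows exactly the paper's own argument, which simply cites Lemma~\ref{lem:ImgSpread} and Proposition~\ref{prop:ImagePoPC}. You have merely spelled out the details: that the retraction $s$ constructed in the proof of Lemma~\ref{lem:ImgSpread} satisfies $\im[s] = \FBaire_{U}$, so Proposition~\ref{prop:ImagePoPC} applies directly.
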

\begin{proof}
  Immediate from Lemma \ref{lem:ImgSpread} and Proposition \ref{prop:ImagePoPC}.
\end{proof}
In particular, by Lemma \ref{lem:SubSpaSpread},  Proposition
\ref{prop:SpreadPoPC}, and Theorem \ref{thm:ContBEquivBIm}, $\mBI$
implies that any relation $s \subseteq \FSeq \times \FSeq$ which maps
every choice sequence belonging to a spread $U$ to another spread $V$
is a formal map from $\FBaire_{U}$ to $\FBaire_V$.

The proof of the following is analogous to that of Theorem
\ref{thm:ContBEquivBIm}.
\begin{proposition}
  $\Cont_{\FBaire_{U},\FNat}$ is equivalent to spatiality of $\FBaire_{U}$.
\end{proposition}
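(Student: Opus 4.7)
The plan is to mirror the chain \eqref{thm:ContBEquivBIm3} $\Leftrightarrow$ \eqref{thm:ContBEquivBIm4} $\Leftrightarrow$ \eqref{thm:ContBEquivBIm2} of Theorem~\ref{thm:ContBEquivBIm}, relativised to the spread~$U$. The implication from spatiality of $\FBaire_{U}$ to $\Cont_{\FBaire_{U},\FNat}$ is immediate from Proposition~\ref{prop:SpatImpCont}, since $\pos_{\FNat}$ is the greatest positivity compatible with $\cov_{\FNat}$ (by Lemma~\ref{lem:SubSpaSpread}, the corresponding hypothesis on $\mathcal{T}$ is met).

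For the converse, I would first reduce spatiality to the case $a = \nil$. Assume $\nil \cov_{(\FBaire_{U})_{\Ip}} V$, and define $s_{V} \subseteq \FSeq \times \Nat$ by
\[
  a \mathrel{s_{V}} 0 \defeqiv a \elm \downset_{\FBaire} V.
\]
Every $\alpha \colon \Pt{\FBaire_{U}}$ contains $\nil$ and hence meets $\downset_{\FBaire} V$, so $\Image{s_{V}}(\alpha) = \left\{ 0 \right\}$, which is an ideal point of $\FNat$. By $\Cont_{\FBaire_{U},\FNat}$, $s_{V}$ is a formal map from $\FBaire_{U}$ to $\FNat$; in particular, condition \ref{def:FormalMap1} yields $\nil \cov_{U} \dom(s_{V}) = \downset_{\FBaire} V$. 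Since $\cov_{U} V \defeqiv \cov_{\FBaire}(\neg U \cup V)$ still admits the $\zeta$-rule, the analogue of Lemma~\ref{lem:ElimZeta} transfers to $\cov_{U}$ and gives $\nil \cov_{U} V$.

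To lift from $\nil$ to arbitrary $a \in \FSeq$, I would employ the clopen complement trick used in Theorem~\ref{thm:ContBEquivBIm}. The main technical task is to verify the spread-relativised analogues of \eqref{eq:LevelN} and \eqref{eq:clopen} for $\cov_{U}$. For $a \elm U$, I would prove $a \cov_{U} \left\{ a * b \mid \lh{b} = n \right\}$ by induction on $n$, using that $\neg U \cov_{U} \emptyset$ absorbs the extensions that leave the spread; for $a \elm \neg U$, $a \cov_{U} V$ holds trivially. Together with decidability of equality on $\FSeq$, this yields $a \cov_{U} V \leftrightarrow \nil \cov_{U} V \cup C_{a}$, and the same biconditional for $\cov_{(\FBaire_{U})_{\Ip}}$, using that every $\alpha \colon \Pt{\FBaire_{U}}$ splits $\cov_{U}$. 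Combined with the previous step, this delivers $a \cov_{(\FBaire_{U})_{\Ip}} V \imp a \cov_{U} V$, i.e., spatiality of $\FBaire_{U}$.

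The main obstacle will be the careful bookkeeping of $\neg U$ in the level-$n$ and clopen lemmas, since extensions of nodes of $U$ may leave the spread and must be absorbed by $\neg U \cov_{U} \emptyset$; once these technical facts are in place, the rest of the argument runs in parallel with the proof of Theorem~\ref{thm:ContBEquivBIm}.
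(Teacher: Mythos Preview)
Your proposal is correct and follows essentially the same route the paper intends: the paper's proof simply says the argument is analogous to that of Theorem~\ref{thm:ContBEquivBIm}, and you have carried out precisely that relativisation. One small simplification: the ``bookkeeping of $\neg U$'' you anticipate is lighter than you suggest, since from the definition $a \cov_{U} V \Leftrightarrow a \cov_{\FBaire} \neg U \cup V$ the clopen biconditional for $\cov_{U}$ follows directly from \eqref{eq:clopen} for $\cov_{\FBaire}$ (and likewise the level-$n$ lemma transfers via $\cov_{\FBaire} \subseteq \cov_{U}$), so no separate induction tracking $\neg U$ is needed.
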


\emph{The formal Cantor space} is the binary spread $\BSeq$. It is
well known that the fan theorem and spatiality of the formal Cantor
space are equivalent (cf.\ Schuster and Gambino \cite[Proposition
4.3]{SchusterGambinoSpatiality}; see Fourman and Grayson \cite[Theorem
3.4]{FormalSpace} for an impredicative result). 
Thus, the fan theorem and the principle of continuity for the formal
Cantor space and the formal topology of natural numbers are equivalent.

\section{Continuity on real numbers }\label{sec:ContReal}
We consider the continuity principle for the formal
unit interval $\FUInt$.

\begin{definition}
  \label{eg:FormalReal}
  Let $\mathbb{Q}$ be the set of rational numbers, and let
  \[
    S_{\FReal} \defeql \left\{ (p,q) \in \mathbb{Q} \times \mathbb{Q} \mid p < q
  \right\}.
\]
  Define a preorder $\leq_{\FReal}$ and a strict order $<_{\FReal}$ on
  $S_{\FReal}$ by
  \begin{align*}
    (p,q) &\leq_{\FReal} (p',q') \defeqiv p' \leq p \amp q \leq q', \\
    (p,q)  &<_{\FReal}   (p',q') \defeqiv p' < p \amp q < q'.
  \end{align*}
  The \emph{formal topology of real numbers}
  $\FReal$ is a positive topology
  $\FReal = (S_{\FReal}, \cov_{\FReal}, \pos_{\FReal} )$ where
  $\cov_{\FReal}$ is inductively generated
  by the following rules
  \begin{gather*}
    \frac{(p,q)  \elm U}{(p,q) \cov_{\FReal} U} \qquad
    \frac{(p,q)  \leq_{\FReal} (p',q') \cov_{\FReal}U}{(p,q)
      \cov_{\FReal} U} \qquad
    \frac{\left( \forall (p',q') <_{\FReal} (p,q) \right)  (p',q')
    \cov_{\FReal}U}{(p,q) \cov_{\FReal} U} \\[.5em]
    \frac{p < p' < q' < q \quad (p,q') \cov_{\FReal} U \quad (p',q)
    \cov_{\FReal} U }{(p,q) \cov_{\FReal} U}
\end{gather*}
and $\pos_{\FReal}$ is the greatest positivity compatible with $\cov_{\FReal}$.
The \emph{formal unit interval} $\FUInt = (S_{\FReal},
\cov_{\FUInt}, \pos_{\FUInt})$ is a positive topology where
$\cov_{\FUInt}$ is generated by the rules of $\cov_{\FReal}$
together with the following rules:
  \begin{gather*}
    \frac{p < q < 0}{(p,q) \cov_{\FUInt} \emptyset} \qquad
    \frac{1 < p < q}{(p,q) \cov_{\FUInt} \emptyset}
\end{gather*}
The positivity $\pos_{\FUInt}$ is again the 
greatest one compatible
with $\cov_{\FUInt}$. 
\end{definition}
The collection of ideal points of $\FReal$ is isomorphic to
the Dedekind reals (Negri and Soravia \cite[Proposition
9.2]{NegriSoravia99}; see also Fourman and Grayson \cite[Example 1.2
(4)]{FormalSpace}). 
Similarly, $\Pt{\FUInt}$ is isomorphic to the unit interval of the
Dedekind reals, which is denoted by $\left[ 0,1 \right]$.  It is well
known that  $\FUInt$ is compact: every cover has a finite subcover
(cf.\ Cederquist and Negri \cite[Theorem
23]{HeineBorelFSpa}).\footnote{These results about the formal topology
  of real numbers are
  based on the notion of formal topology
  \cite{Sambin:intuitionistic_formal_space}, but they carry over to
our setting because the cover of $\FReal$ is inductively
generated and the positivity is the greatest
one compatible with the cover.}

Finite covers of $\cov_{\FReal_{\Ip}}$ and $\cov_{\FReal}$
coincide as in the following lemma (cf.\ Palmgren \cite[Lemma 10.6]{PalmgrenRealFTop}).
\begin{lemma} \label{lem:FinCovR}
  For any $(p,q) \in S_{\FReal}$ and a finite subset $U \subseteq
  S_{\FReal}$, we have
  \[
    (p,q) {\cov_{\FReal_{\Ip}}} U \leftrightarrow  (p,q) \cov_{\FReal} U.
  \]
\end{lemma}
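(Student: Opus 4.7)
The $(\Leftarrow)$ direction holds in general: the canonical inclusion $\cov_{\FReal} \subseteq \cov_{\FReal_{\Ip}}$ is an immediate consequence of the fact that every ideal point splits the inductively generated cover.

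For $(\Rightarrow)$, I would assume $(p,q) \cov_{\FReal_{\Ip}} U$ with $U = \{(a_1,b_1),\ldots,(a_n,b_n)\}$. By the third (binary-splitting-into-all-$(p',q') <_{\FReal} (p,q)$) rule of $\cov_{\FReal}$, it suffices to derive $(p',q') \cov_{\FReal} U$ for every pair of rationals with $p < p' < q' < q$; fix such $p',q'$. Since $\Pt{\FReal}$ is in bijection with the Dedekind reals, the hypothesis translates to the statement that every Dedekind real in $(p,q)$ belongs to some element of $U$; in particular, for each rational $r$ with $p < r < q$, viewed as a Dedekind real, there is some $(a,b) \in U$ satisfying $a < r < b$.

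I would then build a chain greedily. Set $r_0 := p'$; while $r_j < q'$, pick $(a_{i_{j+1}}, b_{i_{j+1}}) \in U$ with $a_{i_{j+1}} < r_j < b_{i_{j+1}}$, and set $r_{j+1} := b_{i_{j+1}}$. Each $r_{j+1}$ strictly exceeds $r_j$ and belongs to the finite set of right endpoints of $U$, so the process halts in at most $n$ steps, producing a chain $(a_{i_1}, b_{i_1}), \ldots, (a_{i_k}, b_{i_k})$ with $a_{i_1} < p' < b_{i_1}$, $a_{i_{l+1}} < b_{i_l} < b_{i_{l+1}}$ for $l < k$, and $b_{i_k} \geq q'$. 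I next derive $(p',q') \cov_{\FReal} U$ by induction on $k$: if $b_{i_1} \geq q'$ then $(p',q') \leq_{\FReal} (a_{i_1}, b_{i_1})$, giving $(p',q') \cov_{\FReal} U$ directly; otherwise, by density of $\mathbb{Q}$, pick rationals $s,t$ with $\max(p', a_{i_2}) < s < t < b_{i_1}$ and apply the splitting rule at $s,t$: the subinterval $(p',t)$ is covered via $(p',t) \leq_{\FReal} (a_{i_1}, b_{i_1})$, while $(s,q')$ is covered by the inductive hypothesis applied to the shorter chain $(a_{i_2}, b_{i_2}),\ldots,(a_{i_k}, b_{i_k})$ starting from $s$.

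The main obstacle is to make the greedy selection effective: given a rational $r_j \in (p,q)$, one must constructively extract some $(a,b) \in U$ with $a < r_j < b$. This is unproblematic because $U$ is finite with rational endpoints, so the predicate ``$a < r_j < b$'' is decidable over $U$, and existence of such an interval is precisely the pointwise-cover hypothesis applied at $r_j$. Note that the argument does not invoke the full compactness of $\FUInt$; it exploits only that a finite collection of basic neighbourhoods which covers pointwise can be organised, via the linear order on $\mathbb{Q}$, into an overlapping chain that the splitting and $\leq_{\FReal}$ rules can traverse.
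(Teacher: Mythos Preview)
Your argument is correct. The paper's own proof is a single line, ``By induction on the size of $U$'' (with a pointer to Palmgren), so your write-up supplies exactly the combinatorics that the paper leaves implicit: reduce via the third rule to an interval $(p',q')$ strictly inside $(p,q)$, use the pointwise hypothesis at rationals to extract an overlapping chain from the finite family, and then discharge the chain using the $\leq_{\FReal}$ and binary-splitting rules. The only cosmetic difference is that you induct on the length $k$ of the greedy chain rather than directly on $|U|$; since $k \leq |U|$ and each step of your chain consumes a fresh right endpoint, the two inductions are interchangeable.
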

\begin{proof}
  By induction on the size of $U$.
\end{proof}

Let $\ContUInt$ be the statement:
$\Cont_{\FUInt,\mathcal{T}}$ holds for all positive topology
$\mathcal{T}$ equipped with the greatest positivity compatible with
its cover.

\begin{theorem}\label{thm:SpatialityFR}
  The following are equivalent.
  \begin{enumerate}
    \item\label{thm:SpatialityFR1} $\ContUInt$.
    \item\label{thm:SpatialityFR2} $\UInt$ is a compact subset of
      $\Pt{\FReal}$, i.e.\ every open cover has a finite subcover
      (Heine--Borel covering theorem).
    \item\label{thm:SpatialityFR4} $\FReal$ is spatial.
    \item\label{thm:SpatialityFR3} $\FUInt$ is spatial.
  \end{enumerate}
\end{theorem}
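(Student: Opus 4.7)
The plan is to establish the cycle $(1) \Rightarrow (2) \Rightarrow (3) \Rightarrow (4) \Rightarrow (1)$, handling the last step by Proposition \ref{prop:SpatImpCont} (as $\pos_{\FUInt}$ is the greatest positivity compatible with $\cov_{\FUInt}$), and the others by combining the cited compactness of $\FUInt$ (Cederquist and Negri), Lemma \ref{lem:FinCovR}, and constructive analysis on the Dedekind reals.

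For $(1) \Rightarrow (2)$, I mimic the final implication of Theorem \ref{thm:ContBEquivBIm}. Given an open cover $\mathcal{U}$ of $[0,1] \cong \Pt{\FUInt}$, let $U \subseteq S_{\FReal}$ collect the basic opens in $\mathcal{U}$ and define $s_U \subseteq S_{\FReal} \times \Nat$ by $a \mathrel{s_U} n \defeqiv a \elm U \amp n = 0$. The hypothesis that $\mathcal{U}$ covers $[0,1]$ says precisely that $\Image{s_U}$ is a mapping into $\Pt{\FNat}$, so $\ContUInt$ turns $s_U$ into a formal map $\FUInt \to \FNat$, yielding $(-1, 2) \cov_{\FUInt} U$. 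The cited compactness of $\FUInt$ then extracts a finite $U_0 \subseteq U$ with $(-1, 2) \cov_{\FUInt} U_0$, furnishing the finite subcover.

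For $(2) \Rightarrow (3)$, given $a = (p, q) \cov_{\FReal_{\Ip}} U$, the approximation rule reduces the goal to $(p', q') \cov_{\FReal} U$ for each $p < p' < q' < q$. The closed interval $[p', q']$ is compact in $\Pt{\FReal}$, since Heine--Borel transports along the affine isomorphism $[0, 1] \cong [p', q']$, and $U$ pointwise covers $[p', q']$ because every $\alpha \in [p', q']$ satisfies $p < \alpha < q$ and so meets $U$ by hypothesis. Heine--Borel supplies a finite $V_0 \subseteq U$ covering $[p', q']$, whence $(p', q') \cov_{\FReal_{\Ip}} V_0$, and Lemma \ref{lem:FinCovR} gives $(p', q') \cov_{\FReal} V_0 \subseteq U$.

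For $(3) \Rightarrow (4)$, I establish the identification
\[
  a \cov_{\FUInt} U \;\leftrightarrow\; a \cov_{\FReal} U \cup E,
\]
where $E \defeql \{(p, q) \in S_{\FReal} : q \leq 0 \text{ or } p \geq 1\}$: the forward direction is by induction on $\cov_{\FUInt}$ (using $\eta$ on $E$ whenever a boundary axiom fires, with the approximation rule handling the endpoint cases $q = 0$ or $p = 1$), and the reverse follows from $\cov_{\FReal} \subseteq \cov_{\FUInt}$ together with $E \cov_{\FUInt} \emptyset$. It then suffices, given $a \cov_{\FUInt_{\Ip}} U$, to derive $a \cov_{\FReal_{\Ip}} U \cup E$ and invoke (3). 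For each real $\alpha$ containing $a$, iterated applications of cotransitivity locate $\alpha$: in branches where $\alpha < 0$ or $\alpha > 1$ a rational witness puts $\alpha$ in $E$, while the remaining case pins $\alpha$ down as an ideal point of $\FUInt$ containing $a$, which by hypothesis meets $U$. The main obstacle is orchestrating this case analysis, since the trichotomy against $0$ and $1$ is not constructively decidable on the Dedekind reals; each branch of cotransitivity must be shown to yield either an element of $E$ or a Dedekind-cut-level confirmation that $\alpha \in [0, 1]$.
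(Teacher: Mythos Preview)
Your implications $(4)\Rightarrow(1)$, $(1)\Rightarrow(2)$, and $(2)\Rightarrow(3)$ match the paper's argument essentially line for line (the paper takes the downward closure of $U$ when defining $s_{U}$, but your undecorated version works just as well once one checks FM2 directly).

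The genuine gap is in $(3)\Rightarrow(4)$. Your pointfree identification $a \cov_{\FUInt} U \leftrightarrow a \cov_{\FReal} U \cup E$ is fine, but to invoke spatiality of $\FReal$ you need its \emph{pointwise} companion, namely $a \cov_{\FUInt_{\Ip}} U \Rightarrow a \cov_{\FReal_{\Ip}} U \cup E$. This asks you, given a Dedekind real $\alpha \ni a$, to produce either a basic open in $U$ or one in $E$ meeting $\alpha$. Your cotransitivity manoeuvre only yields, in the ``remaining case'', that $-\varepsilon < \alpha < 1+\varepsilon$; it never certifies $\alpha \in [0,1]$, and without that you cannot conclude $\alpha$ is an ideal point of $\FUInt$ and hence meets $U$. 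There is no way to close this gap by iterating cotransitivity, because constructively there exist Dedekind reals for which neither $\alpha < 0$, nor $\alpha > 1$, nor $0 \leq \alpha \leq 1$ is provable from locatedness alone.

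The paper's route sidesteps the undecidability by staying with rational data throughout: given $(p',q') <_{\FReal} (p,q)$, it first clips to $(u,v) = (\max(0,p'),\min(1,q'))$, a decidable operation on rationals. Now every real in $[u,v]$ automatically lies in $[0,1]$ \emph{and} contains $(p,q)$, so is an ideal point of $\FUInt$ meeting $U$; one then enlarges $[u,v]$ slightly to an open $(u',v')$ still pointwise covered by $U$, applies spatiality of $\FReal$ to get $(u',v') \cov_{\FReal} U$, and finally checks $(p',q') \cov_{\FUInt} \{(u',v')\}$ directly from the $\FUInt$-axioms (again a decidable case split on whether $p' < 0$ or $q' > 1$). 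The moral is that the case analysis you need must be pushed down to the rational base, where it is harmless, rather than attempted on real points.
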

\begin{proof}
  The equivalence between
  \eqref{thm:SpatialityFR2},
  \eqref{thm:SpatialityFR4}, and
  \eqref{thm:SpatialityFR3} are well known in locale
  theory (cf.\ Fourman and Grayson
  \cite[Theorem 4.10]{FormalSpace}). We give a predicative proof
  below.
  \smallskip

  \noindent(\ref{thm:SpatialityFR1} $\Rightarrow$
  \ref{thm:SpatialityFR2}) 
  Let $U \subseteq S_{\FReal}$ be a cover of  $\UInt$, i.e.\
  $S_{\FReal} \cov_{\FUInt_{\Ip}} U$. Define $s \subseteq
  S_{\FReal} \times \Nat$ by
  \[
    a \mathrel{s} n \defeqiv a \elm \downset_{\leq_{\FReal}}U \amp n =
    0,
  \]
  where $\downset_{\leq_{\FReal}}U$ is 
  the downward closure of $U$ with respect to $\leq_{\FReal}$.
  Note that $s$ is a monotone partial function with domain
  $\downset_{\leq_{\FReal}} U$. By the similar argument as in
  Lemma \ref{lem:pFunction}, we see that $s$ is a formal map $s \colon
  \FUInt_{\Ip} \to \FNat$.
   By $\ContUInt$, $s$ is a formal map $s \colon \FUInt \to \FNat$ as
   well.
   Hence $S_{\FReal} \cov_{\FUInt} U$.  Since $\FUInt$ is compact,
   there exists a finite $V \subseteq U$ such that $S_{\FReal}
   \cov_{\FUInt} V$. Therefore $S_{\FReal} \cov_{\FUInt_{\Ip}} V$.
  \smallskip

  \noindent(\ref{thm:SpatialityFR2} $\Rightarrow$
  \ref{thm:SpatialityFR4}) Assume that
  $\left[ 0,1 \right]$ is compact, and suppose that
  $(p,q) \cov_{\FReal_{\Ip}} U$. Let $(p',q') <_{\FReal} (p,q)$. Then
  $\left[p',q' \right] \subseteq \bigcup_{a \elm U}\Ext(a)$, so
  there exists a finite $V \subseteq U$ such that $[p',q']
  \subseteq \bigcup_{a \elm V}\Ext(a)$. Thus $(p',q') \cov_{\FReal}
  V$ by Lemma \ref{lem:FinCovR}. Hence $(p,q) \cov_{\FReal} U$.
  \smallskip

  \noindent(\ref{thm:SpatialityFR4} $\Rightarrow$ \ref{thm:SpatialityFR3})
  Assume that $\FReal$ is spatial, and suppose that
  $(p,q)\cov_{\FUInt_{\Ip}} U$. Let $(p',q') <_{\FReal} (p,q)$.
  Put $(u,v) = (\max(0,p'), \min(1,q'))$. Then $[u,v] \subseteq
  \bigcup_{a \elm U} \Ext(a)$, so there exists $(u',v') >_{\FReal}
  (u,v)$ such that $(u',v') \subseteq \bigcup_{a \elm U} \Ext(a)$.
  Since $\FReal$ is spatial, we have $(u',v') \cov_{\FReal} U$.
  Moreover, it is straightforward to show that $(p',q')
  \cov_{\FUInt} \left\{ (u',v')\right\}$. Therefore
  $(p,q)\cov_{\FUInt} U$.
  \smallskip

  \noindent(\ref{thm:SpatialityFR3} $\Rightarrow$
  \ref{thm:SpatialityFR1}) Immediate from
  Proposition \ref{prop:SpatImpCont}.
\end{proof}

\begin{corollary}
  $\ContUInt$ implies that any relation $s \subseteq S_{\FReal}
  \times S_{\FReal}$ which maps every $\alpha \colon \Pt{\FReal}$ to
  $\Image{s}(\alpha)  \colon  \Pt{\FReal}$  is a formal map
  $s \colon \FReal \to \FReal$.
\end{corollary}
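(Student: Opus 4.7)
The plan is to reduce this corollary directly to Proposition \ref{prop:SpatImpCont} with $\mathcal{S} = \mathcal{T} = \FReal$. That proposition asserts $\Cont_{\mathcal{S},\mathcal{T}}$ whenever $\mathcal{S}$ is spatial and $\mathcal{T}$ carries the greatest positivity compatible with its cover. Its conclusion, when specialised to $\FReal$ on both sides, unfolds to exactly the statement of the corollary: any relation $s \subseteq S_{\FReal} \times S_{\FReal}$ for which $\Image{s}$ sends $\Pt{\FReal}$ into $\Pt{\FReal}$ is a formal map $s \colon \FReal \to \FReal$.

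First I would verify the two hypotheses of Proposition \ref{prop:SpatImpCont}. The second one is immediate from Definition \ref{eg:FormalReal}, since $\pos_{\FReal}$ is defined to be the greatest positivity compatible with $\cov_{\FReal}$. For the first, spatiality of $\FReal$, I would invoke Theorem \ref{thm:SpatialityFR}: the assumption $\ContUInt$ (item \ref{thm:SpatialityFR1}) implies that $\FReal$ is spatial (item \ref{thm:SpatialityFR4}), which is precisely what is needed.

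With both hypotheses in hand, Proposition \ref{prop:SpatImpCont} yields $\Cont_{\FReal, \FReal}$, and the corollary follows by unfolding Definition \ref{def:Cont}. There is no real obstacle here: the substantive work has already been carried out in Theorem \ref{thm:SpatialityFR} (which packaged the equivalence between $\ContUInt$, Heine--Borel, and spatiality of $\FReal$) and in Proposition \ref{prop:SpatImpCont} (which traded spatiality for the continuity principle). The only point worth double-checking is that the codomain used in $\ContUInt$ is arbitrary subject only to the maximal-positivity condition; since $\FReal$ itself satisfies this condition by construction, it is an admissible choice of $\mathcal{T}$ when we apply the spatiality-to-continuity transfer.
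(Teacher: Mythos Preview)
Your proposal is correct and follows exactly the same route as the paper, which simply cites Theorem~\ref{thm:SpatialityFR} and Proposition~\ref{prop:SpatImpCont}. You have merely spelled out how these two results combine: $\ContUInt$ gives spatiality of $\FReal$ via Theorem~\ref{thm:SpatialityFR}, and since $\pos_{\FReal}$ is by definition the greatest compatible positivity, Proposition~\ref{prop:SpatImpCont} then yields $\Cont_{\FReal,\FReal}$.
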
 
\begin{proof}
  By Theorem \ref{thm:SpatialityFR} and  Proposition
  \ref{prop:SpatImpCont}.
\end{proof}

\subsection*{Acknowledgements}
We thank the anonymous referees for numerous suggestions, which helped
us improve our exposition in an essential way.
The first author was funded as a Marie Curie fellow of the Istituto
Nazionale di Alta Matematica.

\end{document}